\documentclass[runningheads]{llncs}

\spnewtheorem{numberedclaim}{Claim}{\bfseries}{\itshape}
\spnewtheorem*{proofidea}{Proof Idea}{\itshape}{\rmfamily}
\usepackage[T1]{fontenc}
\usepackage{graphicx}
\usepackage{hyperref}
\usepackage{amsmath,amssymb}
\usepackage{mathtools}
\usepackage[capitalize]{cleveref}

\crefname{numberedclaim}{Claim}{Claims}

\usepackage{enumerate}
\usepackage[dvipsnames]{xcolor}
\usepackage{soul}
\usepackage{tikz}
\usetikzlibrary{snakes,positioning,arrows.meta,arrows,
decorations.pathreplacing,calligraphy,petri,shapes.geometric}

\usepackage{subcaption}
\usepackage{tikz}
\usepackage{thm-restate}
\usetikzlibrary{automata, positioning, arrows.meta,fit}
\usepackage{graphicx}

\newcommand{\abs}[1]{|#1|}
\newcommand{\automata}[2]{#1[#2]}

\newcommand{\Cc}{\mathcal{C}}
\newcommand{\class}[1]{\textsf{#1}}
\newcommand{\Gg}{\mathcal{G}}
\newcommand{\lang}[1]{L(#1)}
\newcommand{\Nn}{\mathcal{N}}
\newcommand{\NN}{\mathbb{N}}
\newcommand{\norm}[1]{\lVert#1\rVert}
\newcommand{\Oh}{\mathcal{O}}
\newcommand{\powerset}[1]{\Pp(#1)}
\newcommand{\Pp}{\mathcal{P}}
\newcommand{\Rr}{\mathcal{R}}
\newcommand{\set}[1]{\{#1\}}
\newcommand{\sset}{\subseteq}
\newcommand{\tuple}[1]{(#1)}

\newcommand{\ZZ}{\mathbb{Z}}

\newcommand{\WC}{W}
\newcommand{\owner}{\mathit{owner}}

\newcommand{\+}[1]{\mathbb{#1}}

\newcommand{\N}{\+{N}}

\newcommand{\x}{\times}
\newcommand{\step}[2][]{\xrightarrow[#1]{#2}}

\tikzstyle{state} = [circle, draw, black, line width = 0.4mm, inner sep = 1.5mm]
\tikzstyle{transition} = [-{Stealth[width=1.5mm, length=2mm]}, black, line width = 0.4mm]
\tikzstyle{player1} = [regular polygon,regular polygon sides=4, draw, black, fill=white, line width = 0.4mm, inner sep = 0.2mm, minimum size = 6mm]
\tikzstyle{player0} = [circle, draw, black, line width = 0.4mm, inner sep = 0.2mm, minimum size = 4mm,fill=white] 
\usepackage{hyperref}
\hypersetup{bookmarksdepth=3,bookmarksopen=true,bookmarksnumbered=true}

\makeatletter
\AddToHook{cmd/appendix/before}{\def\cref@section@alias{appendix}}
\AddToHook{cmd/appendix/before}{\def\cref@subsection@alias{appendix}}
\AddToHook{cmd/appendix/before}{}
\makeatother

\usepackage[colorinlistoftodos,prependcaption,obeyFinal]{todonotes}

\usepackage{scalerel}
\usetikzlibrary{svg.path}
\definecolor{orcidlogocol}{HTML}{A6CE39}
\tikzset{
  orcidlogo/.pic={
    \fill[orcidlogocol] svg{M256,128c0,70.7-57.3,128-128,128C57.3,256,0,198.7,0,128C0,57.3,57.3,0,128,0C198.7,0,256,57.3,256,128z};
    \fill[white] svg{M86.3,186.2H70.9V79.1h15.4v48.4V186.2z}
                 svg{M108.9,79.1h41.6c39.6,0,57,28.3,57,53.6c0,27.5-21.5,53.6-56.8,53.6h-41.8V79.1z M124.3,172.4h24.5c34.9,0,42.9-26.5,42.9-39.7c0-21.5-13.7-39.7-43.7-39.7h-23.7V172.4z}
                 svg{M88.7,56.8c0,5.5-4.5,10.1-10.1,10.1c-5.6,0-10.1-4.6-10.1-10.1c0-5.6,4.5-10.1,10.1-10.1C84.2,46.7,88.7,51.3,88.7,56.8z};
  }
}
\tikzset{
  homelogo/.pic={
    \fill[black!60] svg{M18.69,73.37,59.18,32.86c2.14-2.14,2.41-2.23,4.63,0l40.38,40.51V114h-30V86.55a3.38,3.38,0,0,0-3.37-3.37H52.08a3.38,3.38,0,0,0-3.37,3.37V114h-30V73.37ZM60.17.88,0,57.38l14.84,7.79,42.5-42.86c3.64-3.66,3.68-3.74,7.29-.16l43.41,43,14.84-7.79L62.62.79c-1.08-1-1.24-1.13-2.45.09Z};
  }
}
\renewcommand\orcidID[1]{{\hypersetup{hidelinks}\href{https://orcid.org/#1}{\mbox{\scalerel*{
\begin{tikzpicture}[yscale=-1,transform shape]
\pic{orcidlogo};
\end{tikzpicture}
}{H}}}}}
\DeclareRobustCommand\urlll[1]{{\hypersetup{hidelinks}%
  \href{#1}{\mbox{\scalerel*{
    \begin{tikzpicture}[yscale=-1,transform shape]
      \pic{homelogo};
    \end{tikzpicture}
  }{H}}}%
}}

\title{History-Constrained Systems}

\titlerunning{History-Constrained Systems}

\author{Louwe B. Kuijer\inst{1}\orcidID{0000-0001-6696-9023}\,\urlll{https://sites.google.com/site/lbkuijer/}\and
        David Purser\inst{1}\orcidID{0000-0003-0394-1634}\,\urlll{https://www.davidpurser.net} \and \\[0.2cm]
        Henry Sinclair-Banks\inst{2,3}\orcidID{0000-0003-1653-4069}\,\urlll{http://henry.sinclair-banks.com} \and
        Patrick Totzke\inst{1}\orcidID{0000-0001-5274-8190}\,\urlll{https://cgi.csc.liv.ac.uk/~patrick/}}

\authorrunning{L.~B. Kuijer, D.~Purser, H.~Sinclair-Banks, and P.~Totzke}

\institute{University of Liverpool, United Kingdom
\and University of Warsaw, Poland 
\and Max Planck Institute for Software Systems (MPI-SWS), Kaiserslautern, Germany}

\begin{document}
\maketitle

\begin{abstract}
We study verification problems for history-constrained systems (HCS), a model of guarded computation that uses nested systems. An outer system describes the process architecture in which a sequence of actions represents the communication between sub-systems through a global bus. Actions are either permitted or blocked locally by guards; these guards read and decide based on the sequence of actions so far in the global bus. 

When HCS have both the outer systems and the local guard controllers modelled by finite automata, we show they have the same expressive power as regular languages and finite automata, but they are exponentially more succinct.
We also analyse games on this model, representing the interaction between environment and controller, and show that solving such games is \class{EXPTIME}-complete, where the lower bound already holds for reachability/safety games and the upper bound holds for any $\omega$-regular winning condition. 
Finally, we consider HCS with guards of greater expressive power,  Vector Addition Systems with States (VASS). We show that with deterministic coverability-VASS guards the reachability problem is \class{EXPSPACE}-complete, 
while with reachability-VASS the problem is undecidable.
 \end{abstract}

\keywords{Complexity, Distributed Computing, Reachability problems, Automata, Verification, Temporal Games, Logic} %

\section{Introduction}\label{sec:introduction}
\begin{figure*}[t!]
    \centering
\resizebox{0.99\textwidth}{!}{
\begin{tikzpicture}[
    >=stealth,
    auto,
    node distance=3cm,
    every state/.style={minimum size=20pt},
    every initial by arrow/.style={initial text={}},
    medstate/.style={state, fill=black!10, minimum size=14pt, inner sep=1pt},
    smallstate/.style={state, fill=blue!10, minimum size=12pt, inner sep=1pt},
    smalllabel/.style={
        font=\scriptsize,
    }
]

    \node[state, initial above, minimum size = 30pt] (q0) {$q_0$};

    \draw[transition] (q0) edge[loop left, olive!90] node{T1:*} (q0);
    \draw[transition] (q0) edge[loop right, teal!90] node{T2:*} (q0);

    \node[medstate, right=2.5cm of q0, yshift=-1cm] (b0) {};
    \node[medstate, right=3cm of b0] (b1) {};
    \node[medstate, above=1cm of b1, xshift=0.5cm] (b2) {};
    \node[medstate, below left=0.25cm of b2,xshift=-2cm] (b3) {};
    \node[medstate, above left=1.5cm of b0, xshift=1cm] (b4) {};

    \draw[transition] (b0) edge[above, red!50] node{T2:turn-green} (b1);
    \draw[transition] (b1) edge[gray!50] node{} (b2);
    \draw[transition] (b2) edge[gray!50] node{} (b3);
    \draw[transition] (b3) edge[gray!50] node{} (b4);
    \draw[transition] (b4) edge[gray!50] node{} (b0);

    \node[smallstate, fill=red!10, initial, below right=0.8cm of b0] (s0) {};
    \node[smallstate, fill=red!10, accepting, right=1.5cm of s0] (s1) {};

    \draw[transition] (s0) edge[bend left] node[smalllabel]{T1:turn-red} (s1);
    \draw[transition] (s1) edge[bend left] node[smalllabel,align=center](labelT1turngreen){T1:turn-green\\T1:turn-orange} (s0);

    \node[draw, red!50, rounded corners, fit=(s0)(s1)(labelT1turngreen), inner sep=8pt,xshift=-0.18cm,yshift=0.25cm] (box23) {};

    \draw[transition] (s0) edge[loop above] (s0);
    \draw[transition] (s1) edge[loop above] (s1);

 	\node[draw, teal, rounded corners, fit=(b0)(b1)(b2)(b3)(b4)(box23), inner sep=5pt] (box23) {};

	\node () at (5, 1.2) {Traffic Light 2 controller};

    \node[medstate, left=6cm of q0, yshift=-1cm] (b0) {};
    \node[medstate, right=3cm of b0] (b1) {};
    \node[medstate, above=1cm of b1, xshift=0.5cm] (b2) {};
    \node[medstate, below left=0.25cm of b2,xshift=-2cm] (b3) {};
    \node[medstate, above left=1.5cm of b0, xshift=1cm] (b4) {};

    \draw[transition] (b0) edge[above, blue!50] node{T1:turn-green} (b1);
    \draw[transition] (b1) edge[gray!50] node{} (b2);
    \draw[transition] (b2) edge[gray!50] node{} (b3);
    \draw[transition] (b3) edge[gray!50] node{} (b4);
    \draw[transition] (b4) edge[gray!50] node{} (b0);

    \node[smallstate, initial, below right=0.8cm of b0,yshift=0cm] (s0) {};
    \node[smallstate, accepting, right=1.5cm of s0] (s1) {};

    \draw[transition] (s0) edge[bend left] node[smalllabel](labelT2turnred){T2:turn-red} (s1);
    \draw[transition] (s1) edge[bend left] node[smalllabel, align=center](labelT2turngreen){T2:turn-green\\T2:turn-orange} (s0);

    \node[draw, blue!50, rounded corners, fit=(s0)(s1)(labelT2turngreen), inner sep=8pt,xshift=-0.18cm,yshift=0.25cm] (box23) {};

    \draw[transition] (s0) edge[loop above] (s0);
    \draw[transition] (s1) edge[loop above] (s1);

    \node[draw, olive, rounded corners, fit=(b0)(b1)(b2)(b3)(b4)(box23), inner sep=5pt] (box23) {};

	\node () at (-5, 1.2) {Traffic Light 1 controller};

\end{tikzpicture}
}
    \caption{Three layer HCS traffic light model. The outer system is a single state to hold each controller; ``T1:*'' represents all actions for traffic light 1. The middle layer represents the controller of each individual traffic light (the full mechanism of the traffic light controller  (grey edges) is not depicted). The innermost system guards the ``turn green'' commands to ensure that the other set of lights have turned to red. The system enforces safety by ensuring that whenever a traffic light attempts to turn green, the other traffic light has turned to red. Unlabelled self-loops represent all actions other than those specified. 
    }
    \label{fig:traffic}
\end{figure*}
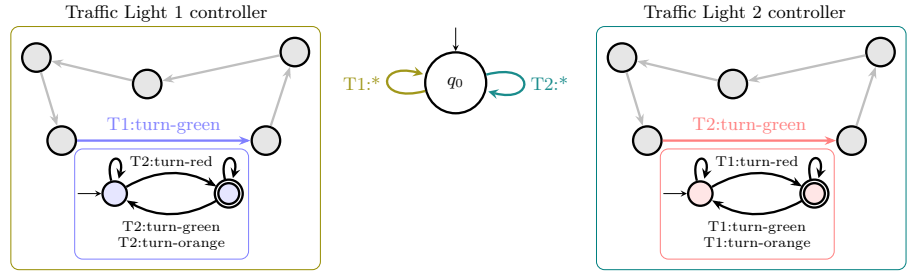

We study distributed systems where independent components communicate using a global databus --- a shared record of actions taking place in the system. The subsystems act and make decisions based on their own internal state and information from the global communication channel, but are isolated from the internal configuration of other subsystems. We introduce a model of guarded computation called history-constrained systems (HCS) that capture such distributed workflows. We then study the expressivity of this model as well as the computational cost of formally verifying such models.

Formally, HCS consist of an outer state-machine, whose transitions are labelled with global actions and which are guarded by languages over such actions whereby a transition is admissible if the whole sequence of prior actions is in the (guard) language.
For practical purposes, we will assume that the guard languages are themselves provided in the form of automata.
Our systems can thus be seen as nested automata: an outer system and inner automata acting as guards on the transitions. The outer system models the overall architecture of the large-scale task, system, or process. Every action taken in the outer system is recorded in a global history (the bus). Each edge in the system may have a guard language, encoded by an automaton, which observes the event history so far and decides whether the transition is enabled at a given moment.

HCS are safe by construction: unsafe behaviour expressed as properties of the history are easily encoded into the guard conditions. As an example, consider a traffic-light controller that permits a ``turn-green'' action only after observing that perpendicular lights have executed a ``turn-red'' action, without requiring access to their internal states (see~\cref{fig:traffic}). Writing rules over event histories is usually simpler than writing imperative rules over joint states. Other examples which can easily be encoded include:

\vspace{-\topsep}\begin{itemize}

\item Prerequisites/Cross-service dependencies ``Service X may only perform action A if service Y has previously sent confirmation B'',
\item Rate limiting %
``Permit X only if 10 steps have passed since event Y'', or
\item Safety guards %
``Event X cannot happen if Event Y has happened''.
\end{itemize}

Even with this ``safe by construction'' approach, it remains necessary to verify that the model can achieve its intended purpose under these guards: can the automaton still achieve its goal (reachability)? Is there a controller that can act against an adversarial environment? We investigate such questions for HCS.

The expressive power of the guard language determines the properties of the history upon which the decisions can be made (patterns, counts, timestamps, presence/absence of events) and the complexity or decidability of the verification problems. For example, if we use context-free languages as guards, then these problems become undecidable (see~\cref{subsec:context_free}).

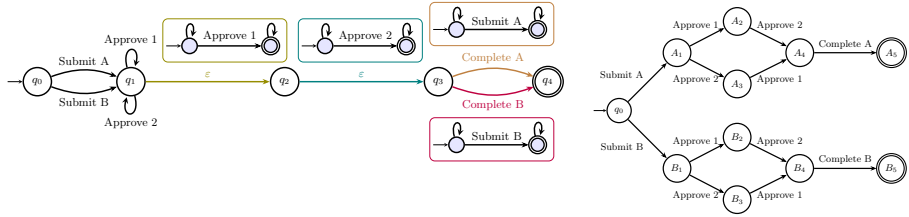
\begin{figure*}[t!]
    \centering
    \begin{subfigure}[t]{0.63\textwidth}
        \centering
        
			\resizebox{0.99\textwidth}{!}{
			\begin{tikzpicture}[
			    >=stealth,
			    auto,
			    node distance=3cm,
			    every state/.style={minimum size=20pt},
			    every initial by arrow/.style={initial text={}},
			    smallstate/.style={state, fill=blue!10, minimum size=12pt, inner sep=1pt}
			]

			    \node[state, initial] (q0) {$q_0$};
			    \node[state, right=1.75cm of q0] (q1) {$q_1$};
			    \node[state, right=3.4cm of q1] (q2) {$q_2$};
			    \node[state, right=3.4cm of q2] (q3) {$q_3$};
			    \node[state, accepting, right of=q3] (q4) {$q_4$};

			    \draw[transition] (q0) edge[above,bend left=20] node{Submit A} (q1);
			    \draw[transition] (q0) edge[below,bend right=20] node{Submit B} (q1);

			    \draw[transition] (q1) edge[loop above] node{Approve 1} (q1);
			    \draw[transition] (q1) edge[loop below] node{Approve 2} (q1);
			    \draw[transition] (q1) edge[above,olive] node{$\varepsilon$} (q2);
			    \draw[transition] (q2) edge[above,teal] node{$\varepsilon$} (q3);
			    \draw[transition] (q3) edge[above,bend left=20,brown] node{Complete A} (q4);
			    \draw[transition] (q3) edge[below,bend right=20,purple] node{Complete B} (q4);

			    \path (q2) -- (q3) coordinate[midway] (mid23);
			    \path (q1) -- (q2) coordinate[midway] (mid12);

			    \node[smallstate, initial, above=0.75cm of mid12, xshift=-0.5cm] (r0) {};
			    \node[smallstate, accepting, right=1.75cm of r0] (r1) {};

			    \node[draw, olive, rounded corners, fit=(r0)(r1), inner sep=10pt,xshift=-0.15cm,yshift=0.15cm] (box23) {};

			    \draw[transition] (r0) edge[above] node{Approve 1} (r1);
			    \draw[transition] (r0) edge[loop above] node{} (r0);
			    \draw[transition] (r1) edge[loop above] node{} (r1);

			    \node[smallstate, initial, above=0.75cm of mid23, xshift=-1cm] (s0) {};
			    \node[smallstate, accepting, right=1.75cm of s0] (s1) {};

			    \node[draw, teal, rounded corners, fit=(s0)(s1), inner sep=10pt,xshift=-0.15cm,yshift=0.15cm] (box23) {};

			    \draw[transition] (s0) edge[above] node{Approve 2} (s1);
			    \draw[transition] (s0) edge[loop above] node{} (s0);
			    \draw[transition] (s1) edge[loop above] node{} (s1);

			    \path (q3) -- (q4) coordinate[midway] (mid34);

			    \node[smallstate, initial, above=1.2cm of mid34, xshift=-1cm] (u0) {};
			    \node[smallstate, accepting, right=1.75cm of u0] (u1) {};

			    \node[draw, brown, rounded corners, fit=(u0)(u1), inner sep=10pt,xshift=-0.15cm,yshift=0.15cm] (box23) {};

			    \draw[transition] (u0) edge[above] node{Submit A} (u1);
			    \draw[transition] (u0) edge[loop above] node{} (u0);
			    \draw[transition] (u1) edge[loop above] node{} (u1);

			    \node[smallstate, initial, below=1.5cm of mid34, xshift=-1cm] (b0) {};
			    \node[smallstate, accepting, right=1.75cm of b0] (b1) {};

			    \node[draw, purple, rounded corners, fit=(b0)(b1), inner sep=10pt,xshift=-0.15cm,yshift=0.15cm] (box23) {};

			    \draw[transition] (b0) edge[above] node{Submit B} (b1);
			    \draw[transition] (b0) edge[loop above] node{} (b0);
			    \draw[transition] (b1) edge[loop above] node{} (b1);

			    \node[below=3cm of q0] (asdf){}; %

			\end{tikzpicture}
			}
    \end{subfigure}%
    ~ 
    \begin{subfigure}[t]{0.36\textwidth}
        \centering
        \resizebox{0.99\textwidth}{!}{
			\begin{tikzpicture}[
			    >=stealth,
			    auto,
			    node distance=2.5cm,
			    every initial by arrow/.style={initial text={}},
			    every state/.style={minimum size=20pt}
			]

			    \node[state, initial] (q0) {$q_0$};
			    \node[state, above right=1.75cm of q0] (q1) {$A_1$};

			    \node[state, right=1cm of q1, yshift=1cm] (q2) {$A_2$};
			    \node[state, right=1cm of q1, yshift=-1cm] (q3) {$A_3$};

			    \node[state, right=3cm of q1] (q4) {$A_4$};
			    \node[state, accepting, right=2cm of q4] (q6) {$A_5$};

			    \draw[transition] (q0) edge[above left ] node{Submit A} (q1);

			    \draw[transition] (q1) edge[above left] node[xshift=0.5cm,yshift=0.1cm]{Approve 1} (q2);
			    \draw[transition] (q1) edge[below left] node[xshift=0.5cm,yshift=-0.1cm]{Approve 2} (q3);

			    \draw[transition] (q3) edge[below right] node[xshift=-0.5cm,yshift=-0.1cm]{Approve 1} (q4);
			    \draw[transition] (q2) edge[above right] node[xshift=-0.5cm,yshift=0.1cm]{Approve 2} (q4);

			    \draw[transition] (q4) edge[above] node{Complete A} (q6);

			    \node[state, below right=1.75cm of q0] (t1) {$B_1$};

			    \node[state, right=1cm of t1, yshift=1cm] (t2) {$B_2$};
			    \node[state, right=1cm of t1, yshift=-1cm] (t3) {$B_3$};

			    \node[state, right=3cm of t1] (t4) {$B_4$};
			    \node[state, accepting, right=2cm of t4] (t6) {$B_5$};

			    \draw[transition] (q0) edge[below left ] node{Submit B} (t1);

			    \draw[transition] (t1) edge[above left] node[xshift=0.5cm,yshift=0.1cm]{Approve 1} (t2);
			    \draw[transition] (t1) edge[below left] node[xshift=0.5cm,yshift=-0.1cm]{Approve 2} (t3);

			    \draw[transition] (t3) edge[below right] node[xshift=-0.5cm,yshift=-0.1cm]{Approve 1} (t4);
			    \draw[transition] (t2) edge[above right] node[xshift=-0.5cm,yshift=0.1cm]{Approve 2} (t4);

			    \draw[transition] (t4) edge[above] node{Complete B} (t6);

			\end{tikzpicture}
			}

    \end{subfigure}
    \caption{Four-eyes approval workflow represented as both an HCS (left) and as a finite automaton (right). Approvals arrive in an unknown order, the finite automaton must explicitly encode this in the state, while the HCS can store this information on just the relevant edges. Three approvers would enlarge the finite automaton significantly, while the HCS needs just one more guarded transition.  Unlabelled self-loops represent all actions other than those specified. 
    }
    \label{fig:foureyes}
\end{figure*}

\subsubsection{Contributions}
Besides introducing the model of history-constrained systems, we make contributions in two directions: when  guards are regular languages and when  guards are represented by automata with (limited) access to counters.

\paragraph{Regular guards.}
HCS with regular guards have transitions that are each controlled by a deterministic finite automaton (DFA) or a non-deterministic finite automaton (NFA).
We show that their expressivity is the same as that of DFAs (\cref{sec:expressivity}) but even deterministic HCS can be exponentially more succinct than NFAs (\cref{sec:succinctness}). To illustrate, consider the conditional workflow in \cref{fig:foureyes}: allowing conditions to be encoded locally avoids needing to encode information in the state space.

In~\cref{sec:games} we consider games on HCS with regular guards and show that determining the existence of a winning strategy is \class{EXPTIME}-complete (\cref{thm:exptime-completeness}).
Note that this is a non-trivial bound, since such HCS are exponentially more succinct than DFAs, and games on the latter are \class{PSPACE}-complete (and hence we would trivially obtain an \class{EXPSPACE} upper bound). However, we are able to reduce this by carefully composing the parity game algorithm with the on-the-fly construction of the state space. The lower bound is one of our main contributions, by reducing from countdown games.

\paragraph{Guards with counters.}
In \cref{sec:vass} we consider the natural extension to allow guards to use counters. Such guards would be able to directly extend the example in \cref{fig:foureyes} with ``majority approval'', where more people approve than reject. However, zero tests lead to undecidability, and so restrictions are needed. 

We consider restricted guards with counters using Vector Addition Systems with States (VASS). VASS have one or more non-negative integer counters which can be incremented or decremented (but can neither be reset nor zero-tested). VASS have two common acceptance conditions: Cover-VASS ask whether a final state is reachable but make no requirement on the final value of the counter (only that they remained non-negative), while Reach-VASS insist that the counters are finally exactly zero (or some other fixed choice).

Our results reveal an interesting subtlety depending on the acceptance conditions. HCS with deterministic Cover-VASS guards match exactly the expressive power of Cover-VASS, and the reachability problem is decidable in \class{EXPSPACE}.

However, HCS with deterministic \hbox{Reach-VASS} guards are more expressive than just Reach-VASS. While reachability in Reach-VASS is decidable (and recently shown to be \class{Ackermann}-complete~\cite{czerwinski2022reachability,leroux2022reachability}), it is undecidable in HCS with deterministic Reach-VASS guards. %

\paragraph{Published version.} 
This is the full version of the (open-access) paper in Formal Methods 2026~\cite{KuijerPST26}.

\subsubsection{Related work}
Distributed systems can broadly be classified by how individual components communicate: one-to-one (aka rendezvous) \cite{milner1980calculus} and broadcasting (aka shared memory) \cite{DBLP:journals/scp/Prasad95}. %
Early automata models for broadcast communication include I/O automata \cite{ioautomata} and team automata \cite{teamautomata} where components communicate via shared actions
but compared to our model, they do not expose the whole action history to guards like HCS allow.

Our work follows the canon of models with guarded computation, where the availability of local state changes depends on %
the history of the process so far.
Examples of such models are Timed automata \cite{AD1994} which allow for inequality constraints on real-valued clocks; Petri Nets / Vector addition systems \cite{vass-games-undec}, where counters must remain non-negative.
The structure of HCS as an outer system with inner ones representing the guards can be found in the \emph{Nets in Nets} model \cite{DBLP:conf/ac/Valk03,10.1007/978-3-642-02424-5_15}, where Petri net tokens themselves represent identifiable Petri nets that can move and synchronise with the outer one in a rendez-vous fashion. That model is convenient for multi-agent systems, where it can model agent mobility. In HCS, individual components do not move but synchronise via broadcast, and we also consider different kinds of guard automata.

Our model can be used to succinctly model arenas for games on graphs,
the mathematical foundation for model checking and reactive synthesis
\cite{GAMES2002,GOG,PR1989,P1977}.
Related structures are \emph{Temporal graphs}, \cite{HJ19,M2015} which extend graphs with discrete and global time and can specify for each edge at which times it can be traversed. Games on temporal graphs, succinctly encoded in existential Presburger Arithmetic, have recently been considered in \cite{AustinBT24,ABMT2025}. Games on HCS, as done in \cref{sec:games} here, are essentially a generalisation of games on temporal graphs from linear to branching-time.

\section{Preliminaries}\label{sec:preliminaries}

Let $X$ be a finite set, and $\powerset{X} \coloneqq \set{X' : X' \sset X}$ be the power set of $X$. 
$\varepsilon$ is the empty word of length 0. 
Let $\vec{v}_1 \in \ZZ^{d_1}$ and $\vec{v}_2 \in \ZZ^{d_2}$. We define the vector $(\vec{v}_1, \vec{v}_2) \coloneqq (\vec{v}_1[1], \ldots, \vec{v}_1[d_1], \vec{v}_2[1], \ldots, \vec{v}_2[d_2]) \in \ZZ^{d_1 + d_2}$.

\subsection{History-Constrained Systems}

A \emph{History-constrained system} (HCS) is a tuple $A = \tuple{\Sigma, U, \Gg}$ where $\Sigma$ is a finite alphabet, $U$ is a labelled transition system over the alphabet $\Sigma$ with transitions $T$, and $\Gg = \set{L_t \sset \Sigma^* : t \in T}$ are the guards. A guard $L_t$ is \emph{trivial} if $L_t = \Sigma^*$. 

We say that a sequence $(q_0, a_1, q_1, a_2, q_2, \dots, q_{n-1}, a_n, q_n)$ is a run of $A$ if 
\begin{enumerate}[(1)]
	\item $q_0$ is the starting state of $U$; and
	\item for every $1 \leq i \leq n$, $t = (q_{i-1}, a_i, q_i)$ is a transition in $U$ and \hbox{$a_1 a_2 \ldots a_{i-1} \in L_t$}.
\end{enumerate}

Note that the runs of $A$ are a subset of the runs of $U$. When $U$ is an automaton with an accepting condition, we say a run is accepting in $A$ if the same run is accepting in $U$.
In this case we say $w = a_1 a_2 \ldots a_n$ is accepted by $A$ (i.e. $w \in L(A)$) if there is an accepting run on $w$, and $L(A)$ is the language of $A$.

In other words, while reading the word $w = a_1 a_2 \ldots a_n$, whenever a transition $t$ is taken to read $a_i$ (the $i$-th letter of $w$), it must be true that the current prefix read so far is accepted by the guard of the $i$-th transition, $a_1 a_2 \ldots a_{i-1} \in L_t$.

We say that an HCS $(\Sigma, U, \Gg)$ is deterministic or non-deterministic if $U$ is deterministic or non-deterministic, respectively.

\subsection{Finite Automata}
A \emph{non-deterministic finite automaton} (NFA) is a 5-tuple $A = (\Sigma, Q, \delta, q_0, F)$ where $\Sigma$ is a finite alphabet, $Q$ is a finite set of states, $\delta:Q\times \Sigma\cup\{\epsilon\}\to \powerset{Q}$ is a transition function, $q_0\in Q$ is an initial state and $F\subseteq Q$ is the set of accepting states.  $A$ is \emph{deterministic} (DFA) if $|\delta(q,\sigma)|\le 1$ for all $q\in Q$ and $\sigma \in \Sigma$ and $\delta(q,\epsilon)=\emptyset$. $A$ accepts  $w\in\Sigma^*$ if there is a run from $q_0$ to $F$ labelled by $w$.
The size of $A$ is $\norm{A} \coloneqq \abs{Q} + \sum_{q \in Q, \sigma \in \Sigma\cup\set{\varepsilon}} \abs{\delta(q, \sigma)}$.

\subsection{HCS Specified with Finite Automata}

The languages guarding transitions can be specified however one wishes. 
As we alluded to in the introduction, we will consider the case that both the underlying automaton $U$ and the transition guards are specified using finitely representable automata. Where the guard languages are regular languages we can specify the guards using  DFAs or NFAs. 
We denote how the underlying automaton and guard languages are specified using \automata{X}{Y} for an HCS whose underlying automaton is of type X and guard languages are specified as automata of type Y. 
The variants of HCS that we study include:
\vspace{-\topsep}
\begin{itemize}
	\item \automata{DFA}{DFA} are deterministic HCS with guards  specified by DFAs, 
	\item \automata{NFA}{DFA} are non-deterministic HCS with guards  specified by DFAs,
	\item \automata{DFA}{VASS} are deterministic HCS with guards  specified by vector addition systems with states, and 
	\item \automata{DFA}{PDA} are deterministic HCS with guards  specified by pushdown automata.
\end{itemize}

\subsection{Emptiness and Reachability}

Given an HCS $A= (\Sigma, U, \mathcal{G})$, the \emph{emptiness} problem asks if $L(A) = \emptyset$. That is, no word has an \emph{accepting} run from the initial state consistent with the guards on every transition. 

Where the accepting condition of $U$ is specified by accepting (or final) states $F$, we say $F$ is \emph{reachable} if $L\ne \emptyset$, and \emph{unreachable} otherwise. For such systems, the emptiness problem and reachability problem coincide.

\subsection{Examples}

Consider the two examples of \automata{DFA}{DFA} over $\Sigma = \set{a,b}$ depicted in~\cref{fig:examples}  which share the guards \textcolor{Red}{$L_1$}, \textcolor{Green}{$L_2$}, and \textcolor{Blue}{$L_3$} defined as:
\begin{itemize}
	\item \textcolor{Red}{$L_1 = (ab)^* + a(ba)^*$} (Words of alternating `$a$'s and `$b$'s.)
	\item \textcolor{Green}{$L_2 = \set{w \in \set{a, b}^* : \abs{w} \in 3\cdot\NN}$} (Words whose length is a multiple of three.)
	\item \textcolor{Blue}{$L_3 = \set{a,b}^* aa \set{a,b}^*$} (Words containing two consecutive `$a$'s.)
\end{itemize}

The language of the \automata{DFA}{DFA} with the underlying automaton $U_1$ is empty. 
		In order to pass the first guard, zero `$b$'s have to be read. 
		Thus, after the \textcolor{Red}{$L_1$} guard is passed, the prefix of the word is just `$a$'.
		In order to pass the \textcolor{Green}{$L_2$} guard, at least two `$b$' must be read using the second cycle.
		Hence, once the \textcolor{Green}{$L_2$} guard has been passed the prefix of the word cannot contain two consecutive `$a$'s.
		This means that the \textcolor{Blue}{$L_3$} guard cannot be passed, and so the language is empty. 
		The language of the \automata{DFA}{DFA} with the underlying automaton $U_2$ is not empty.
		The word `$baaba$' is accepted.
		By traversing the lower branch, multiple `$a$'s can be read before passing the \textcolor{Green}{$L_2$} guard which allows for the later passing of the \textcolor{Blue}{$L_3$} guard. 

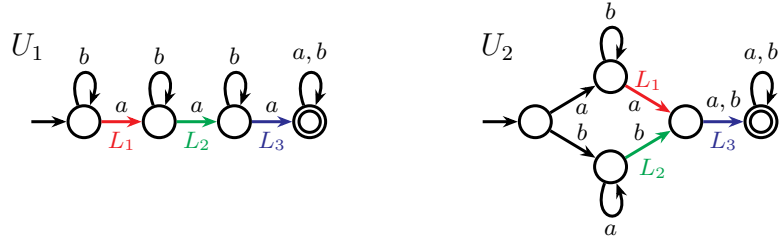
\begin{figure}[t]
	\centering
\begin{tikzpicture}
	\node at (-0.75, 1) {\large $U_1$};

	\node[state] (q1) at (0,0) {};
	\node[state] (q2) at (1,0) {};
	\node[state] (q3) at (2,0) {};
	\node[state, inner sep = 1mm] (q4) at (3,0) {};
	\node[state] (q4) at (3,0) {};

	\draw[transition] (-0.7,0) -- (q1);

	\draw[transition, Red] (q1) -- node[above]{\textcolor{black}{$a$}} node[below]{$L_1$} (q2);
	\draw[transition, Green] (q2) -- node[above]{\textcolor{black}{$a$}} node[below]{$L_2$} (q3);
	\draw[transition, Blue] (q3) -- node[above]{\textcolor{black}{$a$}} node[below]{$L_3$} (q4);

	\draw[transition] (q1) edge[loop below, in = 70, out = 110, distance = 6mm] node[above] {$b$} (q1);
	\draw[transition] (q2) edge[loop below, in = 70, out = 110, distance = 6mm] node[above] {$b$} (q2);
	\draw[transition] (q3) edge[loop below, in = 70, out = 110, distance = 6mm] node[above] {$b$} (q3);
	\draw[transition] (q4) edge[loop below, in = 70, out = 110, distance = 6mm] node[above] {$a,b$} (q4);

	\begin{scope}[xshift=1cm]
	\node at (4.5, 1) {\large $U_2$};

	\node[state] (p1) at (5,0) {};
	\node[state] (p2) at (6,0.6) {};
	\node[state] (p3) at (6,-0.6) {};
	\node[state] (p4) at (7,0) {};
	\node[state, inner sep = 1mm] (p5) at (8,0) {};
	\node[state] (p5) at (8,0) {};

	\draw[transition] (4.3, 0) -- (p1);

	\draw[transition] (p1) -- node[below,pos=0.7]{$a$} (p2);
	\draw[transition] (p1) -- node[above,pos=0.7] {$b$} (p3);
	\draw[transition, Red] (p2) -- node[above]{$L_1$} node[below, pos=0.2]{\textcolor{black}{$a$}}  (p4);
	\draw[transition, Green] (p3) -- node[above, pos=0.3]{\textcolor{black}{$b$}} node[below=0.1, pos=0.6]{$L_2$} (p4);
	\draw[transition, Blue] (p4) -- node[above] {\textcolor{black}{$a,b$}} node[below]{$L_3$} (p5);

	\draw[transition] (p2) edge[loop below, in = 70, out = 110, distance = 6mm] node[above] {$b$} (p2);
	\draw[transition] (p3) edge[loop below, in = -70, out = -110, distance = 6mm] node[below] {$a$} (p3);
	\draw[transition] (p5) edge[loop below, in = 70, out = 110, distance = 6mm] node[above] {$a,b$} (p5);
	\end{scope}
\end{tikzpicture}
 	\caption{Examples $U_1$ and $U_2$ sharing guards \textcolor{Red}{$L_1$}, \textcolor{Green}{$L_2$}, and \textcolor{Blue}{$L_3$}.	Whenever we do not specify a guard, then one should assume we use the trivial guard $L_t = \Sigma^*$.
	}
	\label{fig:examples}
\end{figure}

\section{Expressivity of HCS}\label{sec:expressivity}
In this section we show that for finite automata, adding regular guards does not increase the expressivity (although in \cref{sec:succinctness} we show they are more succinct). We then show that, in general, HCS can encode the intersection languages of their guard, with the consequence that context-free guards are too powerful.

\subsection{The Expressivity of Regular Guards}
We will first prove that finite automata with finite automata guards (\automata{NFA}{NFA}) recognise the regular languages (\cref{thm:exponential_size}).
For this, we will construct a single DFA by converting each of the guard NFAs into equivalent DFAs and then convert the underlying NFA into a DFA whilst taking the product of the underlying NFA with the guard DFAs.

\begin{restatable}{theorem}{thmregexpsize}\label{thm:exponential_size}
	Given an \automata{NFA}{NFA} $A$, we can construct an exponential size DFA $D$ such that $\lang{D} = \lang{A}$.
\end{restatable}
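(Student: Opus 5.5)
We carry out the plan sketched just before the statement in three steps: determinise the guards, form a product that tracks every guard on the history read so far, and then remove the nondeterminism of the underlying automaton by a subset construction.

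\textbf{Step 1 (determinise the guards).} Let $A = (\Sigma, U, \Gg)$ with $U = (\Sigma, Q, \delta, q_0, F)$ an NFA and guards $L_t$ given by NFAs $G_t$, for $t \in T$. Apply the subset construction to each $G_t$ to obtain a DFA $D_t = (\Sigma, S_t, \delta_t, s^0_t, F_t)$ with $|S_t| \le 2^{\norm{G_t}}$ and $\lang{D_t} = L_t$. We complete each $D_t$ with a sink, so that $\delta_t$ is total.

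\textbf{Step 2 (product with the guards).} The key observation is that the guard check for $t$ at step $i$ depends only on the prefix $a_1\cdots a_{i-1}$. This prefix is the same for every transition, and it is the whole history read so far. So we run all guard DFAs in parallel on the history, whatever transitions of $U$ are taken. We first eliminate $\varepsilon$-transitions of $U$ by the standard closure construction. An $\varepsilon$-transition does not extend the history, so its guard is simply checked on the current prefix; closure therefore only composes guard conditions, all evaluated on the same prefix.

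We then build an NFA $N$ with state space $Q \times \prod_{t \in T} S_t$ and initial state $(q_0, (s^0_t)_t)$. From $(q, \vec s)$ on letter $a$ there is a move to $(q', (\delta_t(s_t, a))_t)$ for every transition $t' = (q, a, q')$ of $U$ with $s_{t'} \in F_{t'}$. The accepting states are those with $q \in F$.

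An induction on the length of the run shows that the guard component after reading $w$ is exactly $(\delta_t(s^0_t, w))_t$. Hence the condition $s_{t'} \in F_{t'}$ holds precisely when $w \in L_{t'}$, and accepting runs of $N$ correspond bijectively to accepting runs of $A$. This gives $\lang{N} = \lang{A}$.

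\textbf{Step 3 (determinise $U$ only).} A naive subset construction on $N$ would give a doubly exponential DFA. However, the guard vector is a deterministic function of the input. So we determinise only the $Q$-component: $D$ has states $\powerset{Q} \times \prod_t S_t$. From $(P, \vec s)$ on $a$ it moves to $(P', (\delta_t(s_t,a))_t)$, where
\[ P' = \set{q' : \exists q \in P,\ t'=(q,a,q') \in T,\ s_{t'} \in F_{t'}}. \]
The accepting states are those with $P \cap F \neq \emptyset$. The number of states of $D$ is at most
\[ 2^{|Q|} \cdot \prod_t 2^{\norm{G_t}} = 2^{|Q| + \sum_t \norm{G_t}}, \]
which is exponential in the size of $A$. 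The transition count is bounded similarly, multiplied by $|\Sigma|$.

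\textbf{Main obstacle.} The crux is avoiding the double exponential. This needs the observation that all guards read the same global history, so one deterministic guard vector serves every branch of the nondeterministic run of $U$. The correctness argument is the induction stating that after reading $w$, $D$ is in state $(P, \vec s)$ where $P$ is the set of $U$-states reachable by guard-respecting runs on $w$, and $s_t = \delta_t(s^0_t, w)$ for every $t$. The handling of $\varepsilon$-transitions in Step 2 is the only other delicate point.
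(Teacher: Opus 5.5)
Your proposal is correct and matches the paper's proof: determinise each guard by the subset construction, then run a subset construction on the underlying NFA alone in lockstep with the deterministic product of the guard DFAs, giving states in $\powerset{Q} \times \prod_t S_t$ and an exponential bound of the same form as the paper's $2^{|Q| + \sum_i n_i}$ (your intermediate product NFA $N$ is only a stepping stone to the same DFA). Your explicit treatment of guarded $\varepsilon$-transitions is more careful than the paper's, which handles $\varepsilon$ only through a closure at the initial state; one detail to tighten is that, once guarded $\varepsilon$-moves are folded in, acceptance must also consult the guard vector (via a trailing guarded $\varepsilon$-path into $F$) rather than just test $q \in F$.
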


\begin{proofidea}[Full proof in~\cref{appendix:regularexpressivity}]
	The proof uses a careful mix of the powerset and product constructions. 
	If $A$ were a \automata{DFA}{DFA} then we would just take the product of the underlying DFA with each of the guard DFAs (which are assumed to be complete), only keeping transitions if the relevant guard is in an accepting state.

	Now, let $A$ be an \automata{NFA}{NFA} comprising an underlying NFA $U$ with $n$ states and NFA guards $G_1,\dots,G_m$ with $n_1,\dots,n_m$ states respectively; in total $A$ has $n+n_1+\dots+n_m$ states. 
	First, we directly transform $G_1,\dots,G_m$ to DFAs using the standard powerset construction, each with at most $2^{n_i}$ states. 

	It remains to determinise the underlying automaton $U$. However, we must carefully avoid all possible products of guards. Instead, we compute a simultaneous powerset construction on $U$ with a product construction on the guard DFAs. Each state of $D$ takes the form $(Q',q_1,\dots,q_m)$ where $Q'$ is a subset of the states of the underlying automata and each $q_i$ is a state of the $i$-th guard DFA. 
	During this construction, we keep only transitions where an appropriate subset of the guards are in an accepting state. In total, $D$ has at most $2^n\cdot 2^{n_1}\cdot\ldots\cdot2^{n_m} = 2^{n+n_1+\dots+n_m}$ states. 
\end{proofidea}

\begin{corollary}
	\automata{DFA}{DFA}, \automata{DFA}{NFA}, \automata{NFA}{DFA}, \automata{NFA}{NFA} all recognise exactly the regular languages.
\end{corollary}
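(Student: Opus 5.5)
The corollary follows from two inclusions. The upper bound comes from \cref{thm:exponential_size}; the lower bound comes from embedding plain finite automata as HCS with trivial guards.

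\emph{Upper bound.} By the natural inclusions, every \automata{DFA}{DFA}, \automata{DFA}{NFA} and \automata{NFA}{DFA} is also an \automata{NFA}{NFA}. A DFA guard is a special NFA, and a DFA underlying system is a special NFA, and the run semantics of an HCS depend only on the transition relation of $U$ and on the guard languages. \cref{thm:exponential_size} then yields, for any such HCS $A$, a DFA $D$ with $\lang{D}=\lang{A}$. Hence every language recognised by any of the four variants is regular.

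\emph{Lower bound.} Let $L$ be a regular language and fix a DFA $B$ with $\lang{B}=L$. Build the \automata{DFA}{DFA} $A=(\Sigma,B,\Gg)$ in which every transition $t$ is guarded by the trivial language $L_t=\Sigma^*$. This guard is given by the one-state complete DFA whose single state is accepting and carries a self-loop on every letter. The condition $a_1\cdots a_{i-1}\in L_t$ is then always satisfied, so the runs of $A$ are exactly the runs of $B$. Since acceptance in $A$ is inherited from $B$, we get $\lang{A}=\lang{B}=L$. The same $A$ is also an instance of each of the other three variants, so each of them recognises every regular language.

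Combining the two directions, all four classes recognise exactly the regular languages. No step here is a genuine obstacle. The only point needing care is that the trivial guard must be representable in each guard formalism, including by a complete DFA, which the one-state construction above settles.
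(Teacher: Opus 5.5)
Your proposal is correct and follows the paper's reasoning. The paper states the corollary without proof, but it rests on the same two inclusions you give. For the upper bound, \cref{thm:exponential_size} applies to \automata{NFA}{NFA}, which subsumes the other three variants; for the lower bound, any DFA is a \automata{DFA}{DFA} whose guards are all the trivial language $\Sigma^*$.
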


\subsection{Computing Intersections with HCS}

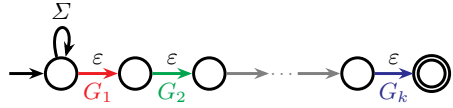
\begin{figure}[t]
\begin{center}
			\resizebox{6cm}{!}{%
\begin{tikzpicture}
	\node[state] (q1) at (0,0) {};
	\node[state] (q2) at (1,0) {};
	\node[state] (q3) at (2,0) {};
	\node[state] (q4) at (4,0) {};
	\node[state,accepting] (q5) at (5,0) {};

	\node[gray,inner sep=0.5, scale=0.8] (dots) at (3,0) {$\dots$};

	\draw[transition] (-0.7,0) -- (q1);

	\draw[transition, Red] (q1) -- node[above]{\textcolor{black}{$\varepsilon$}} node[below]{$G_1$} (q2);
	\draw[transition, Green] (q2) -- node[above]{\textcolor{black}{$\varepsilon$}} node[below]{$G_2$} (q3);

	\draw[transition, gray] (q3) -- (dots);

	\draw[transition, gray] (dots) -- (q4);
	\draw[transition, Blue] (q4) -- node[above]{\textcolor{black}{$\varepsilon$}} node[below]{$G_k$} (q5);

	\draw[transition] (q1) edge[loop above, distance = 15] node {$\Sigma$} (q1);
\end{tikzpicture} %
}
		\end{center}
\caption{\automata{NFA}{X} for the intersection of $G_1,\dots, G_k$}
\label{fig:intersection}
\end{figure}

\begin{proposition}\label{pro:intersection-to-NFA}
	Given automata $G_1, \ldots, G_k$ of type $X$ over a common alphabet $\Sigma$, one can construct an HCS \automata{NFA}{X} $A$ of size $\Oh(k) + \sum_{i=1}^k \abs{G_i}$ such that $L(A) = \bigcap_{i=1}^k \lang{G_i}$.
\end{proposition}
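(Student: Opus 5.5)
The plan is to follow the construction in \cref{fig:intersection}. The underlying NFA $U$ has states $p_0, p_1, \ldots, p_k$, with $p_0$ initial and $p_k$ the only accepting state. There is a self-loop $(p_0, \sigma, p_0)$ for every $\sigma \in \Sigma$, each with the trivial guard $\Sigma^*$. For each $1 \le i \le k$ there is an $\varepsilon$-transition $t_i = (p_{i-1}, \varepsilon, p_i)$ whose guard is $L_{t_i} = \lang{G_i}$, given by the automaton $G_i$ itself.

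The size bound is then immediate. $U$ has $k+1$ states and $\abs{\Sigma} + k$ transitions, and the guards contribute $\sum_i \abs{G_i}$. Taking $\Sigma$ as fixed, this is $\Oh(k) + \sum_{i=1}^k \abs{G_i}$. If $\Sigma$ is not fixed, the $\abs{\Sigma}$ term is in any case dominated by $\abs{G_i}$ for a reasonable encoding of automata over $\Sigma$.

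For correctness, I first have to fix the meaning of guarded $\varepsilon$-transitions. Reading the run definition with $a_i = \varepsilon$, the history $a_1 \cdots a_{i-1}$ is the word read so far, and $\varepsilon$-letters add nothing to it. Every accepting run has the shape: loop at $p_0$ reading some $w = a_1 \cdots a_n$, then take $t_1, \ldots, t_k$ in order, since there are no other transitions out of $p_1, \ldots, p_{k-1}$. While taking the $\varepsilon$-transitions the history stays equal to $w$, so the run is valid exactly when $w \in \lang{G_i}$ for all $i$. This gives $\lang{A} \subseteq \bigcap_i \lang{G_i}$. Conversely, any $w$ in the intersection has the run just described, which gives the reverse inclusion. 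Note that it is the outer NFA's nondeterminism (guessing when to stop looping) that allows the guards to act on the whole word.

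The only delicate point is this treatment of $\varepsilon$-labelled guarded transitions: the guard must see the accumulated non-$\varepsilon$ history, and repeated checks on the same history must be allowed. If one wishes to avoid $\varepsilon$ in $U$, I would instead fold the checks into the last letter. Add a chain $p_0 \xrightarrow{\sigma} \cdots$ where the final letter $\sigma$ is read together with the checks, which would require guards for $\lang{G_i}\sigma^{-1}$. This is clumsier, so I would rely on the $\varepsilon$ reading, which the NFA definition (with $\delta$ defined on $\Sigma \cup \{\varepsilon\}$) already permits.
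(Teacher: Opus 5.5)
Your proof is correct and is essentially the paper's construction (a $\Sigma$-loop at the initial state followed by a chain of $\varepsilon$-transitions guarded by $G_1,\dots,G_k$, as in \cref{fig:intersection}). Your correctness argument, that the history stays frozen at $w$ while the guards are checked, is also the paper's. Two small points where you are more careful: your indexing $p_0,\dots,p_k$ with $p_k$ accepting fixes a slip in the paper, whose accepting state $q_{k+1}$ is not reached by the listed transitions, and your remark on the $\abs{\Sigma}$ self-loops addresses a term that the paper's size bound silently ignores.
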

\begin{proof}
	We construct the \automata{NFA}{X} $A = (\Sigma, U, \Gg)$. Let $U = (\Sigma, Q, \delta, q_0, F)$ be the underlying NFA with $Q = \set{q_0, q_1, \ldots, q_k, q_{k+1}}$, $\delta = \set{(q_0, \sigma, q_0) : \sigma \in \Sigma} \cup \set{(q_{i-1}, \varepsilon, q_i) : 1 \leq i \leq k}$, and $F = \set{q_{k+1}}$ as depicted in~\cref{fig:intersection}.
	The \automata{NFA}{X} $A = (\Sigma, U, \Gg)$ has the guards: on the $q_0$-transitions there are no guards and, for every $1 \leq i \leq k$, the transition $(q_{i-1}, \varepsilon, q_i)$ has guard $G_i$.
	It is straightforward that, for every $w \in \Sigma^*$, $w$ is accepted by $A$ iff $w \in \lang{G_i}$ for every $1 \leq i \leq k$. 
\qed\end{proof}

Similarly, if we want a DFA we can replace the $\varepsilon$ transitions in~\cref{fig:intersection} with a new `$\$$' symbol (and update the guards to ignore them), entailing the following (full proof in~\cref{appendix:dfaintersection}):

\begin{restatable}{proposition}{intersectiondfa}\label{pro:intersection-to-DFA}
	Given automata $G_1, \ldots, G_k$ of type $X$ over a common alphabet $\Sigma$, one can construct a \automata{DFA}{X} $A$ of size $\Oh(k^2) + \sum_{i=1}^k \abs{G_i}$ such that $L(A) = \set{w \$^k : w \in \bigcap_{i=1}^k \lang{G_i}}$ where `$\,\$$' is a symbol not in $\Sigma$.
\end{restatable}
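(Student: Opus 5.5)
We modify the construction of \cref{pro:intersection-to-NFA}, replacing the $\varepsilon$-chain by a chain of fresh `$\$$' transitions so that $U$ becomes deterministic. The underlying DFA $U$ has states $q_0, q_1, \ldots, q_k$, with initial state $q_0$ and accepting state $q_k$. Its transitions are the self-loops $(q_0, \sigma, q_0)$ for all $\sigma \in \Sigma$, which carry trivial guards, together with $(q_{i-1}, \$, q_i)$ for $1 \le i \le k$. The transition $(q_{i-1}, \$, q_i)$ is guarded by a modified automaton $G_i'$. Since $U$ only reads $\Sigma$-letters at $q_0$ and only `$\$$' afterwards, and each state has at most one outgoing transition per letter, $U$ is deterministic. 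Its accepted runs read exactly the words $w\$^k$ with $w \in \Sigma^*$.

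The key point is the guard on the $i$-th `$\$$'-transition. When that transition is taken, the history read so far is $w\$^{i-1}$. Hence $G_i'$ must accept $w\$^{i-1}$ exactly when $w \in \lang{G_i}$. There are two natural ways to achieve this, and I would pick the one that keeps $G_i'$ within type $X$.

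\emph{Option one: make the guard ignore `$\$$'.} Add a `$\$$'-self-loop to every state of $G_i$. Then $G_i'$ accepts $u \in (\Sigma\cup\{\$\})^*$ iff the projection of $u$ to $\Sigma$ lies in $\lang{G_i}$. On histories of the form $w\$^{i-1}$ this is exactly the condition $w \in \lang{G_i}$. This keeps determinism for DFA, and for VASS or PDA it works with zero-effect loops. The size increases only by the number of states of $G_i$, so the $\Oh(k^2)$ term is not even needed.

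\emph{Option two: count the dollars explicitly.} Prefix-compose $G_i$ with a chain of $i-1$ fresh states reading `$\$$' from its accepting states. This costs $\Oh(i)$ extra states per guard, which sums to $\Oh(k^2)$ and matches the stated bound. I would present this variant only if adding self-loops to every state is unsuitable for some type $X$; otherwise option one already gives the bound.

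For correctness, suppose $w\$^k$ is accepted. Then each guard check $w\$^{i-1} \in \lang{G_i'}$ passed, so $w \in \lang{G_i}$ for all $i$. Conversely, if $w \in \bigcap_i \lang{G_i}$, the unique run of $U$ on $w\$^k$ satisfies every guard. No other word is accepted, because $U$ itself accepts only words of the form $w\$^k$.

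The only real obstacle is verifying that the modification stays within the class $X$, in particular preserving determinism and acceptance conditions such as the final counter values for Reach-VASS. Self-loops that leave counters and stacks unchanged clearly preserve all of these, and that is the step where I would take care.
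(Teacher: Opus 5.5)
Your proposal is correct. Your option two is the paper's own argument: the appendix proof appends to the $i$-th guard fresh states reached by `$\$$' from the original accepting states, so the guard explicitly counts the dollars read so far. The $\Oh(i)$ extra states per guard are where the $\Oh(k^2)$ comes from. Your indexing, that the $i$-th `$\$$'-edge sees the history $w\$^{i-1}$, is the correct one; the appendix writes $\$^i$.

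Your preferred option one is a genuinely different construction from the appendix proof. You add a configuration-preserving `$\$$'-loop at every state, so the guard simply ignores `$\$$'. This is actually what the main-text sketch before the proposition hints at (``update the guards to ignore them''). It buys uniformity: every guard is modified the same way regardless of its position in the chain. Correctness is then a one-line projection argument. Determinism and the VASS acceptance conditions are visibly preserved, since `$\$$' is fresh.

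The price is one new edge per state rather than per accepting state. Writing $Q_i$ for the state set of $G_i$, you get size $\Oh(k)+\sum_i(\abs{G_i}+\abs{Q_i})\le \Oh(k)+2\sum_i\abs{G_i}$. Asymptotically this is even $\Oh(\sum_i\abs{G_i})$. However, your remark that the $\Oh(k^2)$ term ``is not even needed'' holds only up to the constant in front of $\sum_i\abs{G_i}$. You trade the $\Theta(k^2)$ overhead for $\sum_i\abs{Q_i}$, and neither dominates the other. This is harmless for the succinctness corollary. The paper's chain is not literally tight either, since it needs one `$\$$'-edge per original accepting state.
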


\subsection{Context-free Guards are Too Powerful}
\label{subsec:context_free}
We now observe that context-free languages, recognised by pushdown automata (PDA), are too powerful to be used as guards. Recall that it is undecidable whether the intersection of two context-free languages is empty~\cite[Theorem 9.22]{hopcroft2001introduction}. Thus as a corollary of \cref{pro:intersection-to-DFA} we have:
\begin{corollary}\label{cor:pda-are-undecidable}
	Emptiness of \automata{DFA}{PDA} is undecidable.
\end{corollary}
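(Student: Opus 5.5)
The plan is to reduce from the undecidable emptiness problem for the intersection of two context-free languages \cite[Theorem 9.22]{hopcroft2001introduction}, using \cref{pro:intersection-to-DFA} with $X = \text{PDA}$ and $k = 2$. Given PDAs $G_1, G_2$ over a common alphabet $\Sigma$, \cref{pro:intersection-to-DFA} yields a \automata{DFA}{PDA} $A$ with
\[
L(A) = \set{w \$^2 : w \in \lang{G_1} \cap \lang{G_2}}.
\]
Hence $L(A) = \emptyset$ if and only if $\lang{G_1} \cap \lang{G_2} = \emptyset$. The map $(G_1, G_2) \mapsto A$ is computable, since the construction is explicit and of size $\Oh(4) + \abs{G_1} + \abs{G_2}$. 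A decision procedure for emptiness of \automata{DFA}{PDA} would therefore decide intersection-emptiness for context-free languages, which is a contradiction.

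The only point needing care is to check that the guard modification in \cref{pro:intersection-to-DFA} stays inside the class PDA. Each guard must accept exactly the words over $\Sigma \cup \set{\$}$ whose $\$$-free projection lies in $\lang{G_i}$; in the construction the guard reads $w\$^j$. This requires closure of context-free languages under inverse homomorphism, or more concretely, adding $\$$-self-loops to every state of the PDA that leave the stack unchanged. Both are standard, so the resulting guards are again PDAs.

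I expect no real obstacle. The content lies in \cref{pro:intersection-to-DFA}, which we take as given, and the corollary follows immediately.
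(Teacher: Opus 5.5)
Your proposal is correct and follows the paper's argument: the corollary comes from instantiating \cref{pro:intersection-to-DFA} with two PDA guards and invoking the undecidability of intersection-emptiness for context-free languages \cite[Theorem 9.22]{hopcroft2001introduction}. The paper leaves implicit that the \$-modified guards are still PDAs; your check of this via closure under inverse homomorphism is sound, and it has the added benefit that each guard becomes insensitive to the exact number of trailing \$ symbols.
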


In fact, in \cref{sec:vass}, we show that emptiness is already undecidable for a subcase of pushdown automata: vector addition systems with states with just one counter under reachability acceptance conditions.

\section{Succinctness of HCS with Regular Guards}\label{sec:succinctness}

For any $\ell \in \NN$, let $C_\ell$ be the $\ell$-state DFA over the alphabet $\Sigma = \set{a}$ with the language $L(C_\ell) = \set{a^{\ell n} : n \in \NN}$. 
In other words, $C_\ell$ is the DFA that is a simple cycle of $a$-transitions of length $\ell$. 
We will use the pumping lemma for finite automata~\cite[Theorem 4.1]{hopcroft2001introduction} to prove the following claim.

\begin{restatable}{lemma}{claimexpna}\label{clm:exp-nfa}
	Let $p_1, \ldots, p_k$ be the first $k$ primes.
	Any NFA $N$ with the language $\lang{N} = \set{w \$^k : w \in \bigcap_{i=1}^k L(C_{p_i})}$ has size at least $2^k$.
\end{restatable}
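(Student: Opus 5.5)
The plan is to show that $N$ needs at least $P$ states, where $P \coloneqq p_1 p_2 \cdots p_k$, and then use $P \geq 2^k$. First I identify the language. Since $L(C_{p_i}) = \set{a^{p_i n} : n \in \NN}$ and the $p_i$ are pairwise coprime, we have $\bigcap_{i=1}^k L(C_{p_i}) = \set{a^{m} : P \mid m}$. So $\lang{N} = \set{a^{Pn}\$^k : n \in \NN}$. In particular $a^P\$^k \in \lang{N}$. Moreover, for $0 < m < P$ we have $a^m\$^k \notin \lang{N}$, since $P \nmid m$.

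Next comes a pumping argument in the style of the pumping lemma~\cite[Theorem 4.1]{hopcroft2001introduction}, done directly so that $\varepsilon$-transitions are handled. Fix an accepting run of $N$ on $a^P\$^k$. For each $0 \leq i \leq P$, let $s_i$ be the state the run occupies immediately after consuming the $i$-th letter $a$ (with $s_0$ the initial state). Any $\varepsilon$-moves are absorbed into the segments between consecutive $s_i$. This gives $P+1$ states. If $\abs{Q} \leq P$, then by pigeonhole $s_i = s_j$ for some $0 \leq i < j \leq P$.

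Cutting out the segment of the run between positions $i$ and $j$ yields a valid accepting run on $a^{P-(j-i)}\$^k$. Here $0 < j-i \leq P$, so $0 \leq P-(j-i) < P$. The case $P-(j-i) = 0$ can only occur when $i = 0$ and $j = P$. That case has to be excluded, because $\$^k$ is itself in the language. To avoid it, I take the $P+1$ states $s_0, \dots, s_P$ and ask for a repeat among them with $j - i < P$. This is guaranteed whenever $\abs{Q} \leq P-1$, since then already $s_0, \dots, s_{P-1}$ contain a repeat. The shortened word $a^{m}\$^k$ then has $0 < m < P$ and is accepted, which is a contradiction.

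Hence $\abs{Q} \geq P$. Since each $p_i \geq 2$, we get $\norm{N} \geq \abs{Q} \geq P \geq 2^k$. The only delicate step is this boundary case where the pumped-out loop is the whole $a$-block. Restricting the pigeonhole to $s_0, \dots, s_{P-1}$ handles it, at the cost of only one state in the bound, which is harmless.
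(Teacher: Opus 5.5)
Your proof is correct and is essentially the paper's own pumping argument: fewer than $p_1\cdots p_k$ states would let you remove (or add) an $a$-loop of length strictly between $0$ and $p_1\cdots p_k$, producing an accepted word whose number of $a$'s is not divisible by $p_1\cdots p_k$. The only difference is cosmetic: the paper compares the pumped and unpumped words around a simple cycle, so the $n=0$ case never arises, whereas you depump a repeated state among $s_0,\dots,s_{P-1}$ in a run on $a^P\$^k$, and you handle that boundary case correctly.
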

\begin{proofidea}[Full proof in \cref{appendix:succinctness}]
The product of the first $k$ primes is at least $2^k$. Since the difference in length between any two accepting words is at least $2^k$, there must be a cycle-free path of this length (or else a cycle could be depumped), entailing at least $2^k$ distinct states.\qed
\end{proofidea}

We conclude this subsection by observing that \automata{DFA}{DFA} can be exponentially more succinct than NFAs. 
\cref{cor:succinctness} follows from~\cref{pro:intersection-to-DFA} and Claim~\ref{clm:exp-nfa}. See the full proof in~\cref{appendix:succinctness}.

\begin{restatable}{corollary}{succinctness}\label{cor:succinctness}
	There exists an infinite family of \automata{DFA}{DFA}s $\set{A_k : k \in \NN}$ such that, for every $i \in \NN$ and for every NFA $N$ such that $\lang{N} = \lang{A}$, it is true that $\abs{N} \geq 2^{C\abs{A_k}^{\frac{1}{3}}}$ for some constant $C \leq 1$.
\end{restatable}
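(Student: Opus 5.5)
We instantiate \cref{pro:intersection-to-DFA} with the cycle DFAs $C_{p_1},\dots,C_{p_k}$, where $p_1,\dots,p_k$ are the first $k$ primes. This yields a \automata{DFA}{DFA} $A_k$ with
\[
L(A_k)=\set{w\$^k : w\in \textstyle\bigcap_{i=1}^k L(C_{p_i})},
\]
which is exactly the language appearing in Claim~\ref{clm:exp-nfa}.

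We first bound the size of $A_k$. By \cref{pro:intersection-to-DFA},
\[
\abs{A_k}\in \Oh(k^2)+\sum_{i=1}^k \abs{C_{p_i}}.
\]
Each $C_{p_i}$ is a single cycle with $p_i$ states and $p_i$ transitions, so $\abs{C_{p_i}}=2p_i$. We do not use the sharp prime number theorem estimate; a crude bound suffices. Since $p_i\le i^2$ for $i\ge 2$ (and in general $p_i\le 2i\ln i+\Oh(i)$), we get $\sum_{i\le k} p_i\le \sum_{i\le k} i^2\in\Oh(k^3)$. Hence $\abs{A_k}\le c\,k^3$ for some absolute constant $c\ge 1$ and all $k\ge 1$.

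Now let $N$ be any NFA with $L(N)=L(A_k)$. By Claim~\ref{clm:exp-nfa}, $\abs{N}\ge 2^k$. From $\abs{A_k}\le c k^3$ we obtain $k\ge c^{-1/3}\abs{A_k}^{1/3}$. Setting $C:=c^{-1/3}\le 1$ gives
\[
\abs{N}\ge 2^{C\abs{A_k}^{1/3}},
\]
as required. The family is infinite because the languages $L(A_k)$ are pairwise distinct, each being distinguished by its number of trailing $\$$ symbols.

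The only step requiring any care is fixing a polynomial bound on the sum of the first $k$ primes. If we did not want to rely on $p_i\le i^2$, we could fall back on Bertrand's postulate: it gives $p_i\le 2^i$, which is too weak here. So we will cite the standard elementary bound $p_i\in\Oh(i\log i)$, which gives $\sum_{i\le k}p_i\in\Oh(k^2\log k)\subseteq \Oh(k^3)$. The $\Oh(k^2)$ overhead of the construction in \cref{pro:intersection-to-DFA} is then dominated. The exponent $\tfrac13$ is thus slack, but it is what the statement asks for. Everything else is direct substitution.
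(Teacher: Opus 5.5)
Your proof is correct and follows the paper's argument: apply \cref{pro:intersection-to-DFA} to the prime cycles $C_{p_1},\dots,C_{p_k}$, bound $\abs{A_k}$ by a cubic in $k$, and combine with the $2^k$ lower bound of \cref{clm:exp-nfa}. The differences are cosmetic. You use the cruder estimate $p_i\le i^2$ (or Chebyshev's $p_i\in\Oh(i\log i)$) where the paper invokes the prime number theorem to get $\Oh(k^2\log k)$. Your explicit choice $C=c^{-1/3}$ also states the final inversion $k\ge C\abs{A_k}^{1/3}$ in the right direction, whereas the paper's displayed inequality at that step is reversed, though its conclusion is still correct.
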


\subsection{Nesting HCS with Regular Guards} 
\label{subsec:nesting}

In principle, it is possible to use HCS as guards. So, for example, \automata{NFA}{\automata{NFA}{NFA}} is well-defined.  Let $\Nn_k$ denote the class of automata that are NFAs with nested NFA guards of depth $k$; $\Nn_1 =$ \automata{NFA}{NFA}, $\Nn_2 =$ \automata{NFA}{\automata{NFA}{NFA}}, and so on.  By induction we can see that $\Nn_i$ recognise regular languages. 
However, by carefully applying~\cref{thm:exponential_size} on the guard by induction, we can observe that the number of states in the DFA remains at most exponential. See the full proof in~\cref{appendix:succinctness}.

\begin{restatable}{proposition}{nestingnomore}
	Let $k \in \NN$ and let $A \in \Nn_k$ be an NFA with nested NFA guards that contains $n$ states in total (across the underlying NFA and all of its nested guard NFAs).
	There exists a DFA $D$ such that $\lang{D} = \lang{A}$ and the number of states in $D$ is at most $2^n$.
\end{restatable}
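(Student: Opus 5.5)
We prove the statement by induction on the nesting depth $k$, using a strengthened hypothesis. Applying the construction of \cref{thm:exponential_size} naively does not suffice. Determinising each guard first costs $2^{n_i}$ states, and then determinising the outer automaton together with the product costs a further exponential, giving a tower of exponentials. The key observation is that in the proof of \cref{thm:exponential_size} the bound $2^{n+n_1+\dots+n_m}$ already comes from combining one powerset of the outer states with products of powersets of guard states. So what we really need from the induction is a DFA for each guard whose state count is bounded by $2^{(\text{total number of states of that guard})}$. Products of such DFAs multiply, so the exponents add.

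\textbf{Hypothesis.} For every $A \in \Nn_j$ with $n$ states in total, there is a DFA $D_A$ with $\lang{D_A} = \lang{A}$, at most $2^n$ states, and which we may assume complete. Completing a DFA adds one sink state. To stay within $2^n$, we take the powerset construction, which is complete by design: the empty set already serves as the sink. This gives exactly the bound $2^n$.

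\textbf{Base case.} $\Nn_1 = \automata{NFA}{NFA}$ is exactly \cref{thm:exponential_size}, which yields $2^{n+n_1+\dots+n_m} = 2^n$ states.

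\textbf{Inductive step.} Let $A \in \Nn_{k+1}$ have underlying NFA $U$ with $n_0$ states and guards $G_1,\dots,G_m \in \Nn_k$ (or plain NFAs, treated as depth-$0$) with $n_1,\dots,n_m$ total states, so $n = n_0 + \sum_i n_i$. By the induction hypothesis, each $G_i$ has an equivalent complete DFA $D_i$ with at most $2^{n_i}$ states. We now rerun the construction from the proof of \cref{thm:exponential_size}, using the $D_i$ in place of the powerset DFAs of the guards.

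The states of $D$ are tuples $(Q', s_1, \dots, s_m)$, where $Q'$ is a subset of the states of $U$ and $s_i$ is a state of $D_i$. On letter $\sigma$, every $s_i$ advances deterministically in $D_i$. The set $Q'$ moves to the set of all $q'$ reachable from some $q \in Q'$ via a transition $t = (q,\sigma,q')$ whose guard DFA is currently in an accepting state. Guards are evaluated on the prefix before reading $\sigma$, so we test acceptance of $s_i$ before advancing it.

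Because $U$ may contain $\varepsilon$-transitions, we also take a guarded $\varepsilon$-closure of $Q'$. This closure is computed at the current tuple $(s_1,\dots,s_m)$: the history does not change along $\varepsilon$-moves, so every guard status stays fixed during the closure. Correctness, namely that $Q'$ is exactly the set of states reachable in $A$ by runs on the prefix read so far, follows by the same induction on word length as in \cref{thm:exponential_size}. The state count is at most $2^{n_0}\cdot\prod_i 2^{n_i} = 2^n$.

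\textbf{Main obstacle.} The hard part is ensuring the proof of \cref{thm:exponential_size} uses the guards only as black-box complete DFAs, so that substituting the inductively obtained $D_i$ is legitimate. This needs two checks. First, the size bound in the induction hypothesis must include completeness, which is why we fix the powerset construction with $\emptyset$ as the sink. Second, if several transitions share the same guard automaton, counting it once or several times only helps or keeps the bound. For the latter, I would simply adopt the convention from \cref{thm:exponential_size} that states of each guard occurrence are counted in $n$.
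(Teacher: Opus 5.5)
Your proposal is correct and follows the paper's proof. Both argue by induction on the nesting depth: the hypothesis gives each guard a DFA with at most $2^{g_i}$ states, and these DFAs are then plugged directly, without re-determinising, into the subset-times-product construction from the proof of \cref{thm:exponential_size}, giving at most $2^{u}\cdot\prod_i 2^{g_i}=2^n$ states. The paper's proof leaves two points implicit that you handle explicitly and correctly: carrying completeness of the DFAs through the induction hypothesis, and taking a guarded $\varepsilon$-closure at the current guard states.
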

 
\section{Games on HCS are \class{EXPTIME}-complete}\label{sec:games}

We now consider games on history-constrained systems. 
Two-player zero-sum games played on graphs are common in formal verification,
and formalise system correctness even in antagonistic environments.
We refer to~\cite{GOG} for a comprehensive overview of games on graphs.

In this context, we will henceforth assume that the HCS under consideration is \emph{non-blocking}, meaning that for every history $h=q_0a_1q_1a_2\ldots q_k \in Q\x(\Sigma\x Q)^*$ there is at least one admissible continuation $h\, a_{k+1} q_{k+1}$ enabled via a transition that is from $q_k$, reads $a_{k+1} \in \Sigma$, leads to $q_{k+1}$, is guarded by $L$, and $a_1\ldots a_k\in L$. This is not a restriction because one can always make any HCS non-blocking by adding unguarded transitions to (winning or losing) sinks.

\begin{definition}[Games on HCS]
Let $A = (\Sigma, U, \Gg)$ be a non-blocking HCS and let $Q$ denote the set of control states of the underlying automaton $U$.

A game on $A$ is played by two opposing players, 0 and 1, and determined by
a predicate $\owner:Q\to\{0,1\}$, that associates with each state of $U$ which player owns it,
and a \emph{winning condition} $\WC \subseteq (Q\x\Sigma)^{\omega}$, that declares which outcomes are considered a win for player 0.

The game is played in rounds, each of which extends the current history play $h\in Q\x(\Sigma\x Q)^*$, starting at the initial state $q_0\in Q$.
From history \hbox{$h=q_0a_0\ldots q_k$}, player $\owner(q_k)$ selects an admissible move $q_k\step{a_k}q_{k+1}$ and the game continues with history
$h\,a_k q_{k+1}$. In the limit, this describes an infinite sequence $\rho\in(Q\x\Sigma)^{\omega}$ called a \emph{play}. 
The play is won by Player 0 if $\rho\in\WC$, otherwise the play is won by Player 1.
Player $i$ \emph{wins the game} if there is a strategy to select admissible moves for every history such that regardless of the opponent's choices, the resulting play is won by Player $i$.
\end{definition}

Solving a game refers to the decision question to verify whether a given game is won by Player~$0$.
The decidability, and complexity, of game solving crucially depend on how the game is specified. 

The main result of this section is the following theorem that concerns reachability games on finite-state HCS.
Here, HCS are of the form \automata{NFA}{DFA} and the winning condition $\WC$ is defined by a subset $F\subseteq Q$ of target states that Player~$0$ aims to reach (i.e., $h \in \WC$ if and only if $q_i \in F$ for some $0\le i$).

\begin{theorem}\label{thm:exptime-completeness}
	Solving reachability games on \automata{NFA}{DFA} is \class{EXPTIME}-complete.
\end{theorem}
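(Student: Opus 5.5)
The plan has two parts: an \class{EXPTIME} upper bound and an \class{EXPTIME}-hardness proof.

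\textbf{Upper bound.} Let $A=(\Sigma,U,\Gg)$ be an \automata{NFA}{DFA} with guards $G_1,\dots,G_m$, given as complete DFAs. First build the product arena in the spirit of \cref{thm:exponential_size}, but without determinising $U$: non-determinism of $U$ is resolved by the owner's choice of transition, so no powerset construction is needed. The positions are tuples $(q,s_1,\dots,s_m)$, with $q$ a state of $U$ and $s_i$ the current state of $G_i$ on the history read so far. A move $(q,\vec s)\to(q',\vec s')$ exists iff $t=(q,a,q')$ is a transition of $U$, the guard DFA of $t$ is in an accepting state in $\vec s$, and $\vec s'$ is obtained by letting every $G_i$ read $a$. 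Position $(q,\vec s)$ belongs to $\owner(q)$, and the targets are $F\times\prod_i S_i$. Admissible moves of $A$ from a history $h$ correspond exactly to moves from the position reached by $h$, because whether a guard is passed depends only on the DFA states. Hence Player~$0$ wins $A$ iff she wins this finite reachability game from $(q_0,\vec s_0)$. The arena has at most $|Q|\cdot\prod_i|S_i|\le 2^{\norm{A}}$ positions, and its edges can be enumerated in time polynomial in $\norm{A}$ per position. Attractor computation is polynomial in the arena size, which gives \class{EXPTIME}. The same product works for any $\omega$-regular winning condition: take the product with a deterministic parity automaton and use a parity-game algorithm running in time polynomial in the number of positions for a fixed, or polynomially bounded, number of priorities.

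\textbf{Lower bound.} I would reduce from countdown games, which are \class{EXPTIME}-complete (Jurdzi\'nski--Sproston--Laroussinie). A countdown game has a finite graph whose edges carry weights written in binary. Player~$0$ picks a weight $d$ available at the current vertex, Player~$1$ picks a $d$-labelled successor, and the counter $c$, initially a binary $N$, decreases by $d$. Player~$0$ wins iff $c$ hits $0$ exactly. The difficulty is encoding binary-sized counters with polynomially many DFA guards. I would use a prime-residue encoding, echoing \cref{clm:exp-nfa}. Choose the first $k$ primes $p_1,\dots,p_k$ with $\prod_i p_i > N$ and let the history be $a^{N-c}$ plus padding symbols that the counting DFAs ignore. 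Guard $C_{p_i}$-style DFAs, each of size $p_i$ (polynomial), track $(N-c)\bmod p_i$. Checking $c=0$ exactly is then an intersection of $k$ guards, done by a chain of $\varepsilon$/\$-transitions as in \cref{pro:intersection-to-DFA}.

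The main obstacle is simulating a move of weight $d$ (exponential in binary) with polynomially many states of $U$. My approach is to let Player~$0$ pump $a$'s in a self-loop and then ask her to exit the loop only when the count of $a$'s read since entering equals $d$. Player~$1$ polices this: he may challenge on any prime $p_i$, and the challenge succeeds iff the residue is wrong. To make "count since entering" recognisable, I would place a marker $\#_j$ before each loop and use per-prime DFAs that track the current residue together with the residue at the last marker. Such a DFA has size $p_i^2$, still polynomial, and accepts iff the difference is $\equiv d \pmod{p_i}$. By the CRT, if Player~$0$ exits with a wrong count, then some prime detects it, and Player~$1$ challenges and wins. An honest Player~$0$ survives every challenge and proceeds to Player~$1$'s successor choice. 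Overshooting is prevented by a final exactness check at $c=0$ and by requiring, again via a CRT challenge, that the total never exceeds $N$. I could instead make the final check only accept total exactly $N$, which must be reached, and bound the loop length so Player~$0$ cannot loop forever (looping forever is simply losing for reachability). Correctness then follows by translating winning strategies in both directions.
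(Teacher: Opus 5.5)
Your upper bound is the paper's argument, and it is fine: you take the product of $U$ with the guard DFAs, lift ownership and targets from $U$, and compute the attractor on an arena that is exponential only in the number of guards.

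The lower bound has a genuine gap. The residue guards determine the counter only modulo $P=\prod_i p_i$, and nothing in your construction keeps the relevant quantities below $P$. This breaks the reduction in two ways.
\begin{enumerate}
\item In the pumping loop, Player~0 can exit after $d+jP$ letters $a$. All residues agree with $d$, so no challenge succeeds. Your CRT claim holds only for counts below $P$, which nothing enforces.
\item Even with honest moves, the total can run past $N$ and come back to $N$ modulo $P$. Take one vertex with a self-loop of weight $7$ and $N=8$, so $P=2\cdot 3\cdot 5=30$. Player~0 loses the countdown game. Yet after $14$ honest moves the history has $98\equiv 8\pmod{30}$ letters $a$, every residue guard on the final chain accepts, and Player~0 wins your game.
\end{enumerate}

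Your remedies cannot be implemented. Whether $T\le N$ is not determined by $T\bmod p_1,\dots,T\bmod p_k$, so no CRT challenge can certify it. An exact-$N$ final check or a bound on the loop length fails for the same reason. In fact the obstruction is intrinsic to pumping unary blocks. Let $L$ be the least common multiple of the lengths of the $a$-cycles in all guard DFAs. Then each guard, from any state, ends in the same state after $a^x$ as after $a^{x+L}$ once $x$ exceeds its number of states. So guards can only ever constrain a long unary block modulo $L$. What you have built is a countdown game modulo $P$, and its hardness would need a separate argument.

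The missing idea is to make the history itself carry the magnitude of the counter, with Player~0 supplying the extra information and the guards checking it locally. The paper does this with a binary counter.
\begin{itemize}
\item W.l.o.g.\ all weights and the target are powers of two.
\item There is one DFA $D_i$ with at most four states per bit: states for $0$, $1$, ``must carry'', and a rejecting sink.
\item After playing the letter $2^i$, Player~0 must announce the carry chain by playing $\texttt{c}_j$ or $\texttt{n}_j$ for every $j<k$. A wrong announcement sends some $D_j$ to its sink.
\item Once $D_k$ has reached its accepting state, any further addition sends it to its sink, which excludes overshooting the target $2^k$.
\item The win check is a chain of $\varepsilon$-transitions guarded by $D_0,\dots,D_k$.
\end{itemize}
Because the carries are written into the history and verified bit by bit, the guards track the counter exactly rather than modulo a period, using only $k+1$ small guards.
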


Before the proof, let's justify why these types of games are the most interesting and appropriate to study here.

Firstly, we argue that the dynamics of the game based on ownership of states is natural for modelling purposes, where one verifies robustness of protocols/programs against antagonistic environments. Conceivable alternatives where players take turns, or where one player picks an action in $\Sigma$ and the other a such-labelled transition, can be easily encoded using the finite-state control.

Our choice to focus on \automata{NFA}{DFA}, in particular compared to more expressive guard automata, is natural because determinising NFA guards incurs negligible additional cost on top of having an NFA as the underlying automaton.
Furthermore, allowing guards defined in terms of even more expressive models, like Vector Addition Systems with States (VASS) or PDA, quickly renders solving any kind of game on them undecidable.
Indeed, even using guards defined as coverability languages for 1-VASS, a heavily restricted subclass of context-free languages, allows one to reduce the reachability problem for two-counter machines to solving games on 2-VASS~\cite{vass-games-undec} and completely analogously to games on HCS with VASS guards.

Finally, our subsequent focus on the reachability winning conditions is because more expressive alternatives commonly studied in formal verification / reactive synthesis can be solved using the same, na\"ive approach.

\subsection{Exponential Time Upper Bound}

To solve a game on \automata{NFA}{DFA}, we can employ a construction that is very similar to the construction presented in~\cref{thm:exponential_size}.
We construct a finite-state game graph where the vertices represent configurations of the underlying NFA and all guard DFAs by using the product construction.
Solving the original HCS game amounts to solving a game on the constructed graph, where a state's ownership, and other properties relevant to the winning condition like membership in $F$, are directly lifted to the game graph and its winning condition.
Crucially, the size of this graph is exponential in the number of guards only.
For~\cref{lem:game-graph}, suppose that $\delta = \set{t_1, \ldots, t_m}$ are the transitions of the underlying NFA and the DFA guarding $t_i$ has states $Q_i$ and transitions $\delta_i$.
The following lemma is proved in~\cref{appendix:games}.

\begin{restatable}{lemma}{gamegraph}\label{lem:game-graph}
	Given an HCS game $G$ one can in polynomial time construct a directed graph $G'=(V', E')$, a predicate $\owner': V' \to \set{0,1}$, and an initial state $q_0'\in V'$
	such that
	\begin{enumerate}[(i)]
		\item Player~0 wins $G$ iff Player~$0$ wins $G'$;
		\item $\abs{V} \leq \abs{Q} \cdot (\abs{Q_1} + \ldots + \abs{Q_m})^m$ and $\abs{E} \leq \abs{\delta} \cdot (\abs{\delta_1} + \ldots + \abs{\delta_m})^m$.
	\end{enumerate}
\end{restatable}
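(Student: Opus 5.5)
We would build $G'$ as the synchronous product of the underlying NFA $U$ with all guard DFAs $D_1,\dots,D_m$, where $D_i$ guards $t_i$. Each guard is assumed complete over $\Sigma$; completing it adds at most one sink state and does not affect the asymptotic bound. We set $V' \coloneqq Q \times Q_1 \times \cdots \times Q_m$ and take the initial vertex $q_0' \coloneqq (q_0, s_1^0, \dots, s_m^0)$, where $s_i^0$ is the initial state of $D_i$. Ownership is inherited from the first component: $\owner'(q, s_1, \dots, s_m) \coloneqq \owner(q)$. For a transition $t_j = (q, a, q')$ of $U$, we add an edge from $(q, s_1, \dots, s_m)$ to $(q', \delta_1(s_1, a), \dots, \delta_m(s_m, a))$ exactly when $s_j$ is accepting in $D_j$. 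The winning condition is lifted by projecting plays to their $U$-component. For reachability, the target set becomes $F \times Q_1 \times \cdots \times Q_m$. For a general $\omega$-regular $\WC$, the lifted condition is the preimage of $\WC$ under the projection.

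The invariant is the following. After a history $h = q_0 a_1 q_1 \cdots a_k q_k$ of the HCS, the unique corresponding path in $G'$ ends in $(q_k, \delta_1^*(s_1^0, a_1 \cdots a_k), \dots, \delta_m^*(s_m^0, a_1 \cdots a_k))$. We prove this by induction on $k$, using that every guard DFA reads every letter regardless of which transition is taken. Consequently, a move along $t_j$ is admissible in $G$ after $h$ iff $a_1 \cdots a_k \in L(D_j)$ iff the corresponding edge exists in $G'$. This yields a bijection between histories of $G$ and paths from $q_0'$ in $G'$ that preserves the owner of the last vertex and the set of available moves. Strategies then transfer in both directions by composing with this bijection, and plays correspond one-to-one with equal projections. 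Hence Player~0 wins $G$ iff Player~0 wins $G'$, which proves (i). Non-blocking of $G$ transfers to $G'$ on all reachable vertices; unreachable vertices can be pruned or given a self-loop.

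For (ii), the vertex count satisfies $\abs{Q} \cdot \prod_i \abs{Q_i} \le \abs{Q} \cdot (\sum_i \abs{Q_i})^m$, since each factor is at most the sum. Each transition of $U$, combined with one transition of each (deterministic) guard on the same letter, gives at most one edge. So $\abs{E'} \le \abs{\delta} \cdot \prod_i \abs{\delta_i} \le \abs{\delta} \cdot (\sum_i \abs{\delta_i})^m$.

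The main obstacle is the phrase ``in polynomial time''. The graph itself is exponential in $m$, so it cannot be written out in polynomial time. The claim has to be read as the construction being polynomial in the size of $G'$, or equivalently as $G'$ being computable on the fly: given a vertex, its successors, owner and target membership are computable in time polynomial in $\norm{A}$. I would state and prove it in that form. This is exactly what the \class{EXPTIME} upper bound of \cref{thm:exptime-completeness} needs: we solve the reachability game (or a parity game after a product with a deterministic parity automaton for $\WC$) on a graph of size $2^{\mathrm{poly}(\norm{A})}$ in time polynomial in that size.
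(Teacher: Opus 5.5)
Your proposal is correct and is essentially the paper's proof. It uses the same synchronous product $Q\times Q_1\times\cdots\times Q_m$, with ownership and the target set $F\times Q_1\times\cdots\times Q_m$ lifted from $U$, edges for $t_j$ enabled exactly when the $j$-th guard component is accepting, and the same product-then-sum counting for (ii); your history-to-path invariant spells out what the paper dismisses as ``clearly faithfully simulates''. Your reading of ``polynomial time'' as polynomial in the size of $G'$ (or on-the-fly) is also the right one: the paper's proof only establishes the size bound, and a literal polynomial-time construction would place the \class{EXPTIME}-hard problem of \cref{thm:exptime-completeness} in \class{P}, because explicit reachability games are solvable in polynomial time.
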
	

Solving the resulting reachability games can be done in polynomial time~\cite[Theorem 1.1]{GOG} and therefore games on HCS can be solved in exponential time.
The same construction allows one to solve history-constrained games with more expressive winning conditions, as long as solving their ordinary finite-state variants only depends polynomially on the number of control states.
This applies to parity games (in $O(m\cdot n^k)$ \cite{zielonka1998infinite}) as well as energy games and mean-payoff games (in $O(m\cdot n\cdot W)$ \cite{energy-games}). Since each of these subsumes reachability, the lower bound (to be proven next) also applies to them, entailing:

\begin{corollary}
Solving parity, energy, and mean-payoff games on \automata{NFA}{DFA} is \class{EXPTIME}-complete.
\end{corollary}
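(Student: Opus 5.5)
The corollary combines the upper-bound construction of \cref{lem:game-graph} with the \class{EXPTIME}-hardness of reachability games from \cref{thm:exptime-completeness}. For hardness, I will give a polynomial reduction from reachability games to each of the three objectives.

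\emph{Upper bound.} Given a parity, energy or mean-payoff game on an \automata{NFA}{DFA} $A$, apply \cref{lem:game-graph}. The winning condition is defined on states of $Q$ (priorities) or on transitions of $U$ (weights). It lifts to $G'$ by letting a vertex inherit the priority of its $Q$-component and an edge inherit the weight of the underlying transition it simulates. Since plays in $G'$ project bijectively onto plays in $A$ and the lifted labels agree along this projection, part (i) of the lemma holds for these objectives too; this is what the lemma's proof shows. By part (ii), $n=\abs{V'}$ and $m=\abs{E'}$ are at most $2^{\mathrm{poly}(\abs{A})}$.

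Zielonka's algorithm then runs in time $O(m\cdot n^k)$, where the number of priorities $k\le\abs{Q}$ is polynomial. Hence $n^k\le 2^{\mathrm{poly}(\abs{A})\cdot k}$ is still singly exponential. The pseudo-polynomial algorithms for energy and mean-payoff games run in $O(m\cdot n\cdot W)$. With weights in binary, $W\le 2^{\abs{A}}$, so this is also exponential. This gives membership in \class{EXPTIME}.

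\emph{Lower bound.} Start from a reachability game on an \automata{NFA}{DFA} $A$ with target set $F$. Build $A'$ by adding a fresh absorbing state $s$ with an unguarded self-loop on some letter, and redirecting every transition entering $F$ to $s$; if $q_0\in F$ the instance is trivial. The new transitions are unguarded and the guards keep reading the same histories, so $A'$ remains non-blocking and Player~0 can reach $F$ in $A$ iff she can reach $s$ in $A'$. Then set the objectives as follows.
\begin{itemize}
\item \emph{Parity:} $s$ gets priority $0$ and all other states priority $1$.
\item \emph{Mean-payoff (threshold $>0$):} the loop on $s$ has weight $1$ and every other transition weight $-1$. Reaching $s$ after finitely many steps gives limit average $1$, while avoiding $s$ forever gives $-1$.
\item \emph{Energy (with existentially quantified initial credit):} every transition other than the loop on $s$ has weight $-1$, and the loop has weight $0$.
\end{itemize}

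The energy case is the only step needing an argument. If Player~0 wins the reachability game, then she wins the reachability game on the finite graph $G'$ positionally via the attractor. So she reaches $s$ within $\abs{V'}$ steps, and an initial credit of $\abs{V'}$ suffices; this credit need not be written down, only its existence matters. If Player~1 wins reachability, he keeps the play outside $s$ forever, so the energy level decreases without bound and every finite credit is exhausted.

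If the energy variant instead fixes the initial credit to $0$, I would replace this by a bounded-length countdown construction. This is the step I expect to need the most care. Each reduction is clearly polynomial, which completes \class{EXPTIME}-hardness.
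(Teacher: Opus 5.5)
Your proposal is correct and follows the paper's route. The upper bound lifts priorities and weights through the product game graph of \cref{lem:game-graph} and applies the cited $O(m\cdot n^k)$ and $O(m\cdot n\cdot W)$ algorithms, and your absorbing-sink gadgets simply make explicit the paper's one-line remark that each of these objectives subsumes reachability.

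The fallback for initial credit fixed to $0$ is not needed for the usual existential-credit formulation. Even in that variant, it suffices to add a single fresh initial transition of weight $\abs{V'}$ (polynomially many bits in binary), with the guards shifted by one letter, rather than a countdown construction.
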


\subsection{Exponential Time Lower Bound}
\label{sub:lower_bound_for_reachability_games}

To conclude the proof of \cref{thm:exptime-completeness} we show the lower bound by reduction from solving \emph{Countdown Games}, which is \class{EXPTIME}-complete
\cite[Theorem 4.5]{JurdzinskiSL08}.

A countdown game is played on a weighted directed graph, where each edge has a weight in $\NN\setminus \{0\}$, given in binary. A configuration is a pair ($s,x)$, where $s$ is a state and $x \in \NN$ is the counter value.
In any configuration $(s,x)$, Player~0 chooses a number $d$ such that there is at least one outgoing edge from $s$ with weight $d$. Player 1 then chooses any edge $(s, d, s')$ that starts in $s$ and has weight $d$, and the game continues in configuration $(s',x-d)$. The goal of Player 0 is to make the counter reach 0, the goal of Player 1 is to prevent this. A countdown-game is therefore a concisely represented reachability game on $S\x\N$.

The above is the usual representation of countdown games. Here it is convenient to represent them slightly differently, however. Firstly, instead of counting down from $x$ to 0, we will count up from $0$ to $x$. Secondly, we ``unravel'' the choices of Players 0 and 1, meaning that we first have Player 0 choose an auxiliary successor state $s_d$ by taking an edge $(s, d, s_d)$, and then Player 1 choose a successor $s'$ of $s_d$ by taking an edge $(s_d, 0, s')$, %
see Figure~\ref{fig:adder}.
Furthermore, we assume without loss of generality that all edge weights as well as the target number $x$ are a power of 2 (or 0). %

\begin{lemma}\label{lem:exptime-reduction}
	Let $\Cc$ be a countdown game. We can construct, in polynomial time, an NFA[DFA] reachability game $\Rr$ such that Player~0 has a winning strategy in $\Rr$ if and only if Player~0 has a winning strategy in $\Cc$.
\end{lemma}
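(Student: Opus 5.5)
The plan is to simulate the counter of the (unravelled, counting-up) countdown game $\Cc$ in the \emph{history} of the HCS, written as a binary number. The weights are $2^{j}$ and the target $x = 2^{k}$ (or $0$). Let $n$ be polynomial in the size of $\Cc$ and larger than every exponent that occurs. The alphabet contains one action $f_i$ (``flip bit $i$'') for each $0 \le i \le n$, plus a few bookkeeping letters. The value of bit $i$ of the counter after history $h$ is the parity of the number of $f_i$ occurring in $h$. The guard $P^{i}_{b}$, which says ``bit $i$ currently equals $b$'', is therefore a two-state DFA that counts $f_i$ modulo $2$ and ignores all other letters. Every guard used in the construction is of this form, so it is trivially a DFA of constant size.

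\textbf{Construction.} The underlying NFA $U$ contains the states of $\Cc$ with the same ownership: Player~0 owns $s$ and Player~1 owns the auxiliary states $s_d$. The edges $(s_d,0,s')$ are kept unchanged. Each edge $(s,2^{j},s_d)$ is replaced by a \emph{binary adder gadget} that adds $2^{j}$ to the counter and then proceeds to $s_d$. The gadget has states $c_j,\dots,c_n$ and works as follows.
\begin{itemize}
\item From $c_i$ there is an $f_i$-transition to $s_d$ guarded by $P^{i}_{0}$: the bit was $0$, so it becomes $1$ and there is no carry.
\item From $c_i$ there is an $f_i$-transition to $c_{i+1}$ guarded by $P^{i}_{1}$: the bit was $1$, so it becomes $0$ and the carry moves up.
\item From $c_n$ the carry-out transition leads to a losing sink. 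Overflow is harmless to exclude, because overflow means the counter has already exceeded $x$.
\end{itemize}
Exactly one of the two guards at each $c_i$ holds, so the run through the gadget is forced. Ownership of the gadget states is therefore irrelevant, and I assign them to Player~0. One gadget is built per edge, so there are $O(n)$ states per edge and the whole construction is polynomial. The gadget is also non-blocking at every step.

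\textbf{Target test.} From every Player~0 state $s$, add an edge into a check chain $e_0 \to e_1 \to \dots \to e_{n+1}$, where $e_{n+1}$ is the unique target state. The transition $e_i \to e_{i+1}$ is guarded by $P^{i}_{1}$ if $i=k$ and by $P^{i}_{0}$ otherwise, in the style of \cref{pro:intersection-to-NFA}. Each $e_i$ also gets an unguarded edge to a losing sink, which keeps the system non-blocking. This mirrors the argument of \cref{pro:intersection-to-NFA}: the conjunction of $n+1$ bit tests would require an exponential product DFA, whereas a chain of single-bit guards stays polynomial. Since Player~0 owns the chain, the target is reachable from the chain exactly when the counter equals $x$.

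\textbf{Correctness.} Histories of $\Rr$ that reach states of $\Cc$ correspond bijectively to plays of $\Cc$ with counter value below $2^{n+1}$, because the gadgets are deterministic. Strategies therefore transfer in both directions. If Player~0 wins $\Cc$, then at the moment the counter equals $x$ she enters the check chain and wins $\Rr$. Conversely, a winning strategy in $\Rr$ must enter the chain when the counter equals $x$, and the play up to that point is a winning play of $\Cc$ against the corresponding Player~1 choices. Dead ends of $\Cc$ at Player~0 states are handled by a losing sink, in line with the non-blocking convention.

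\textbf{Main obstacle.} The expected difficulty is representing an exponentially large counter with polynomially many \emph{small} DFA guards. The point to verify carefully is that carry propagation needs only single-bit parity guards. Only the final equality test needs a conjunction of bit tests, and the chain of single-bit guards avoids the exponential product there.
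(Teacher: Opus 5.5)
Your reduction is correct, but it implements the binary addition differently from the paper.

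\textbf{The paper's route.} It uses $k+1$ four-state DFAs $D_0,\dots,D_k$ as end-of-play monitors. After each weight letter, Player~0 must announce a carry pattern by playing $\texttt{c}_j$ or $\texttt{n}_j$ for every bit $j<k$. A wrong announcement sends the corresponding $D_j$ into its rejecting sink. The only guarded transitions in the whole game are the $\varepsilon$-edges of the final check chain.

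\textbf{Your route.} You encode bit $i$ as the parity of the $f_i$'s in the history and put constant-size parity guards on the adder transitions themselves. Because guards inspect the history before the current letter, the correct carry propagation is forced step by step. Player~0 never guesses, and you never have to argue that cheating is eventually punished.

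\textbf{What each buys.} Your version makes correctness of the counter almost immediate. It also extends directly to arbitrary binary weights and targets: chain one adder per set bit of a weight, and guard $e_i\to e_{i+1}$ by $P^i_{x_i}$. So the power-of-two normalisation becomes unnecessary. The paper's construction instead confines all guards to the final check chain. This shows the slightly stronger fact that hardness already holds when guards appear only on the edges leading to the target.

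The details you leave implicit are routine: an unguarded Player~0 edge from $s$ into $c_j$, and chain edges labelled by $\varepsilon$ or by a letter the parity DFAs ignore.
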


\begin{proof}
	The main idea is to use the guarding DFAs to maintain the current counter value. 
	Let $2^k$ be the target value and $x \leq 2^k$ denote the counter.
	We will have $k+1$ many DFA guards $D_0, D_1, \ldots, D_k$, which are shown in Figure~\ref{fig:binary-counter}.
	The role of the $i$-th DFA guard $D_i$ is to keep track of the $i$-th bit of $x$: state $p_i$ represents 0, while $q_i$ represents 1. The state $a_i$ is used to force a \emph{carry} operation when necessary, and $r_i$ is a rejecting sink state.
	
		\begin{figure}[ht]
			\centering
\begin{tikzpicture}
	\node[state, inner sep = 0.5mm, Red] (p0) at (0,0.5) {\small$p_0$};
	\node[state, inner sep = 2.3mm, Red] (p0) at (0,0.5) {};
	\node[state, inner sep = 1mm, Red] (q0) at (3,0.5) {\small$q_0$};
	\node[state, inner sep = 1mm, Red] (a0) at (1.5,-0.5) {\small$a_0$};
	\node[state, inner sep = 1mm, Gray] (r0) at (2.5,-2) {\small$r_0$};
	\draw[transition, Red] (p0) edge node[above,scale=0.8] {$2^0$} (q0);
	\draw[transition, Red] (q0) edge node[below,scale=0.8] {$2^0$} (a0);
	\draw[transition, Red] (a0) edge node[below,scale=0.8] {$\texttt{c}_0$} (p0);
	\draw[transition, Gray] (p0) edge[out = 250, in = 200, distance = 40] node[left,scale=0.8] {$\texttt{c}_0$} (r0);
	\draw[transition, Gray] (a0) edge[bend left=20] node[right=0.1, pos=0.2, scale=0.8] {$2^0, \texttt{n}_0$} (r0);
	\draw[transition, Gray] (q0) edge[bend left=20] node[right,scale=0.8] {$\texttt{c}_0$} (r0);
	\draw[transition, Red] (p0) edge[loop above, out=70, in=110, distance = 6mm] node[above,scale=0.8] {$\Sigma\setminus\set{2^0, \texttt{c}_0}$} (p0);
	\draw[transition, Red] (q0) edge[loop above, out=70, in=110, distance = 6mm] node[above,scale=0.8] {$\Sigma\setminus\set{2^0, \texttt{c}_0}$} (q0);
	\draw[transition, Red] (a0) edge[loop above, out=210, in=250, distance = 6mm] node[below, pos=0.6, scale=0.8] {$\hspace{0.1cm}\Sigma\setminus\set{2^0, \texttt{c}_0, \texttt{n}_0}$} (a0);
	\draw[transition, Gray] (r0) edge[loop above, out=330, in=290, distance = 6mm] node[pos=0.3,right,scale=0.8] {$\Sigma$} (r0);

	\node[state, inner sep = 0.5mm, Green] (pi) at (5,0.5) {\small$p_i$};
	\node[state, inner sep = 2.3mm, Green] (pi) at (5,0.5) {};
	\node[state, inner sep = 1mm, Green] (qi) at (8,0.5) {\small$q_i$};
	\node[state, inner sep = 1mm, Green] (ai) at (6.5,-0.5) {\small$a_i$};
	\node[state, inner sep = 1mm, Gray] (ri) at (7.5,-2) {\small$r_i$};
	\draw[transition, Green] (pi) edge node[above,scale=0.8] {$2^i, \texttt{c}_{i-1}$} (qi);
	\draw[transition, Green] (qi) edge node[below=0.2, pos = 0.25, scale=0.8] {$2^i, \texttt{c}_{i-1}$} (ai);
	\draw[transition, Green] (ai) edge node[below, scale=0.8] {$\texttt{c}_i$} (pi);
	\draw[transition, Green] (pi) edge[loop above, out=70, in=110, distance = 6mm] node[above,scale=0.8] {$\Sigma\setminus\set{2^i, \texttt{c}_i, \texttt{c}_{i-1}}$} (pi);
	\draw[transition, Green] (qi) edge[loop above, out=70, in=110, distance = 6mm] node[above,scale=0.8] {$\Sigma\setminus\set{2^i, \texttt{c}_i, \texttt{c}_{i-1}}$} (qi);
	\draw[transition, Gray] (pi) edge[out = 250, in = 200, distance = 40] node[left,scale=0.8] {$\texttt{c}_i$} (ri);
	\draw[transition, Gray] (ai) edge node[right=0.03, pos=0.1, scale=0.8] {$2^i, \texttt{c}_{i-1}, \texttt{n}_i$} (ri);
	\draw[transition, Gray] (qi) edge[bend left = 34] node[right,scale=0.8] {$\texttt{c}_i$} (ri);
	\draw[transition, Gray] (ri) edge[loop above, out=330, in=290, distance = 6mm] node[pos=0.3,right,scale=0.8] {$\Sigma$} (ri);
	\draw[transition, Green] (ai) edge[loop above, out=210, in=250, distance = 6mm] node[below=0.05, scale=0.8] {$\Sigma\setminus\{2^i, \texttt{c}_i,\hspace{0.2cm}$} node[below=0.35, scale=0.8] {$\hspace{1cm}\texttt{c}_{i-1}, \texttt{n}_i\}$} (ai);

	\node[state, inner sep = 1mm, Blue] (pk) at (10.75,0.5) {\small$p_k$};
	\node[state, inner sep = 0.5mm, Blue] (qk) at (10.75,-1) {\small$q_k$};
	\node[state, inner sep = 2.3mm, Blue] (qk) at (10.75,-1) {};
	\node[state, inner sep = 1mm, Gray] (rk) at (10.75,-2.5) {\small$r_k$};
	\draw[transition, Blue] (pk) edge[loop above, out=70, in=110, distance = 6mm] node[above,scale=0.8] {$\Sigma\setminus\set{2^k, \texttt{c}_{k-1}}$} (pk);
	\draw[transition, Blue] (pk) edge node[left,pos=0.35,scale=0.8] {$2^k, \texttt{c}_{k-1}$} (qk);
	\draw[transition, Blue] (qk) edge[loop above, out=200, in=160, distance = 6mm] node[above,pos=0.6,scale=0.8] {$\texttt{n}_0, \texttt{n}_1,\hspace{1.15cm}$} node[left,scale=0.8] {$\ldots,$} node[below=-0.04,pos=0.4,scale=0.8] {$\texttt{n}_{k-1}\hspace{1cm}$}  (qk);
	\draw[transition, Gray] (qk) edge node[left,pos=0.3,scale=0.8] {$\Sigma\setminus\{\texttt{n}_0, \texttt{n}_1,\hspace{0.3cm}$}  node[left,pos=0.6,scale=0.8] {$\ldots, \texttt{n}_k\}$} (rk);
	\draw[transition, Gray] (rk) edge[loop above, out=200, in=160, distance = 6mm] node[left,scale=0.8] {$\Sigma$} (rk);
\end{tikzpicture}

 			\caption{
				The DFA guards; $D_0$ (left, \textcolor{Red}{red}), $D_i$ for all $1 \leq i \leq k-1$ (centre, \textcolor{Green}{green}), and $D_k$ (right, \textcolor{Blue}{blue}). 
				Note that the sink states $r_0, \ldots, r_k$ and all transitions to/from sink states have been coloured grey to highlight the main function of the DFAs which occurs between states $p_i$, $q_i$, and $a_i$.
			}
			\label{fig:binary-counter}
		\end{figure}
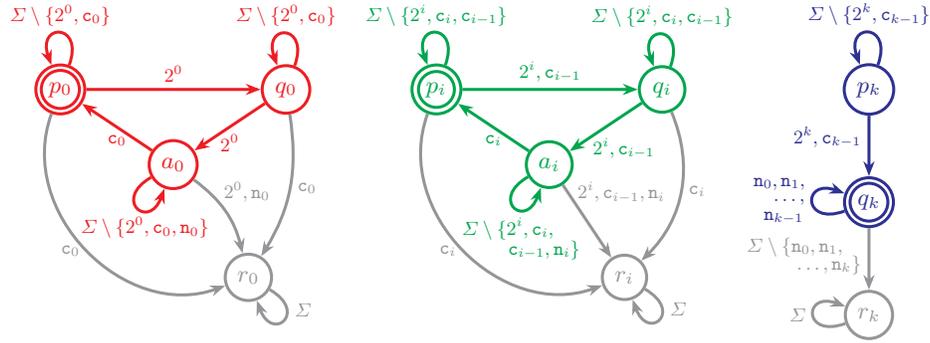

	The counter starts at $x=0$, so the initial state of $D_i$ is $p_i$, for $0 \leq i \leq k$.
	As $2^k$ is the target, the accepting states are $p_i$ in $D_i$, for $0 \leq i < k$, and $q_k$ in $D_k$.

	Now, let us consider the game NFA, i.e., the outer automaton (see~\cref{fig:adder}). 
	Suppose that in the ``countup'' game, Player 0 chooses value $2^i$. We represent this using $k+1$ transitions in our NFA. The first transition is labelled $2^i$. This symbol is also read by the guard automaton $D_i$, which increments from value 0 (state $p_i$) to value 1 ($q_i$), or from value 1 ($q_i$) to ``preparing to carry'' ($a_i$). 
	All other guards $D_j$ (for $j\neq i$) ignore the symbol $2^i$, i.e., they loop and remain in the same state. 
	Then, Player~0 chooses either $\texttt{c}_0$ or $\texttt{n}_0$, followed by $\texttt{c}_1$ or $\texttt{n}_1$, up to $\texttt{c}_{k-1}$ or $\texttt{n}_{k-1}$. 
	After the final choice (between $\texttt{c}_{k-1}$ or $\texttt{n}_{k-1}$), we arrive in a state where Player 1 can choose a successor of $s_0$ among the ones where an edge with value $2^i$ exists in the countup game. 

	\vspace{-2mm}
	\begin{figure}
		\centering
\begin{tikzpicture}
	\node[player0] (s0) at (0,0) {\small$s_0$};
	\node[player1] (t1) at (1,0.75) {};
	\node[player1] (t2) at (1,-0.75) {};
	\node[player0] (s1) at (2,1.5) {\small$s_1$};
	\node[player0] (s2) at (2,0.75) {\small$s_2$};
	\node[player0] (s3) at (2,0) {\small$s_3$};
	\node[player0] (s4) at (2,-0.75) {\small$s_4$};
	\draw[transition] (s0) -- node[above=0.1, pos=0.2, scale=0.8] {$+2^i$} (t1);
	\draw[transition] (s0) -- node[below=0.1, pos=0.2, scale=0.8] {$+2^j$} (t2);
	\draw[transition] (t1.center) -- (s1);
	\draw[transition] (t1.center) -- (s2);
	\draw[transition] (t1.center) -- (s3);
	\draw[transition] (t2.center) -- (s4);
	
	\node[player0] (s0new) at (3.5, 0) {\small$s_0$};
	\node[player0] (ci1) at (4.5, 0.75) {};
	\node[player0] (cj1) at (4.5, -0.75) {};
	\node[player0] (ci2) at (5.75, 0.75) {};
	\node[player0] (cj2) at (5.75, -0.75) {};
	\node[regular polygon,regular polygon sides=4, scale=0.8, fill = white, inner sep = 0] (ci3) at (7, 0.75) {\small$\cdots$};
	\node[regular polygon,regular polygon sides=4, scale=0.8, fill = white, inner sep = 0] (cj3) at (7, -0.75) {\small$\cdots$};
	\node[player1] (ti) at (8.25, 0.75) {};
	\node[player1] (tj) at (8.25, -0.75) {};
	\node[player0] (s1new) at (9.5,1.5) {\small$s_1$};
	\node[player0] (s2new) at (9.5,0.75) {\small$s_2$};
	\node[player0] (s3new) at (9.5,0) {\small$s_3$};
	\node[player0] (s4new) at (9.5,-0.75) {\small$s_4$};
	\draw[transition] (s0new) -- node[above=0.1, pos=0.3, scale=0.8] {$2^i$} (ci1);
	\draw[transition] (s0new) -- node[below=0.1, pos=0.3, scale=0.8] {$2^j$} (cj1);
	\draw[transition] (ci1) edge[out = 20, in = 160] node[above, scale=0.8] {$\texttt{c}_0$} (ci2);
	\draw[transition] (ci1) edge[out=-20, in = -160] node[below, scale=0.8] {$\texttt{n}_0$} (ci2);
	\draw[transition] (cj1) edge[out = 20, in = 160] node[above,  scale=0.8] {$\texttt{c}_0$} (cj2);
	\draw[transition] (cj1) edge[out=-20, in = -160] node[below,  scale=0.8] {$\texttt{n}_0$} (cj2);

	\draw[transition] (ci2) edge[out = 20, in = 160] node[above, scale=0.8] {$\texttt{c}_1$} (ci3);
	\draw[transition] (ci2) edge[out=-20, in = -160] node[below, scale=0.8] {$\texttt{n}_1$} (ci3);
	\draw[transition] (cj2) edge[out = 20, in = 160] node[above, scale=0.8] {$\texttt{c}_1$} (cj3);
	\draw[transition] (cj2) edge[out=-20, in = -160] node[below, scale=0.8] {$\texttt{n}_1$} (cj3);

	\draw[transition] (ci3.center) edge[out = 20, in = 160] node[above, pos=0.6, scale=0.8] {$\texttt{c}_{k-1}$} (ti);
	\draw[transition] (ci3.center) edge[out=-20, in = -160] node[below, pos=0.6, scale=0.8] {$\texttt{n}_{k-1}$} (ti);
	\draw[transition] (cj3.center) edge[out = 20, in = 160] node[above, pos=0.6, scale=0.8] {$\texttt{c}_{k-1}$} (tj);
	\draw[transition] (cj3.center) edge[out=-20, in = -160] node[below, pos=0.6, scale=0.8] {$\texttt{n}_{k-1}$} (tj);

	\draw[transition] (ti.center) -- node[above,pos=0.6,scale=0.8]{$\varepsilon$} (s1new);
	\draw[transition] (ti.center) -- node[above,pos=0.7,scale=0.8]{$\varepsilon$} (s2new);
	\draw[transition] (ti.center) -- node[below,pos=0.6,scale=0.8]{$\varepsilon$} (s3new);
	\draw[transition] (tj) -- node[above,pos=0.45, scale=0.8]{$\varepsilon$} (s4new);

	\node[player0] (end1) at (4.5, -1.75) {};
	\node[player0] (end2) at (5.75, -1.75) {};
	\node[regular polygon,regular polygon sides=4, scale=0.8, fill = white, inner sep = 0] (end3) at (7, -1.75) {$\small\cdots$};
	\node[player0] (end4) at (8.25, -1.75) {};
	\node[player0,accepting] (end5) at (9.5, -1.75) {};

	\draw[transition] (s0new) edge[bend right=30] node[left=0.1,pos=0.66, scale=0.8, black]{$\varepsilon$} (end1);
	\draw[transition] (end1) edge[Red] node[above,black, scale=0.8]{$\varepsilon$} node[below, scale=0.8]{$D_0$} (end2);
	\draw[transition] (end2) edge[Green] node[above,black, scale=0.8]{$\varepsilon$} node[below, scale=0.8]{$D_1$} (end3);
	\draw[transition] (end3) edge[Green] node[above,black, scale=0.8]{$\varepsilon$} node[below, scale=0.8]{$D_{k-1}$} (end4);
	\draw[transition] (end4) edge[Blue] node[above,black, scale=0.8]{$\varepsilon$} node[below, scale=0.8]{$D_k$} (end5);

	\node[player1] (t1) at (1,0.75) {};
	\node[player1] (t2) at (1,-0.75) {};

	\node[player1] (ti) at (8.25, 0.75) {};
	\node[regular polygon,regular polygon sides=4, scale=0.8, fill = white, inner sep = 0] (ci3) at (7, 0.75) {\small$\cdots$};
	\node[regular polygon,regular polygon sides=4, scale=0.8, fill = white, inner sep = 0] (cj3) at (7, -0.75) {\small$\cdots$};
\end{tikzpicture} 		\caption{
			Part of an example ``countup'' game with five circle states controlled by Player~0 and two square states controlled by Player~1 (left) and its corresponding part in the \automata{NFA}{DFA} game (right). Outgoing edges of $s_1,s_2,s_3,s_4$ are omitted.
		}
		\label{fig:adder}
	\end{figure}
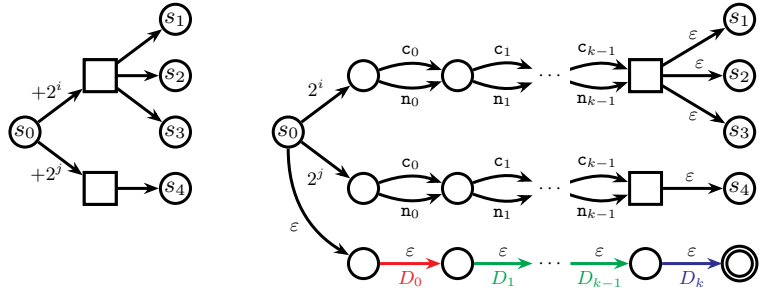
	
	Additionally, in every state $s$, Player~0 may also choose to take an $\varepsilon$-transition that leads to a chain of $\varepsilon$-transitions guarded by $D_0, \ldots, D_k$ and terminating in (the only) accepting state. 
	The eventual goal of Player 0 is to reach this accepting state, which requires all $D_i$, $0\leq i \leq k$, to be in an accepting state. 
	See Figure~\ref{fig:adder}.
	
	When Player~0 chooses the action $\texttt{c}_i$, this represents the $i$-th bit carrying over to the $(i+1)$-st bit. Hence the automaton $D_i$ resets from state $a_i$ (``preparing to carry'') to $p_i$ (value 0), while the automaton $D_{i+1}$ increments from $p_{i+1}$ to $q_{i+1}$ or from $q_{i+1}$ to $a_{i+1}$. If $\texttt{c}_i$ is invoked when $D_i$ is not in the carry state $a_i$, then the automaton moves into the sink state $r_i$. In order to reach the accepting state in the outer automaton, all $D_i$ must be in an accepting state, so reaching $r_i$ is immediately losing for Player~0. 
	Hence if $D_i$ is not preparing to carry, then playing $\texttt{c}_{i}$ is a losing move.
	The action $\texttt{n}_i$, conversely, represents no carry being required. This moves $D_i$ to the sink state $r_i$ if it is in the carry state $a_i$, so Player~0 must similarly not play $\texttt{n}_i$ if $D_i$ is in $a_i$.
	
	Hence the only way for Player~0 to win is to correctly play $\texttt{c}_i$ whenever the $i$-th bit needs to be carried and $\texttt{n}_i$ otherwise, which means that $D_0, D_1, \ldots, D_k$ accurately track the value of the counter $x$. Then, once $x=2^k$, all $D_i$ will be in their accepting state, so Player~0 can take the chain of $\varepsilon$-transitions, guarded by $D_0, D_1, \ldots, D_k$ to finally reach the accepting state and win the game.

	This \automata{NFA}{DFA} therefore satisfies the requirements of the Lemma.\qed

\end{proof}

\section{HCS with VASS Guards}\label{sec:vass}

As a language recognition model, a $d$-dimensional VASS $V = (\Sigma, Q, T, q_0, F)$ consists of a finite alphabet $\Sigma$, a finite set of states $Q$, a finite set of transitions $T \sset Q \times \Sigma \times \ZZ^d \times Q$, an initial state $q_0 \in Q$, and a set of accepting states $F \sset Q$.
A configuration of $V$ is a pair $(q, \vec{x})$ where $q \in Q$ is the current state and $\vec{x} \in \NN^d$ are the current counter values.
Notice that the counter values are non-negative. 
To be clear, a transition $(q, \sigma, \vec{u}, q') \in T$ can be taken from $(q, \vec{x})$ if $\vec{x} + \vec{u} \geq \vec{0}$ (this inequality holds on all counters). A VASS is \emph{deterministic}, denoted DVASS, if there is at most one transition for each outgoing state $q \in Q$ and character $\sigma \in \Sigma$.

The language defined by a given VASS depends on its acceptance condition.
There are two acceptance conditions: reachability and coverability.
The reachability acceptance condition necessitates that a word is accepted if it leads to a configuration $(q, \vec{x})$ such that $q \in F$ is an accepting state and $\vec{x} = \vec{0}$.
The coverability acceptance condition is a relaxation that only requires that $q \in F$ and $\vec{x} \geq \vec{0}$ (i.e.\ only the control state matters).
We use \emph{Reach-VASS} to denote VASS with the reachability acceptance condition and \emph{Cover-VASS} to denote VASS with the coverability acceptance condition.
When the dimension $d$ of the VASS is known, we write \emph{Reach-$d$-VASS} and \emph{Cover-$d$-VASS}.%

This section is devoted to proving that language non-emptiness is undecidable for \automata{DFA}{Reach-1-DVASS} (\cref{thm:reach-vass-undecidable}) and is \class{EXPSPACE}-complete for \automata{NFA}{Cover-DVASS} (\cref{thm:cover-vass-decidable}).

\subsection{VASS Guards with Reachability Acceptance Conditions}

\begin{restatable}{theorem}{reachvassundecidable}\label{thm:reach-vass-undecidable}
	Language emptiness for \automata{DFA}{Reach-1-DVASS} is undecidable.
\end{restatable}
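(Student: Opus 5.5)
I plan to reduce from the reachability (halting) problem for two-counter Minsky machines. The underlying DFA will simulate the finite control of the Minsky machine, and the guards will check only the one thing it cannot: the validity of each zero test. The key observation is that a Reach-1-DVASS guard accepts a prefix exactly when a certain counter, computed along the prefix, is zero at that moment. Placing such a guard on a single transition therefore performs a zero test on the history at that point, even though the VASS itself has no zero tests.

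\textbf{Construction.} Fix a Minsky machine $M$ with counters $c_1,c_2$.

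The outer DFA $U$ has the states of $M$, with intermediate states where needed, and uses the alphabet $\Sigma = \set{\mathit{inc}_1,\mathit{dec}_1,\mathit{inc}_2,\mathit{dec}_2,\mathit{zero}_1,\mathit{zero}_2,\mathit{nz}_1,\mathit{nz}_2}$.

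Each instruction of $M$ is translated as follows:
\begin{itemize}
\item An increment of $c_j$ becomes an unguarded $\mathit{inc}_j$-transition.
\item A decrement of $c_j$ becomes an $\mathit{dec}_j$-transition. Its guard is the Cover-style 1-DVASS $N_j$, which adds $+1$ on $\mathit{inc}_j$, $-1$ on $\mathit{dec}_j$, ignores all other letters, and accepts in its single state with any counter value.
\item A zero test on $c_j$ becomes two outgoing transitions, one labelled $\mathit{zero}_j$ and one labelled $\mathit{nz}_j$, so $U$ stays deterministic. The $\mathit{zero}_j$ transition is guarded by $Z_j$, a Reach-1-DVASS with one state that is accepting, using the same counter updates. $Z_j$ accepts a history iff the number of $\mathit{inc}_j$ minus the number of $\mathit{dec}_j$ is exactly $0$, with the counter never going negative.
\item The $\mathit{nz}_j$ transition is followed by a $\mathit{dec}_j$-transition and then an $\mathit{inc}_j$-transition, both guarded so that the counter stays nonnegative. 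This witnesses $c_j \ge 1$. Alternatively, $\mathit{nz}_j$ can be guarded by a 1-DVASS that performs $-1$ on reading $\mathit{nz}_j$, but guards read only the prefix before the transition, so the decrement-then-increment gadget is cleaner.
\end{itemize}

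\textbf{Handling the decrement guard.} The decrement guard needs only nonnegativity of the counter, but the class requires Reach acceptance. I will handle this with a second nondeterminism-free trick. Either:
\begin{itemize}
\item observe that all histories reaching a $\mathit{dec}_j$ transition already keep the guard counter nonnegative by construction, and use a Reach-VASS that first "drains" the counter; or, more simply,
\item replace the guard on $\mathit{dec}_j$ by the requirement that the history passes through the $\mathit{nz}_j$ branch.
\end{itemize}
Concretely, I will compile $M$ so that every decrement of $c_j$ is preceded by a zero test whose nonzero branch leads to the decrement, which is a standard normal form. Then the only guards needed are the $Z_j$ on $\mathit{zero}_j$ edges. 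Soundness of the nonzero branch still requires checking $c_j>0$ there, so on the $\mathit{dec}_j$ edge I use the guard $\lang{Z_j}^c$.

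This is the main obstacle: the complement "counter $\neq 0$" is not obviously a Reach-1-DVASS language. I will first try the following. On the $\mathit{dec}_j$ edge, put a Reach-1-DVASS $P_j$ that counts $\mathit{inc}_j$ as $+1$, $\mathit{dec}_j$ as $-1$, and has an extra accepting state reached on a special letter. Then I modify $U$ so that immediately before $\mathit{dec}_j$ it reads a fresh letter $\mathit{p}_j$ (all other guards ignore it), on which $P_j$ performs $-1$ and moves to its accepting state. After $\mathit{p}_j$, $P_j$ continues with updates shifted by the $-1$, so its counter equals $c_j-1$.

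Because $P_j$ is a VASS, the $-1$ on $\mathit{p}_j$ is blocked if $c_j=0$, and Reach acceptance is irrelevant as long as the guard is put on the $\mathit{p}_j$-transition's successor. To make Reach acceptance accept regardless of the value, I instead track the deficit: since the guard counter is exactly the actual counter, I put the $-1$ as part of $\mathit{dec}_j$ inside the guard's own run. Guard acceptance on the next transition then uses coverability-equivalent behaviour, because the next edge's guard $Z_j$-style check is not needed. If this fails I will fall back on simulating each counter by a pair of guards, one with $+1/-1$ and one with $-1/+1$ shifted by a bound, but I expect the VASS blocking behaviour alone to give nonnegativity for free.

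\textbf{Correctness.} By induction on the run, an accepting run of the HCS projects to a valid run of $M$, and conversely. The simulation of each instruction is faithful:
\begin{itemize}
\item $Z_j$ admits $\mathit{zero}_j$ exactly when $c_j=0$;
\item the VASS non-negativity blocks any decrement of a zero counter;
\item $U$ is deterministic and each guard is a 1-DVASS.
\end{itemize}
Hence the final state of $U$ is reachable iff $M$ halts, and language emptiness is undecidable.

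Equivalently, this can be cast via \cref{pro:intersection-to-DFA} as an intersection of deterministic one-counter languages. The zero-test synchronisation is what the outer DFA contributes beyond the intersection itself.
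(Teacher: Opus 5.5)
\textbf{There is a genuine gap, in how you make decrements legal.} Your overall architecture matches the paper's. The outer DFA copies the control of the two-counter machine, and each $\mathit{zero}_j$ edge is guarded by the one-state Reach-1-DVASS $Z_j$ (the paper's $G_j$).

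You try to enforce $c_j\ge 1$ locally, on the $\mathit{dec}_j$ edge or along the $\mathit{nz}_j$ branch, and none of your attempts lands in the class:
\begin{itemize}
\item Your first attempt uses a Cover guard $N_j$, which the theorem does not allow.
\item Your second is the $P_j$/$\mathit{p}_j$ gadget. Under reachability acceptance it accepts only when $c_j=1$.
\item You then leave the issue open.
\end{itemize}
No such local guard exists. No Reach-1-DVASS recognises ``the $j$-counter never went negative and is now $\ge 1$''. Every $\mathit{inc}_j^n$ with $n\ge 1$ must be accepted, so after it the guard sits in a configuration $(f_n,0)$ with $f_n$ accepting. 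Hence $f_n=f_{n'}$ for some $n<n'$. Determinism then treats $\mathit{inc}_j^n\mathit{dec}_j^n$ (reject) and $\mathit{inc}_j^{n'}\mathit{dec}_j^n$ (accept) alike. The same argument with $\mathit{dec}_j^{n+1}$ rules out ``never went negative''.

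Consequently your correctness bullet ``the VASS non-negativity blocks any decrement of a zero counter'' is unjustified. In an HCS, a guard whose run dies does not block the transition whose letter kills it, because guards only read the prefix before the transition. It merely rejects every later history. Your last remark suggests leaving $\mathit{dec}_j$ unguarded and relying on blocking. If you do that, the outer run can take $\mathit{nz}_j,\mathit{dec}_j$ at $c_j=0$ and reach the halting state along edges that never consult $Z_j$ again. The reduction is then unsound.

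\textbf{The fix is that non-negativity need not be checked when the decrement happens.} Since $Z_j$ is deterministic, once its counter would go negative its unique run is blocked, and it rejects every extension of the history. So:
\begin{itemize}
\item leave all $\mathit{inc}_j$ and $\mathit{dec}_j$ edges unguarded;
\item assume without loss of generality that the target configuration has both counters $0$;
\item append two edges labelled $\texttt{zero}_1$ and then $\texttt{zero}_2$, guarded by $Z_1$ and $Z_2$, in front of the unique accepting state.
\end{itemize}
Passing these final tests certifies both that the final counter values are $0$ and that no illegal decrement ever occurred. This is exactly how the paper closes the argument, with its extra transitions $(q,\texttt{zero}_1,r)$ and $(r,\texttt{zero}_2,s)$. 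With this change your decrement-guard discussion can be dropped entirely. Your $\mathit{nz}_j$ gadget then also works with all its edges unguarded: a $\mathit{dec}_j$ at $c_j=0$ permanently kills $Z_j$, so the final test fails.
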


To prove~\cref{thm:reach-vass-undecidable}, we will reduce from reachability in two-counter machines (2CM), an undecidable problem~\cite{minsky1967computation}.
We will construct an \automata{DFA}{Reach-1-DVASS} that has a non-empty language if and only if reachability holds in a given 2CM.

A 2CM $M = (Q, T)$ consists of a finite set of states $Q$ and a set of transitions $T \sset Q \times A \times Q$ where $A = \set{\texttt{inc}_1, \texttt{dec}_1, \texttt{zero}_1, \texttt{inc}_2, \texttt{dec}_2, \texttt{zero}_2}$ is the set of \emph{counter actions}.
A configuration of $M$ is a pair $(q, (x, y))$ where $q \in Q$ is the current state and $(x, y) \geq (0, 0)$ are the current counter values. 
To be clear, a transition $(p, \texttt{dec}_1, q)$ can be taken from $(p, (x,y))$ to $(q, (x-1, y))$ if $x-1 \geq 0$ and a transition $(p, \texttt{zero}_2, q)$ can be taken from $(p, (x, y))$ to $(q, (x, y))$ if $y = 0$.
We may also assume ``action determinacy'': for every state $p \in Q$ and for every action $\alpha \in A$, there is at most one transition from $p$ of the form $(p, \alpha, q)$.

\begin{proofidea}[Full proof in~\cref{appendix:reach-vass}]We create a \automata{DFA}{Reach-1-DVASS} $D$, over the alphabet $\set{\texttt{inc}_1, \texttt{dec}_1, \texttt{zero}_1, \texttt{inc}_2, \texttt{dec}_2, \texttt{zero}_2}$, where the underlying DFA $U$ tracks the current state of the given 2CM $M$.
There are two \emph{distinct} Reach-1-DVASS guards $G_1$ and $G_2$, each consisting of only one state.
The roles of $G_1$ and $G_2$ are to track the first  and second counter of $M$, respectively.
Whenever there is a transition $(p, \texttt{zero}_i, q)$, it is guarded by $G_i$; this equates to the $i$-th counter of $M$ being zero. 
See~\cref{fig:reach-vass-example} for an example of this construction.\qed

\begin{figure}
	\centering
\begin{tikzpicture}
	\node[state] (p1) at (0,0) {};
	\node[state] (p2) at (1.8,0.5) {};
	\node[state] (p3) at (1.8,-0.5) {};
	\node[state] (p4) at (1.1,1.5) {};
	\node[state] (p5) at (2.5,1.5) {};
	\node[state] (p6) at (3.6,0) {};
w
	\draw[transition] (p1) edge[loop above, out=110, in=70, distance = 6mm] node[above] {$\texttt{inc}_1$} (p1);
	\draw[transition] (p1) -- node[above, sloped]{$\texttt{zero}_2$} (p2);
	\draw[transition] (p1) -- node[below, sloped]{$\texttt{zero}_1$} (p3);
	\draw[transition] (p3) -- node[right]{$\texttt{inc}_2$} (p2);
	\draw[transition] (p2) -- node[left]{$\texttt{dec}_1$} (p4);
	\draw[transition] (p4) -- node[above]{$\texttt{inc}_2$} (p5);
	\draw[transition] (p5) -- node[right]{$\texttt{inc}_2$} (p2);
	\draw[transition] (p2) -- node[above, sloped]{$\texttt{zero}_1$} (p6);

	\node[state] (q1) at (4.7,0) {};
	\node[state] (q2) at (6.5,0.5) {};
	\node[state] (q3) at (6.5,-0.5) {};
	\node[state] (q4) at (5.8,1.5) {};
	\node[state] (q5) at (7.2,1.5) {};
	\node[state] (q6) at (8.3,0) {};

	\draw[transition] (q1) edge[loop above, out=110, in=70, distance = 6mm] node[above] {$\texttt{inc}_1$} (q1);
	\draw[transition, Green] (q1) -- node[above, sloped]{$\texttt{zero}_2$} node[below=-0.02, sloped, pos=0.75, scale=0.7]{\small $G_2$}(q2);
	\draw[transition, Red] (q1) -- node[below, sloped]{$\texttt{zero}_1$} node[above=-0.02, sloped, pos=0.75, scale=0.7]{$G_1$} (q3);
	\draw[transition] (q3) -- node[right]{$\texttt{inc}_2$} (q2);
	\draw[transition] (q2) -- node[left]{$\texttt{dec}_1$} (q4);
	\draw[transition] (q4) -- node[above]{$\texttt{inc}_2$} (q5);
	\draw[transition] (q5) -- node[right]{$\texttt{inc}_2$} (q2);
	\draw[transition, Red] (q2) -- node[above, sloped]{$\texttt{zero}_1$} node[below=-0.02, sloped, pos=0.75, scale=0.7]{$G_1$} (q6);

	\draw[rounded corners, transition, Red] (9.4,2.1) rectangle (11.4,0.7);
	\node[Red, scale=1.2] at (9.1, 1.9) {$G_1$};
	\node[state, Red,accepting] (g1) at (9.8, 1.4) {};
	\draw[transition, Red] (g1) edge[loop above, out=110, in=70, distance = 5mm] node[right, pos=0.8] {$\texttt{inc}_1, +1$} (g1);
	\draw[transition, Red] (g1) edge[loop below, out=-110, in=-70, distance = 5mm] node[right, pos=0.8] {$\texttt{dec}_1, -1$} (g1);

	\draw[rounded corners, transition, Green] (9.4,0.5) rectangle (11.4,-0.9);
	\node[Green, scale=1.2] at (9.1, 0.3) {$G_2$};
	\node[state, Green,accepting] (g2) at (9.8, -0.2) {};
	\draw[transition, Green] (g2) edge[loop above, out=110, in=70, distance = 5mm] node[right, pos=0.8] {$\texttt{inc}_2, +1$} (g2);
	\draw[transition, Green] (g2) edge[loop below, out=-110, in=-70, distance = 5mm] node[right, pos=0.8] {$\texttt{dec}_2, -1$} (g2);
	
\end{tikzpicture} 	\caption{A 2CM and its equivalent \automata{DFA}{Reach-1-DVASS}.}
	\label{fig:reach-vass-example}
\end{figure}
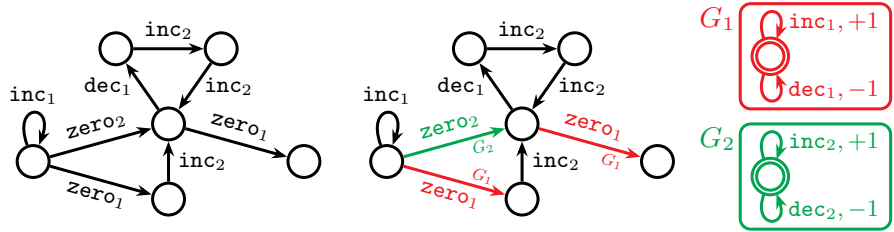
\end{proofidea}

A key reason that non-emptiness in \automata{DFA}{Reach-1-DVASS} is undecidable is that we can use the guards to simulate zero tests as many times as we like.
When the number of zero tests of a 2-CM $M$ is fixed, or even when it is not fixed but is bounded by some function of size of $M$, then reachability becomes decidable.
This is because VASS can simulate counter machines with a bounded number of zero tests and reachability in VASS is decidable~\cite{Mayr81,Kosaraju82,Lambert92,LerouxS15,LerouxS19}.

\subsection{VASS Guards with Coverability Acceptance Conditions}
\label{subsec:cover-vass}

This section observes that \automata{NFA}{Cover-DVASS} are contained within the class of Cover-VASS (\cref{pro:existsequivalentvass}), but despite their succinct representation, the language emptiness problem is \class{EXPSPACE}-complete, matching the complexity for Cover-DVASS (\cref{thm:cover-vass-decidable}). 

\begin{theorem}\label{thm:cover-vass-decidable}
	Language emptiness for \automata{NFA}{Cover-DVASS} is \class{EXPSPACE}-complete.
\end{theorem}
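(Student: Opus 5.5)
The lower bound is immediate. A Cover-DVASS $V$ is itself an \automata{NFA}{Cover-DVASS}. We take an underlying NFA that simulates $V$'s control graph without guards, so it reads arbitrary words, and add a final $\varepsilon$-step into an accepting state, guarded by $V$ itself. Alternatively, we use the single-guard instance of \cref{pro:intersection-to-NFA} with $k=1$. Coverability for VASS is \class{EXPSPACE}-hard (Lipton), and this stays true for deterministic VASS, since a non-deterministic VASS can be made deterministic by renaming each transition with a fresh letter. So emptiness is \class{EXPSPACE}-hard.

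For the upper bound, we build a single Cover-VASS $W$ with $L(W)=L(A)$, by the same product idea as \cref{thm:exponential_size}, and then run Rackoff's coverability algorithm on it. Let $A$ have underlying NFA $U$ with states $Q$ and transitions $t_1,\dots,t_m$, where $t_i$ is guarded by a Cover-$d_i$-DVASS $G_i$. Since guards are deterministic, we make each $G_i$ complete by adding a rejecting sink. Counter underflow is treated as moving to that sink with counters frozen, which the product can do by branching on the next letter.

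The product $W$ has control states $Q\times Q_{G_1}\times\dots\times Q_{G_m}$ and dimension $d=\sum_i d_i$. On a letter $a$, $W$ moves $U$ along some $t_j$, and in parallel advances every $G_i$ deterministically on $a$, adding the concatenated update vectors. The move along $t_j$ is allowed only if $G_j$'s current control state is accepting, since under coverability acceptance only the control state matters. Because each guard is deterministic, a single run of the product reflects the unique run of each guard on the history. So the guard of $t_j$ is satisfied exactly when the history prefix is in $L(G_j)$.

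There is one subtlety with the sink. Underflow must not block the whole product; instead the affected guard should go to the sink. To handle this, we let $W$ non-deterministically guess, for each guard, whether it is dead, tracking a set of dead guards in the control state. Guessing ``alive'' requires an actual non-negative step. Guessing ``dead'' only removes future permission to use that guard's transition. Wrong guesses of ``dead'' only lose runs, so soundness holds: a run exists iff the word is in $L(A)$. Accepting states of $W$ are those whose $U$-component is accepting. In the paper's terms this yields \cref{pro:existsequivalentvass}.

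The main obstacle is the complexity. $W$ has exponentially many control states $|Q|\cdot\prod_i(|Q_{G_i}|+1)\cdot 2^m$, dimension $d$ polynomial in the input, and update entries bounded by the maximal guard update $M$. Rackoff's bound gives covering runs of length at most $(|W|\cdot M)^{2^{\Oh(d\log d)}}$. Here $|W|$ is only singly exponential, so the logarithm of this bound is $2^{\Oh(d\log d)}\cdot \mathrm{poly}(n)$, still singly exponential. A nondeterministic search can therefore store a counter up to this bound, the current product state (polynomially many bits), and counter values (exponentially many bits) in exponential space, generating $W$'s transitions on the fly. By Savitch this gives \class{EXPSPACE}. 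The care needed is in checking that Rackoff's bound is only polynomial in the number of control states, so that the exponential state blowup does not add a second exponential.
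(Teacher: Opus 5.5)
Your proposal is correct and follows essentially the same route as the paper. You build a product Cover-VASS with exponentially many control states but dimension $\sum_i d_i$; your nondeterministic ``dead-guard'' guess does the job of the paper's non-accepting sink reachable on every letter in \cref{pro:existsequivalentvass}. You then use Rackoff's bound, whose space cost is only logarithmic in the number of states, plus Savitch, and get hardness from Lipton by renaming transitions to make the VASS deterministic. The differences are cosmetic: you apply Rackoff's run-length bound directly with an on-the-fly search rather than the packaged $\Oh(d^2\cdot 2^d\cdot\log\norm{V})$-space formulation of K\"unnemann et al., and your hardness instance is an $\varepsilon$-guarded NFA rather than the paper's two-state DFA with a fresh letter $\$$.
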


Let us fix our attention on the \automata{NFA}{Cover-DVASS} $A = (\Sigma, U, \Gg)$ where $U = (\Sigma, Q, \delta, q_0, F)$ is the underlying NFA with transitions $\delta = \set{t_1, \ldots, t_m}$ and guards $\Gg = \set{ L_1, \ldots, L_m}$.
For every $i \in \set{1, \ldots, m}$, the guard is specified as a $d_i$-dimensional Cover-DVASS $G_i = (\Sigma, Q_i, T_i, q_{0,i}, F_i)$, so $\lang{G_i} = L_i$.
We use $\norm{V}$ to denote the size of a $d$-VASS encoded in unary.

\begin{restatable}{proposition}{existsequivalentvass}\label{pro:existsequivalentvass}
	Let $D = d_1 + \ldots + d_m$.
	There exists a $D$-dimensional Cover-VASS $V = (\Sigma, Q', T', q_0',F' = \set{t})$ such that $L(A) = L(V)$ and $\norm{V} \leq 2(\norm{U} + \norm{G_1} + \ldots + \norm{G_m})^{m+2}$.
\end{restatable}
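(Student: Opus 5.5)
We build $V$ as a synchronous product of $U$ with all guard DVASS $G_1,\dots,G_m$, mirroring the \automata{DFA}{DFA} case of \cref{thm:exponential_size}. The counters of $V$ are the disjoint union of the guards' counters, so $V$ has dimension $D=d_1+\dots+d_m$. A control state of $V$ is a tuple $(q,p_1,\dots,p_m)$ with $q\in Q$ and $p_i\in Q_i$. Reading a letter $\sigma$ along a transition $t_j=(q,\sigma,q')$ of $U$ is simulated by one move that updates $q$ to $q'$ and, for every $i$, takes the unique $\sigma$-transition of $G_i$ from $p_i$ (if it exists), adding its update vector to the $i$-th block of counters. Because each $G_i$ is deterministic and never blocked by a guard, the configuration of $G_i$ reached on the history $w$ is uniquely determined. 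The counters of $V$ therefore record the configurations of all guards on the current prefix, and non-negativity in $V$ is exactly non-negativity in each $G_i$.

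The main obstacle is that guard $L_j$ must be checked on the prefix before $\sigma$ is read. Membership $w\in L_j$ means that $G_j$ has a run on $w$ (all counters stay non-negative) ending in $F_j$. Under coverability acceptance, the state $p_j$ alone decides this, with no condition on counter values beyond the non-negativity already enforced. So we permit the $t_j$-move from $(q,p_1,\dots,p_m)$ only if $p_j\in F_j$. One subtlety is that $G_i$ may lack a $\sigma$-transition, or may get stuck on a negative counter, even though it is not guarding the current edge. Then $w\sigma\notin L_i$, and by the prefix-reading semantics no extension of $w\sigma$ lies in $L_i$, but $A$ may still continue without ever using $t_i$ again.

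To handle this, we add to each guard component a fresh dead state $\bot_i$. From $p_i$ on $\sigma$, $V$ may move nondeterministically to $\bot_i$ (counter update $\vec 0$ on block $i$), and $\bot_i$ loops on all letters with $\vec 0$ and is non-accepting. Entering $\bot_i$ is the only option when $G_i$ has no $\sigma$-transition, but it is also allowed when a real transition exists. The real transition might be blocked by a negative counter, and $V$ cannot test that, so it may simply guess. This guess is sound: moving to $\bot_i$ only disables future uses of $t_i$, so it never creates a run that $A$ lacks. It is also complete: a genuine run of $A$ is matched by following $G_i$ for as long as its run exists and switching to $\bot_i$ once it dies.

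Accepting runs of $U$ end in $F$, whereas the proposition asks for a single final state $t$. We add a fresh state $t$ and, for each $U$-transition into $F$, a duplicate $V$-transition into $t$ with the same guard check and the same counter updates. The empty word is handled by making $t$ initial when $q_0\in F$, or by using the standard $\varepsilon$-move convention of $U$. Correctness is then an induction on the length of the history: runs of $A$ correspond to runs of $V$ in which each $G_i$-component is either the true configuration of $G_i$ or $\bot_i$, with $\bot_i$ used only once $G_i$'s run has died.

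For the size bound, $V$ has at most $|Q|\prod_i(|Q_i|+1)+1$ states. Each transition of $V$ is determined by a transition of $U$ together with, for each $i$, either a transition of $G_i$ or a $\bot_i$-move. Its update vectors are concatenations of the guards' vectors, so the unary encoding grows by a factor of at most $m$, which is absorbed by $\sum_i\norm{G_i}$. Altogether $\norm{V}$ is bounded by a product of $m+2$ factors, each at most $\norm{U}+\sum_i\norm{G_i}$, giving $\norm{V}\le 2(\norm{U}+\norm{G_1}+\dots+\norm{G_m})^{m+2}$.
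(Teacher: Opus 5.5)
Your proposal is correct and is essentially the paper's proof: a synchronous product of $U$ with all guards, where the move simulating $t_j$ is enabled only when the $j$-th guard component lies in $F_j$. A non-accepting sink reachable from every guard state on every letter absorbs guards whose run dies; your explicit nondeterministic $\bot_i$ is exactly this sink, which the paper introduces only as a ``without loss of generality the guards are complete and do not die'' assumption. The one real difference is minor: the paper reaches $t$ by zero-update $\varepsilon$-edges from every product state whose $U$-component is in $F$. That also covers the empty word, and it is the option you should take, because making $t$ initial fails when $V$ has the single initial state $q_0'$ and $t$ has no outgoing transitions.
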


\begin{proofidea}[Full proof in \cref{appendix:cover_vass}] The construction takes the cross product of the underlying automaton $U$ and the Cover-DVASS guards $G_1,\dots,G_m$ simultaneously tracking the counters of all of the guards. 
We must be careful to show that a transition that would kill a run of a guard does not kill the whole run (the guard may not be used again). 
We also ensure that there is a single accepting state so we can apply arguments and algorithms for VASS coverability.\qed
\end{proofidea}

\cref{pro:existsequivalentvass} tells us that \automata{NFA}{Cover-DVASS} are no more expressive than Cover-VASS. 
It remains to show that deciding non-emptiness (whether the state $t$ is reachable in $V$) can be decided in \class{EXPSPACE} and that there is a matching lower bound.

\paragraph{Proof of \cref{thm:cover-vass-decidable}.}
For this, we will use an exponential space algorithm for deciding coverability.
Coverability in VASS is the following decision problem.
Given a VASS $V$, an initial configuration $(s, \vec{x})$, and a target configuration $(t, \vec{y})$, decide whether there is a run in $V$ from $(s, \vec{x})$ to $(t, \vec{y'})$ for some $\vec{y'} \geq \vec{y}$.
In 1978, Rackoff proved that coverability in VASS can be decided in \class{EXPSPACE}~\cite{Rackoff78}.
We will use a more convenient formulation of Rackoff's Theorem from~{\cite[Corollary 3.4]{KunnemannMSSW25}} that states that coverability in a $d$-VASS $V$ from an initial configuration $(s, \vec{0})$ and with a target configuration $(t, \vec{0})$ can be decided by a non-deterministic $\Oh(d^2\cdot2^d\cdot\log\norm{V})$-space algorithm.

For our purposes, we wish to decide coverability in the $D$-dimensional VASS $V = (\Sigma, Q', T', q_0, \set{t})$.
If we prove that $\norm{V}$ is at most exponential in the combined sizes of the underlying NFA and the Cover-DVASS guards, then we can use Rackoff's Theorem to conclude that language non-emptiness of $A$ is decidable in \class{NEXPSPACE}.
This is why, as per the statement of~\cref{thm:cover-vass-decidable}, we construct a VASS $V$ with $\norm{V} \leq 2(\norm{U} + \norm{G_1} + \ldots + \norm{G_m})^{m+2}$.
See~\cref{appendix:cover_vass}, specifically~\cref{clm:cover-vass-size}, for more details.

We conclude the upper bound by combining~\cref{pro:existsequivalentvass} and~{\cite[Corollary 3.4]{KunnemannMSSW25}}. 
There is a \class{NEXPSPACE} algorithm for deciding whether $(s, \vec{0})$ can cover $(t, \vec{0})$ in $V$.
Precisely, the algorithm uses at most the following space:
\begin{equation*}
	\Oh( d^2 \cdot 2^D \cdot (m+2) \cdot \log(2(\norm{U} + \norm{G_1} + \ldots + \norm{G_m})) ).
\end{equation*}
Finally, by Savitch's Theorem~\cite{Savitch70}, we know that \class{NEXPSPACE} $=$ \class{EXPSPACE}.

As for the lower bound, language non-emptiness of \automata{DFA}{Cover-DVASS} is \linebreak \class{EXPSPACE}-hard because coverability in VASS is already \class{EXPSPACE}-hard~\cite{Lipton76}. Associate with each transition an alphabet character, so that the Cover-DVASS is deterministic, generating an alphabet $\Sigma$.
Consider an DFA with two states $p$ (the initial state) and $q$ (the only accepting state) and transitions $t_1 = (p, \sigma, p)$ for each $\sigma\in\Sigma$, and $t_2 = (p, \$, q)$ for a new symbol $\$\not\in\Sigma$.
There is just one guard, the Cover-DVASS in question which is allowed to update its configuration whenever $t_1$ is taken and which guards the transition $t_2$.
This means that a word $w\$$ is accepted by the \automata{DFA}{Cover-DVASS} if and only if the guard is able to achieve coverability on the word $w$.
\qed

\subsubsection{Non-deterministic Cover-VASS Guards}
\label{subsec:non-deterministic-cover-vass}
We now consider the case non-determinstic Cover-VASS guards. Here we show that in general these are more powerful than just Cover-VASS by themselves. In particular, we leave open the question of whether emptiness is decidable for \automata{DFA}{Cover-VASS}.

\begin{restatable}{theorem}{vassnotcontained}
\label{thm:vassnotcontained}
The class \automata{DFA}{Cover-VASS} is not contained in Cover-VASS.
\end{restatable}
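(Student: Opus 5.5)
By \cref{pro:intersection-to-DFA}, a \automata{DFA}{Cover-VASS} can recognise $\set{w\$^k : w \in \bigcap_{i=1}^k \lang{G_i}}$ for any Cover-VASS $G_1,\dots,G_k$. So I would exhibit two Cover-VASS languages whose intersection, after appending $\$^2$, is not a Cover-VASS language. Cover-VASS languages are closed under intersection via the synchronised product, so the intersection itself will not work as a witness. The extra power must therefore come from the nondeterminism of the guards, combined with the fact that each guard is re-run on every prefix of the history.

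The mechanism I would exploit is this. A guard on a transition taken at several different times in a run is evaluated on several different prefixes. Each evaluation is a fresh, independent nondeterministic run of the guard. Placing one guard on a self-loop therefore lets the HCS express a universally quantified property: every prefix at which the loop is taken lies in $L_t$. Coverability languages are downward-closed in their counter behaviour but not closed under this "all prefixes" operator.

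Concretely, I would try the following. Let $G$ be the Cover-1-VASS that reads $a^n b^m$, incrementing on $a$ and decrementing on $b$, and also nondeterministically lets $b$ "cheat" in some way. Then build a DFA with a self-loop on $c$ guarded by $G$, aiming for a language of the shape $\set{a^n (b c)^m \dots}$. Requiring every checkpoint to succeed then amounts to a zero-test-like or equality condition, such as $\set{a^n b^n \$}$ or $\set{a^n b^m : m \geq n}$. The constraint $m \geq n$ is upward in the decrementing letter, so it is not coverability-recognisable. The cleaner target is a language that is not a Cover-VASS language because of a pumping or closure property of Cover-VASS languages.

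The property I would use is standard: if $uv \in L$ is a Cover-VASS language and the run on $u$ reaches configuration $(q,\vec x)$, then any $u'$ reaching $(q,\vec x')$ with $\vec x' \geq \vec x$ also accepts $v$. Via Dickson's lemma, this gives a pumping lemma of the following form. For the sequence $a^n$, two indices $n<n'$ share a state with dominating counters, so $a^{n'}v$ is accepted whenever $a^n v$ is.

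The witness should then be a language like $\set{a^n b^m \$ : m \geq n}$, where a larger $a$-prefix must not accept the same suffix. I would realise it with a guard on the $\$$-transition that checks "number of $b$'s $\geq$ number of $a$'s". That check is not coverability-expressible directly, so I would instead use its complement trick. A DFA self-loop on $b$ guarded by a Cover-1-VASS accepting $a^n b^j$ with $j<n$ enforces every intermediate count to be below $n$, which gives $m\le n$; that is downward and hence fine for VASS. To get the upward direction I would reverse the roles, letting the $a$-loop be guarded by a VASS that counts $b$'s seen earlier, with words of the form $b^m a^n \$$ and the $a$-loop guard accepting $b^m a^j$ when $j<m$. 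This forces $n\le m$, i.e. the $a$'s are bounded by an earlier count. Dickson's argument then fails for Cover-VASS, because a larger $b$-prefix does admit more $a$'s, which is consistent with monotonicity.

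So the main obstacle is choosing a witness where monotonicity is actually violated. I expect the right construction to use the guard to enforce exact or upper bounds on a \emph{later} block in terms of an \emph{earlier} one, in the direction opposite to monotonicity, for instance forcing the length of the first block to be at least that of the second. I would then prove non-membership by the Dickson pumping argument on the first block.
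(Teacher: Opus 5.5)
You have the right mechanism: a guard on a repeatedly taken transition is re-evaluated on each prefix by a fresh nondeterministic run. But the proposal is not yet a proof. No witness language is fixed, no \automata{DFA}{Cover-VASS} is built, and no non-membership argument is carried out. You end by naming the choice of witness as the open obstacle, and that choice is the substance of the theorem.

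Worse, the direction you search in cannot work. Your candidates try to force an equality or a \emph{lower} bound on a later block from an earlier one, e.g.\ $\set{a^nb^m\$ : m\geq n}$ or $\set{a^nb^n\$}$. Consider a deterministic HCS reading $a^nb^m\$$. For $m'<m$, the checks made inside $a^nb^{m'}$ are a subset of those made inside $a^nb^m$, so the set of $m$ passing them is downward closed. Hence only the single check at the final $\$$ can exclude small $m$. Split by the DFA state reached, the prefixes passing that check form a finite union of (regular $\cap$ Cover-VASS) languages, i.e.\ a Cover-VASS language. This would make $\set{a^nb^m : m\geq n}$ a Cover-VASS language, which your own pumping argument refutes; the equality language fails the same way. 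Your final suggestion, forcing the first block to be at least as long as the second, is plainly a Cover-VASS language, as you yourself observed for $b^ma^n\$$. Also, your pumping principle ``$a^{n'}v$ is accepted whenever $a^nv$ is'' is too strong under nondeterminism. It holds only for suffixes accepted from the particular dominated configuration.

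The missing idea is that the extra power is a \emph{reset} of the counters between blocks, not a lower bound. The paper uses a guard $G_\$$ that loops on every letter in a fresh initial state and, on some $\$$, nondeterministically enters a copy of a Cover-VASS $V$. It therefore checks only the block since the most recent delimiter. Placing it on the $\$$-transition that closes each block makes the HCS recognise exactly the delimited Kleene star $L^{\$*}$ of $L=\lang{V}$. Each block is checked by an independent run starting from zero counters.

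One then proves separately that Cover-VASS languages are not closed under (delimited) Kleene star, taking $L=\set{a^nb^m : m\leq n}$. Consider an accepting run on $(a^{n_1}b^{n_1})\cdots(a^{n_K}b^{n_K})$ with $K=|Q|\cdot 2^d+1$. Each $a$-block contains a nonnegative cycle. By pigeonhole, two blocks $j<k$ have cycles with the same state and the same support. Pumping the block-$j$ cycle sufficiently often and deleting the block-$k$ cycle keeps the run nonnegative, but produces a block $a^{n_k-\ell_k}b^{n_k}\notin L$. A one-block Dickson argument like yours can never reach this contradiction; it needs the surplus of one block to be transferred to a later one.
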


Our proof demonstrates a language which are recognised by \automata{DFA}{Cover-VASS} but not by Cover-VASS. In particular, we consider a variant of the Kleene star.

Recall that Kleene star of a language $L \sset \Sigma^*$ is $L^{*}= \set{\varepsilon} \cup L\cup LL\cup LLL \cup \cdots$.
We define the delimited Kleene star of $L$ to be $L^{\$*}= \$ \cup \$L\$ \cup \$L\$L\$ \cup \cdots$, where $\$$ is a symbol not in the alphabet of $L$, that is, we place a delimiter between each word in $L$.

To prove  \cref{thm:vassnotcontained} we consider the Cover-VASS language $L =\{a^n b^m : 0 \leq m \leq n\}$. 
In~\cref{claim:delkleenestar}, we prove that \automata{DFA}{Cover-VASS} can recognise the delimited Kleene star of any Cover-VASS language.
In particular, there exists a \automata{DFA}{Cover-VASS} that recognises $L^{\$*}$ (the delimited Kleene star of $L$).
However, in \cref{thm:cover-vass-kleene} we show that it cannot be recognised by any Cover-VASS.

In fact, as part of our proof we show that Cover-VASS are not closed under Kleene star, nor delimited Kleene star. 
Although the fact that Reach-VASS languages are not closed under Kleene star has been claimed by Hack in 1976~{\cite[Section 9.3.2]{Hack76}}, we are not aware of a direct proof in the literature for Cover-VASS. 
We prove it (\cref{thm:cover-vass-kleene}) using a similar using a proof strategy that is similar in nature to the proof of~{\cite[Lemma 3]{BosePT23}}.

\begin{restatable}{theorem}{covervasskleene}
\label{thm:cover-vass-kleene}
The class of languages accepted by Vector Addition Systems with States under coverability semantics (Cover-VASS) is not closed under Kleene star and not closed under delimited Kleene star.
\end{restatable}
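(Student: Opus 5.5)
We work with $L = \set{a^n b^m : 0 \leq m \leq n}$, which is accepted by a one-dimensional Cover-VASS: it increments on $a$, decrements on $b$, and uses a state change to forbid $a$ after $b$. We show that neither $L^{\$*}$ nor $L^*$ is a Cover-VASS language. Cover-VASS languages are closed under intersection with regular languages via the product construction. This reduces the problem to $L^{\$*} \cap (\$a^*b^*)^*\$$, and analogously for $L^* \cap (a^+b^*)^*$. In the undelimited case the factorisation of a word in $(a^+b^*)^*$ into blocks $a^{n_i}b^{m_i}$ is unique, so $L^*$ restricted to this regular set is exactly the set of words in which every block satisfies $m_i \le n_i$. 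The two cases are therefore essentially identical, and we focus on the delimited one.

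Suppose a $d$-dimensional Cover-VASS $V$ with state set $Q$ accepts $L^{\$*}$. We use a pumping argument in the style of \cite[Lemma 3]{BosePT23}, based on Dickson's lemma.

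\begin{enumerate}
\item Choose $k = \abs{Q}+1$ and consider the words $w_N = \$(a^N b^N\$)^{k}$ for large $N$. Fix an accepting run of $V$ on each $w_N$.
\item Inside the $i$-th block, read $a^N$ and record the configurations after each prefix $a^j$. By Dickson's lemma together with the pigeonhole principle on states, for $N$ large enough there are $j<j'$ with the same state and configuration $\vec{x}_j \le \vec{x}_{j'}$. Thus some $a^{\ell}$-cycle has nonnegative effect, and it can be pumped up while remaining executable.
\item Symmetrically, the $b$-part of each block must contain a $b$-cycle whose effect is not nonnegative. Otherwise the same cycle could be pumped on the $b$'s to produce an accepted word $a^N b^{N+c}$ inside a block, which is not in $L^{\$*}$.
\item The core difficulty is monotonicity. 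Coverability semantics is upward closed in the counters, so any surplus gained in one block can be carried forward. Pump the nonnegative $a$-cycle in block $1$ and compare states at the delimiters. With $k > \abs{Q}$ blocks, two delimiter positions $i < i'$ share a control state, and we fix this pair.
\item Build the counterexample as follows. In block $1$, pump the nonnegative $a$-cycle a further $M$ times, which only increases the counters. Then use the surplus to pump the $b$-cycle in a later block, between $i$ and $i'$, enough times that this block has more $b$'s than $a$'s. The run stays nonnegative and still covers the accepting state, so $V$ accepts a word outside $L^{\$*}$, a contradiction.
\end{enumerate}

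To make step 5 rigorous we need to show that the surplus from pumping $a$'s in block $1$ is large on exactly the coordinates decreased by the $b$-cycle. This is the main obstacle. The plan is to choose the sequence of $w_N$ so that we obtain a Karp–Miller-style $\omega$-coordinate argument.

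\begin{enumerate}
\item Pass to a subsequence in which all configurations at the delimiters are componentwise nondecreasing in $N$, again by Dickson's lemma.
\item Coordinates that are unbounded along this subsequence can absorb any finite deficit.
\item Coordinates that stay bounded force the $b$-cycle to have nonnegative effect on them. Otherwise the run on $a^Nb^N$ with growing $N$ would repeat the negative cycle more and more often and drive that coordinate below zero.
\end{enumerate}

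Combining these, the pumped $b$-cycle is executable given enough surplus. We therefore obtain an accepted block $a^Nb^{N+c}$ with $c>0$, giving the contradiction for $L^{\$*}$. The same argument applied to $L^* \cap (a^+b^*)^*$ gives the undelimited case.
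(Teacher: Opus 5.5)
\textbf{There is a genuine gap,} and it sits at the step you yourself flag as the main obstacle; the proposed Karp--Miller repair does not close it.

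Your item~3 claims that a coordinate which stays bounded at the delimiters must get a nonnegative effect from the $b$-cycle, since otherwise the run would be driven below zero. This overlooks that the $a$-phase of the same block can refill that coordinate. Already in the one-counter VASS for $L$, the counter is $0$ at both ends of a block $a^Nb^N$, yet every $b$-cycle decrements it. Coordinates of this kind are exactly the ones on which an extra $b$-cycle needs surplus. Nothing in your argument forces the nonnegative $a$-cycle you pump in block~$1$ to increase them: its support may be disjoint from the negative support of the later $b$-cycle, and then pumping it buys you nothing.

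Item~2 is also too quick. A coordinate that is large at a delimiter need not have slack later in the run, and an extra $b$-cycle lowers it for the whole remainder. The pigeonhole over control states at delimiters (your $k=\abs{Q}+1$) does not address any of this, and it is never actually used in the construction. So step~5 is unsupported.

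\textbf{The missing idea} (this is how the paper argues) is to compare two nonnegative $a$-cycles with each other, not an $a$-cycle with a $b$-cycle.
\begin{itemize}
\item Use blocks $a^{n_i}b^{n_i}$ with rapidly growing $n_i$, so that every $a^{n_i}$-portion contains a nonnegative cycle. With equal lengths $N$, Dickson's lemma only guarantees this in block~$1$, since later blocks may start from configurations of size $\Theta(N)$.
\item Take $\abs{Q}\cdot 2^d+1$ blocks. By pigeonhole, two blocks $j<k$ then have such cycles at the same state with the same support of their effects.
\item Equal supports give some $m$ with $m(\vec{y}_j-\vec{x}_j)\geq \vec{y}_k-\vec{x}_k$.
\item Pump the block-$j$ cycle $m$ extra times and delete one iteration of the block-$k$ cycle. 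Every configuration of the modified run dominates the corresponding original one, so it still covers. But block $k$ becomes $a^{n_k-\ell_k}b^{n_k}$, which is not in $L^*$.
\end{itemize}
This works because depumping a nonnegative cycle costs exactly its effect, which lives on its own support and can be prepaid by an earlier cycle with the same support. Pumping a $b$-cycle costs its negative part, whose support you have no control over.

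Finally, it is more economical to prove that $L^*$ is not a Cover-VASS language and then get the delimited case by relabelling $\$$-transitions as $\varepsilon$. Going the other way, as you do, forces you to redo the argument for $L^*$.
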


\noindent The key element to proving \cref{thm:cover-vass-kleene} (full proof in \cref{appendix:non-det-cover-vass}) is the following claim:

\begin{restatable}{numberedclaim}{claimLstarnotcovervass}\label{claim:Lstar-not-cover-vass}
The language $L^* = \{a^n b^m : 0 \leq m \leq n\}^*$ is not accepted by any Cover-VASS.
\end{restatable}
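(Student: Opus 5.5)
We argue by contradiction: suppose a $d$-dimensional Cover-VASS $V = (\Sigma, Q, T, q_0, F)$ with $\Sigma = \set{a,b}$ accepts $L^* = \set{a^n b^m : 0 \leq m \leq n}^*$. The idea is to feed $V$ words of the form
\[
w_N = a^{N} b^{N} a^{N} b^{N} \cdots,
\]
built from many blocks, each of which is tight, i.e.\ has exactly as many $b$'s as $a$'s. A Cover-VASS can only ``remember'' how many $a$'s it has seen by storing that amount in its counters. We will use a Dickson's-lemma/pumping argument to show that $V$ must leave some surplus behind after a tight block, and that this surplus can be exploited to accept a word containing a block with too many $b$'s.

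\textbf{Step 1 (pumping inside an $a$-block).} Consider an accepting run of $V$ on $a^N b^N$ for large $N$. Look at the configurations reached after each prefix $a^i$, $0 \le i \le N$. By Dickson's lemma (made quantitative via a Rackoff-style bound on bad sequences, so that the length bound depends only on $V$), for $N$ large there are indices $i < j$ such that the configuration after $a^i$ is $(p,\vec x)$ and the configuration after $a^j$ is $(p,\vec y)$ with $\vec y \geq \vec x$. The $a$-loop between them therefore has a nonnegative effect $\vec u = \vec y - \vec x \ge \vec 0$, so it can be iterated as often as we like.

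\textbf{Step 2 (the $b$-part cannot be pumped by a nonnegative loop).} Similarly, along the subsequent $b^N$ segment there would be $b$-loops. If some $b$-loop with nonnegative effect were enabled in the run on $a^N b^N$ (taken from a configuration reached after a prefix in $L^*$ with the last block $a^N b^k$), iterating it would let $V$ accept $a^N b^{N+c}$. That word is not in $L^*$: it consists of a single $a$-block followed by more $b$'s than $a$'s, so no factorisation into blocks from $L$ exists. Hence every $b$-loop of such a run must strictly decrease some counter. This part is standard.

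\textbf{Step 3 (carrying the surplus across blocks).} This is the main obstacle. Coverability semantics alone do not contradict the existence of an acceptor for a single block $L$, since $L$ itself is a Cover-VASS language. What breaks is the concatenation: after reading $a^N b^m$ with $m < N$, the leftover counters must still permit continuing with a new block. The plan is to consider words $(a^{N} b^{N})^{K}$ with $K$ larger than $|Q|$ and the relevant Dickson bounds. Among the configurations reached at the block boundaries, Dickson's lemma gives two boundaries $k < \ell$ whose configurations $c_k \le c_\ell$ have the same state. The segment between them then has nonnegative effect and can be repeated, which alone is harmless. The real trick is to compare the run on $(a^N b^N)^K$ with runs on words in which one earlier block is $a^N b^{m}$ for $m$ much smaller than $N$. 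The counters then hold a surplus, and because the system is monotone, that surplus could be spent to accept a later block $a^{N'} b^{N'+1}$, producing a word outside $L^*$. To make this rigorous, we would pick parameters so that the run on $a^{N} b^{0}\,\cdot\, a^{M} b^{M+1}$ is forced to succeed. Concretely, we would choose $M$ so that the configuration after $a^N$ dominates, on every counter used by the $b$-phase, the configuration the run would need. Establishing that domination, via the pumped $a$-loop of Step 1 applied with exponentially large multiplicity relative to a Rackoff bound on the $b$-phase, is the step we expect to be hardest. The first thing to try is the argument of~\cite[Lemma 3]{BosePT23}: bound by Rackoff's theorem how many counter units any accepting continuation ever needs, and then pump the nonnegative $a$-loop beyond that bound.

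Finally, the delimited variant $L^{\$*}$ follows by the same argument, with $\$$ letters inserted at the block boundaries.
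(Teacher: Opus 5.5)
Your proposal has a genuine gap. Step~3 is where the contradiction has to come from, and you leave it as a plan; as stated, the plan would not work, for two reasons.

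\textbf{The witness word is in the language.} The word you aim for, $a^N b^0\cdot a^M b^{M+1}$, is literally $a^{N+M}b^{M+1}$. Since nothing separates the two blocks, they merge, and the result lies in $L\subseteq L^*$. So an accepting run on it contradicts nothing. A word of this shape lies outside $L^*$ only if some \emph{maximal} factor $a^p b^r$ has $r>p$, so the earlier block must end with at least one $b$.

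\textbf{The domination claim is unjustified.} You want the pumped configuration to dominate ``on every counter used by the $b$-phase'' of a later block. Pumping a nonnegative $a$-loop with effect $\vec u$ raises only the counters in $\mathrm{supp}(\vec u)$. A later block may draw on entirely different counters: nothing prevents $V$ from routing odd and even blocks through different counters and remembering the parity in its control state. A Rackoff-type bound on what a continuation needs cannot repair this, because no amount of pumping makes counters outside the support grow.

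The missing idea is to match supports across many blocks:
\begin{enumerate}
\item Take one accepting run on $(a^{n_1}b^{n_1})\cdots(a^{n_K}b^{n_K})$ with $K>2^d$ (the paper takes $K=|Q|\cdot 2^d+1$) and each $n_i$ large.
\item Inside each $a^{n_i}$, extract a cycle from $(q_i,\vec x_i)$ to $(q_i,\vec y_i)$ with effect $\Delta_i=\vec y_i-\vec x_i\geq\vec 0$, reading $a^{\ell_i}$ with $\ell_i>0$. This is your Step~1.
\item By pigeonhole, find blocks $j<k$ with $\mathrm{supp}(\Delta_j)=\mathrm{supp}(\Delta_k)$, and choose $m$ with $m\Delta_j\geq\Delta_k$.
\item Iterate the block-$j$ cycle $m$ extra times and delete one iteration of the block-$k$ cycle.
\end{enumerate}
By monotonicity the run stays valid. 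At the deletion point the configuration is $\vec x_k+m\Delta_j\geq\vec y_k$, so the original suffix still fires and reaches a final state. Yet block $k$ has become $a^{n_k-\ell_k}b^{n_k}$, so the accepted word is outside $L^*$.

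This argument moves $a$'s from a later block into an earlier one. It needs neither your Step~2 (the analysis of $b$-loops), nor any Rackoff bound, nor reasoning about surplus left behind by short $b$-phases.
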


Observe, being closed under delimited Kleene star would entail closure under Kleene star, hence $L^{\$*}$ can also not recognised by any Cover-VASS.

\begin{lemma}\label{claim:delkleenestar}
Given a Cover-VASS language $L$, \automata{DFA}{Cover-VASS} recognise the delimited Kleene star language $L^{\$*}$.
\end{lemma}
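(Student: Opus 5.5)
Given a Cover-VASS $V = (\Sigma, Q, T, q_0, F)$ with $\lang{V} = L$, with $\$ \notin \Sigma$, we build a \automata{DFA}{Cover-VASS} over $\Sigma \cup \set{\$}$ recognising $L^{\$*}$. The key observation is that the guard reads the \emph{whole} history, but we can make it ignore everything up to the most recent $\$$. To do this we use the guard's own nondeterminism to guess where the last block starts.

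\textbf{The guard.} Let $G$ be the Cover-VASS obtained from $V$ as follows.
\begin{itemize}
\item Add a fresh initial state $g_0$ with zero-effect self-loops on every letter of $\Sigma \cup \set{\$}$.
\item Add a zero-effect transition $(g_0, \$, \vec{0}, q_0)$ into a copy of $V$.
\item In that copy, keep all transitions of $V$ on letters of $\Sigma$, and add no $\$$-transitions.
\item Take the accepting set to be $F$.
\end{itemize}
A run of $G$ accepting a history $u$ must read some $\$$ from $g_0$ and afterwards read only letters of $\Sigma$. Hence that $\$$ is the last $\$$ in $u$, and the remainder $v$ must be accepted by $V$ under coverability; since counters start at $\vec{0}$ after the jump, this is exactly $v \in L$. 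So
\[
\lang{G} = \set{u\$v : u \in (\Sigma\cup\set{\$})^*,\ v \in L}.
\]
If $\varepsilon$-transitions are allowed in VASS, the self-loop phase is routine; otherwise the construction above already avoids them.

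\textbf{The outer DFA.} The underlying DFA $U$ has states $s_0$ (initial), $s_1$ (accepting) and a sink.
\begin{itemize}
\item $(s_0, \$, s_1)$ is unguarded.
\item $(s_1, \sigma, s_1)$ for each $\sigma \in \Sigma$ is unguarded.
\item $(s_1, \$, s_1)$ is guarded by $G$.
\end{itemize}
$U$ is deterministic. Its runs read words of the form $\$ w_1 \$ w_2 \$ \cdots \$ w_k$ with $w_i \in \Sigma^*$; by the definition of runs, the guard checks the prefix read before the transition. A word is accepted if and only if it ends in $s_1$, which requires a final $\$$, so we need the last letter to be $\$$. Accepting in $s_1$ would also allow a trailing unchecked block, so we instead split $s_1$: after a $\$$ the DFA is in an accepting state $s_\$$, and after a $\Sigma$-letter it is in a non-accepting state $s_\Sigma$. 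Concretely:
\begin{itemize}
\item $s_0 \xrightarrow{\$} s_\$$ is unguarded.
\item $s_\$ \xrightarrow{\sigma} s_\Sigma$ and $s_\Sigma \xrightarrow{\sigma} s_\Sigma$ for $\sigma \in \Sigma$ are unguarded.
\item $s_\Sigma \xrightarrow{\$} s_\$$ and $s_\$ \xrightarrow{\$} s_\$$ are both guarded by $G$.
\end{itemize}
The empty block must be accepted only if $\varepsilon \in L$. Guarding $s_\$ \xrightarrow{\$} s_\$$ with $G$ handles this, since $G$ then checks the history $\cdots\$$ with an empty last block.

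\textbf{Correctness.} An accepted word has the form $\$ w_1 \$ \cdots \$ w_k \$$. Each $\$$ after the first is taken with history $\$ w_1 \cdots \$ w_i$, and by the characterisation of $\lang{G}$ the guard accepts exactly when $w_i \in L$. Hence the accepted words are exactly $\$$ and $\$L\$L\$\cdots L\$$, which is $L^{\$*}$.

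The only delicate step is the guard: we must verify that nondeterminism combined with the absence of $\$$-transitions inside the copy of $V$ pins the jump to the last delimiter. We must also verify that coverability acceptance is unaffected by the prefix, which holds because the prefix phase has zero effect on the counters.
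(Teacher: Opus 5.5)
Your construction is correct and essentially matches the paper's. Your $s_\$$ and $s_\Sigma$ are its $q_1$ and $q_2$, and your guard $G$ is exactly its $G_\$$: a fresh initial state looping on $\Sigma\cup\{\$\}$ that jumps on $\$$ into $V$, where $V$'s lack of $\$$-moves pins the jump to the last delimiter. Your extra guarded loop $s_\$ \xrightarrow{\$} s_\$$ is a real improvement, because the paper's $U$ has no $\$$-move out of $q_1$ and so misses words such as $\$\$$ whenever $\varepsilon\in L$ (which holds for $L=\{a^nb^m : 0\le m\le n\}$); also make sure the copy of $V$ inside $G$ keeps any $\varepsilon$-moves of $V$, not only its $\Sigma$-transitions.
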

\begin{proof}
Let $L$ over $\Sigma$ be recognised by the Cover-VASS $V$. We define a \automata{DFA}{Cover-VASS} $A = (\Sigma\cup\{\$\}, U, G)$ which recognises  $L^{\$*}$. The set of guards contains just one guard language $G = \{G_\$\}$. 
We define $U$ and $G_\$$, as depicted in \cref{fig:kleene-star}.

Let $U$ have three states $q_0$, $q_1$, and $q_2$, with $q_0$ initial and $q_1$ accepting. 
The transitions of $U$ are:
\vspace{-\topsep}\begin{itemize}
	\item $q_0 \xrightarrow{\$} q_1$;
	\item  $q_1 \xrightarrow{\sigma} q_2$ and $q_2 \xrightarrow{\sigma} q_2$ for every $\sigma \in \Sigma$, and 
	\item $q_2 \xrightarrow{\$: G_\$} q_1$.
\end{itemize}

\begin{figure*}[t!]
    \centering
    \begin{subfigure}[t]{0.49\textwidth}
	    \centering\resizebox{0.99\textwidth}{!}{\begin{tikzpicture}[>=stealth, node distance=3cm, semithick]
		    \node[state] (q0) at (0,0) {$q_0$};
		    \node[state, accepting] (q1) at (2,0) {$q_1$};
		    \node[state] (q2) at (5,0) {$q_2$};
		    \draw[transition] (-0.7,0) -- (q0);
		    \draw[transition] (q0) -- node[above] {$\$$} (q1);
		    \draw[transition] (q1) edge[bend left=30] node[above] {$\sigma \in \Sigma$} (q2);
		    \draw[transition] (q2) edge[loop above] node {$\sigma \in \Sigma$} (q2);
		    \draw[transition, Red] (q2) to[bend left=30] node[above] {\textcolor{Black}{$\$$}} node[below] {$ G_\$$} (q1);
		\end{tikzpicture}}
		\caption{Automaton $U$}
    \end{subfigure}%
    ~ 
    \begin{subfigure}[t]{0.49\textwidth}
        \centering
        \resizebox{0.99\textwidth}{!}{
        \begin{tikzpicture}[>=stealth, semithick]
		    \fill[rounded corners, Red, opacity = 0.2] (2,0.2) rectangle (6,2.8);
		    \draw[thick, rounded corners, Red] (2,0.2) rectangle (6,2.8);
		    \node[Red] at (4, 1.5) {\Large $V$};

		    \node[state,Red,fill=white] (p1) at (2, 1.5) {$p_1$};

		    \node[state,Red,fill=white] (p0) at (0, 1.5) {$p_0$};
		    
			\draw[transition,Red] (-0.7,1.5) -- (p0);

		    \draw[transition,Red] (p0) edge[loop above] node[above] {$\sigma \in \Sigma \cup \{\$\}$} (p0);
		    \draw[transition,Red] (p0) -- node[above] {$\$$} (p1);
		\end{tikzpicture}}
		\caption{Cover-VASS $G_\$$}
    \end{subfigure}
    \caption{The \automata{DFA}{Cover-VASS} recognising the delimited Kleene star.}
    \label{fig:kleene-star}
\end{figure*}
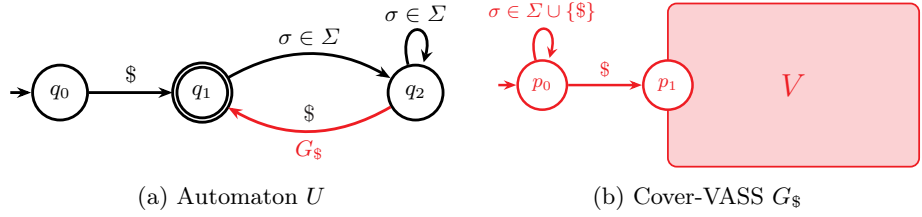

Let $p_1$ be the initial state of $V$. We define $G_\$$ to be $V$ with a new initial state $p_0$ ($p_1$ is no longer initial), plus the following transitions, each having zero effect on the counters:
\vspace{-\topsep}\begin{itemize}
	\item $p_0 \xrightarrow{\sigma} p_0$ for every $\sigma \in \Sigma\cup\set{\$}$, and
	\item $p_0 \xrightarrow{\$} p_1$.
\end{itemize}

We can observe that $G_{\$}$ accepts exactly words of the form $w_1 \$ w_2 \$ \cdots \$ w_k$ such that $k \geq 1$ and $w_k \in L$, that is the last block over $\Sigma$ is in $L$. 
Note that $w_1$ may be empty.

Finally, we can conclude that $A$ accepts $\$$ (by the run $q_0 \$q_1$), and $L^{\$*} = \{\$w_1\$\cdots\$w_k\$ \mid k\ge1 \text{ and for all i}\le k: w_i \in L \}$, as upon reading each $\$$ on a transition $(q_2,\$,q_1)$, the proceeding block must be in $L$, and so every $w_i\in L$.\qed
\end{proof}

\cref{thm:cover-vass-kleene,claim:delkleenestar} together prove~\cref{thm:vassnotcontained}.

\begin{credits}
\section*{\ackname}
We gratefully acknowledge the TCS@Liverpool Open Problems Day for initiating the collaboration. 
This work is supported by the EPSRC project EP/X042596/1 and by the ERC grant INFSYS, agreement no. 950398.
We would like to thank Georg Zetzsche for a discussion about the closure of VASS languages under Kleene star (\cref{subsec:non-deterministic-cover-vass}).
\end{credits}

\bibliographystyle{splncs04}

\appendix

\section{Missing Proofs from Section~\ref{sec:expressivity}}\label{appendix:expressivity}

\subsection{The Expressivity of Regular Guards}
\label{appendix:regularexpressivity}
\thmregexpsize*
Formally, let $A = (\Sigma, U, \Gg)$ be a given \automata{NFA}{NFA} where $U = (\Sigma, Q, \delta, q_0, F)$ is the underlying NFA with transitions $\delta = \set{t_1, \ldots, t_m}$ and $\Gg = \set{L_1, \ldots, L_m}$ are the transition guards.
These transition guards are specified by NFAs $G_1, \ldots, G_m$ (where $\lang{G_i} = L_i$).

First, we shall convert the NFA guards into DFAs.
Suppose that $G_i$ has $n_i$ states.
We shall use the standard powerset construction on the NFA guard $G_i$ to produce a DFA $D_i = (\Sigma, Q_i, \delta_i, q_{i,0}, F_i)$ such that, for every $i \in \set{1, \ldots, m}$, $\lang{D_i} = \lang{G_i}$ and $\abs{Q_i} \leq 2^{n_i}$. 
In other words, the DFA guards are at most exponentially larger than the original NFA guards.
Let $A'$ be the \automata{NFA}{DFA} that is $(\Sigma, U, \Gg)$ where $\Gg$ are the same guards as for $A$, except they are instead specified by the DFAs $D_1, \ldots D_m$.
Since, for every $i \in \set{1, \ldots, m}$, $\lang{D_i} = \lang{G_i}$, it is true that $\lang{A'} = \lang{A}$.

Now, we will simultaneously use the standard powerset construction on the underlying automaton $U$ and take the product of the $m$ DFA guards $D_1, \ldots, D_m$ to convert the \automata{NFA}{DFA} $A'$ into an exponential size DFA $D$.
Each state in $D$ is an $(m+1)$-tuple
\begin{equation*}\label{eq:dfa-states}
	(Q', q_1, \ldots, q_m) \in \powerset{Q} \times Q_1 \times \ldots \times Q_m
\end{equation*}
consisting of the current possible states of the underlying NFA $Q' \sset Q$ and the current states of each guard DFA $q_i \in Q_i$. 

Upon reading $\sigma \in \Sigma$, the DFA $D$ can transition from the state $(P', p_1, \ldots, p_m)$ to the state $(Q', q_1, \ldots, q_m)$ if
\begin{itemize}
	\item for every $i \in \set{1, \ldots, m}$, the $i$-th guard transitions from $p_i$ to $q_i$, i.e.\ $(p_i, \sigma, q_i) \in \delta_i$;
	\item for every $q' \in Q'$, there exists a transition in the underlying NFA $t_j = (p', \sigma, q') \in \delta$ such that $p' \in P'$ and the $j$-th guard was in an accepting state, i.e.\ $p_j \in F_j$; and
	\item $Q'$ is a maximal subset of states, i.e.\ there does not exist a state $q \in Q \setminus Q'$ and a transition $t_j = (p', \sigma, q) \in \delta$ such that $p' \in P'$ and $q_j \in F_j$.
\end{itemize}
If these conditions are satisfied, we add the following transition to the DFA $D$:
\begin{equation*}
	(P', p_1, \ldots, p_m) \xrightarrow{\sigma} (Q', q_1, \ldots, q_m).
\end{equation*}

Let $E \sset Q$ be the set of states in $U$ that can be reached from $q_0$ via \hbox{$\varepsilon$-transitions}.
The initial state of $D$ is therefore $(E, q_{1,0}, \ldots, q_{m,0})$.
The accepting states of $D$ are the states such that the underlying automaton is in an accepting state; $(Q', q_1, \ldots, q_m)$ where $Q' \cap F \neq \emptyset$ (and there are no restrictions on the guards' current states $q_i$).

\cref{thm:exponential_size} follows from Claims~\ref{clm:language-equiv-dfa} and~\ref{clm:exp-size-dfa}.

\begin{restatable}{numberedclaim}{equivdfa}\label{clm:language-equiv-dfa}
		The DFA recognises the same language, $\lang{D} = \lang{A}$.
\end{restatable}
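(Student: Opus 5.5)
The natural approach is an invariant relating runs of $D$ to runs of $A'$, proved by induction on the length of the input word. Since $\lang{A'} = \lang{A}$ was already argued via the guard determinisation, it suffices to prove $\lang{D} = \lang{A'}$. For a word $w = a_1\cdots a_n$, let $(P_n, p_{1,n}, \ldots, p_{m,n})$ denote the state reached by $D$ after reading $w$; this state is unique because $D$ is deterministic and total by the maximality condition. The invariant I will establish is as follows.
\begin{enumerate}[(a)]
	\item For each $i$, $p_{i,n}$ is the state reached by $D_i$ on $w$, so $p_{i,n}\in F_i$ iff $w \in L_i$.
	\item $P_n$ is exactly the set of states $q\in Q$ such that $A'$ has a run on $w$ ending in $q$.
\end{enumerate}

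Part (a) is immediate by induction, since the guard components are updated by the plain product construction and each $D_i$ is complete.

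For part (b), the base case $n=0$ is handled by the initial state $E$. In the inductive step, I take $q' \in P_{n+1}$. By construction there is a transition $t_j=(p',a_{n+1},q')$ with $p'\in P_n$ and $p_{j,n}\in F_j$. The induction hypothesis gives a run of $A'$ on $w$ ending in $p'$, and part (a) gives $w\in L_j$, so the run extends legally. Conversely, any run of $A'$ on $wa_{n+1}$ ending in $q'$ has a last transition $t_j$ whose guard holds on $w$ and whose source lies in $P_n$. Maximality of $P_{n+1}$ then forces $q'\in P_{n+1}$.

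With the invariant in hand, acceptance follows at once: $w\in\lang{D}$ iff $P_n\cap F\neq\emptyset$ iff some run of $A'$ on $w$ ends in $F$ iff $w\in\lang{A'}$.

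The main obstacle is the $\varepsilon$-transitions of $U$. The construction as written only closes under $\varepsilon$ at the start, and guarded $\varepsilon$-transitions test the current prefix. I would handle this by interpreting every successor set as its guarded $\varepsilon$-closure. After reading $w$, I would repeatedly add targets of $\varepsilon$-transitions $t_j$ whose source is already in the set and with $p_{j,n}\in F_j$. This is well defined because the guard states do not change along $\varepsilon$-moves, so the closure is a fixpoint computed with the guard tuple fixed. The induction then goes through with (b) phrased for runs that may end with $\varepsilon$-steps. I would also double-check that the maximality condition in the paper refers to the guard states before the letter is read, i.e. $p_j$ rather than $q_j$, and adopt that reading, since that is what matches the semantics $a_1\cdots a_{i-1}\in L_t$.
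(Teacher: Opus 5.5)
Your proof is correct and follows essentially the paper's argument: the paper proves $\lang{D}\sset\lang{A'}$ and $\lang{A'}\sset\lang{D}$ separately by tracking runs step by step, whereas you fold both into the exact invariant that $P_n$ is the set of states $A'$ can reach on $w$, which also makes rigorous the paper's loose ``for any choice'' backward selection in the first inclusion (and $p_j\in F_j$ is indeed the intended reading of the maximality condition). Your $\varepsilon$ concern is genuine --- the paper's $D$ takes only an unguarded $\varepsilon$-closure, and only at the start, which already fails on the automaton of \cref{pro:intersection-to-NFA} --- and the guarded closure with the guard tuple held fixed is the right repair, but it must be applied to the initial set as well, so your base case should use the guarded $\varepsilon$-closure of $\set{q_0}$ under the initial guard states rather than the paper's $E$, which ignores guards.
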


\begin{proof}
	Recall that $\lang{A} = \lang{A'}$, so we shall prove that $\lang{D} = \lang{A'}$.

	First, we will prove that $\lang{D} \sset \lang{A'}$.
	Let $w = \sigma_1 \sigma_2 \cdots \sigma_n \in \lang{D}$ be any word and suppose
	\begin{multline*}
		(E, q_{1,0}, \ldots, q_{m,0}) = (Q^{(0)}, q^{(0)}_1, \ldots, q^{(0)}_m) 
		\xrightarrow{\sigma_1} (Q^{(1)}, q^{(1)}_1, \ldots, q^{(1)}_m) 
		\xrightarrow{\sigma_2} \cdots \\
		\cdots \xrightarrow{\sigma_n} (Q^{(n)}, q^{(n)}_1, \ldots, q^{(n)}_m) 
	\end{multline*}
	is the run that witnesses $w \in \lang{D}$.
	This means that the sequence of configurations 
	\begin{multline*}
		(q_0, q_{1,0}, \ldots, q_{m,0}) = (q^{(0)}, q^{(0)}_1, \ldots, q^{(0)}_m) 
		\xrightarrow{\sigma_1} (q^{(1)}, q^{(1)}_1, \ldots, q^{(1)}_m) 
		\xrightarrow{\sigma_2} \cdots \\
		\cdots \xrightarrow{\sigma_n} (q^{(n)}, q^{(n)}_1, \ldots, q^{(n)}_m) 
	\end{multline*}
	is a run in $A'$ for any choice of $q^{(1)}, \ldots, q^{(n)}$ such that, for every $i \in \set{1, \ldots, n}$, $(q^{(i-1)}, \sigma_{i-1}, q^{(i)}) = t_j \in \delta$, $q^{(i-1)}_j \in F_j$, and $q^{(i)} \in Q^{(i)}$.
	At the final step, since $(Q^{(n)}, q^{(n)}_1, \ldots, q^{(n)}_m)$ is an accepting state of $D$, we know that $Q^{(n)} \cap F \neq \emptyset$, so one can choose $q^{(n)} \in Q^{(n)}$ such that $q^{(n)} \in F$.
	Hence $w \in \lang{A'}$ as well.

	Second, we will prove that $\lang{A'} \sset \lang{D}$.
	Consider any word $w = \sigma_1 \sigma_2 \cdots \sigma_n \in A'$ and suppose 
	\begin{equation*} 
		C_0 
		\xrightarrow{\sigma_1} C_1
		\xrightarrow{\sigma_2} C_2 
		\xrightarrow{\sigma_3} \cdots
		\xrightarrow{\sigma_n} C_n
	\end{equation*}
	is a run in $A'$ where $C_i \in Q \times Q_1 \times \ldots \times Q_m$ is the configuration of $A'$ reached after reading the prefix $\sigma_1 \cdots \sigma_i$; let $C_i = (q^{(i)}, q^{(i)}_1, \ldots, q^{(i)}_m)$.

	We will now find a run in $D$ that witnesses $w \in \lang{D}$.
	To do this, we will inductively identify a state $(Q^{(i+1)}, q^{(i+1)}_1, \ldots, q^{(i+1)}_m)$ of $D$ given the previous state $(Q^{(i)}, q^{(i)}_1, \ldots, q^{(i)}_m)$ of $D$ and the next configuration $C_{i+1}$ of the run in $A'$.
	We will also prove that, for every $i \in \set{0, 1, \ldots, n}$, $q^{(i)} \in Q^{(i)}$.
	
	For the base case, the initial configuration of $D$ is $(E, q^{(0)}_1, \ldots, q^{(0)}_m)$ where $E \sset Q$ is the maximal set of states that can be reached from $q_0$ in $U$ via \hbox{$\varepsilon$-transitions}.
	Observe that $q^{(0)} \in E$, as $q^{(0)} = q_0$ is the starting state of $U$.
	Note that $q^{(0)}_j = q_{j,0}$ are the starting states of the guards. 

	For the inductive step, let $Q^{(i)} \sset Q$ be the maximal set of states that can be reached (in $A'$) after reading the prefix $\sigma_1 \cdots \sigma_i$ and assume that $q^{(i)} \in Q^{(i)}$.
	We shall now define $Q^{(i+1)} \sset Q$ to be the maximal set of states such that, for every state $q' \in Q^{(i+1)}$, there exists a transition $t_j = (p', \sigma_{i+1}, q') \in \delta$ such that $p' \in Q^{(i)}$ and $q^{(i)}_j \in F_j$ (the $j$-th guard was in an accepting state).
	Now, to see why $q^{(i+1)} \in Q^{(i+1)}$, observe that $C_i \xrightarrow{\sigma_{i+1}} C_{i+1}$ tell us that there is a transition $t_j = (q^{(i)}, \sigma_{i+1}, q^{(i+1)}) \in \delta$, where $q^{(i)} \in Q^{(i)}$ and $q^{(i)}_j \in F_j$. 

	Overall, we have identified the run
	\begin{equation*}
		(E, q^{(0)}_1, \ldots, q^{(0)}_m)
		\xrightarrow{\sigma_1} (Q^{(1)}, q^{(1)}_1, \ldots, q^{(1)}_m) 
		\xrightarrow{\sigma_2} \cdots
		\xrightarrow{\sigma_n} (Q^{(n)}, q^{(n)}_1, \ldots, q^{(n)}_m)
	\end{equation*}
	in $D$ such that, for every $i \in \set{0, 1, \ldots, n}$, $q^{(i)} \in Q^{(i)}$.
	Since $w \in \lang{A'}$, we know that $q^{(n)} \in F$, so $Q^{n} \cap F \neq \emptyset$.
	This means that $(Q^{(n)}, q^{(n)}_1, \ldots, q^{(n)}_m)$ is an accepting state of $D$, so $w \in \lang{D}$ as well.
\qed\end{proof}

\begin{numberedclaim}\label{clm:exp-size-dfa}
	The DFA $D$ has at most $2^{\abs{Q} + \sum_{i=1}^m\abs{n_i}}$ states.
\end{numberedclaim}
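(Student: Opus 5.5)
The claim is a counting argument over the state space of $D$ as defined above. Every state of $D$ is a tuple $(Q', q_1, \ldots, q_m) \in \powerset{Q} \times Q_1 \times \ldots \times Q_m$. So the number of states of $D$ is at most the cardinality of this product set, namely
\begin{equation*}
	\abs{\powerset{Q}} \cdot \prod_{i=1}^m \abs{Q_i} = 2^{\abs{Q}} \cdot \prod_{i=1}^m \abs{Q_i}.
\end{equation*}
This bound holds even if we only count states reachable from the initial state $(E, q_{1,0}, \ldots, q_{m,0})$, so no reachability analysis is needed.

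Next I bound each factor $\abs{Q_i}$. Each $D_i$ is obtained from the NFA guard $G_i$, which has $n_i$ states, by the standard powerset construction. Its state set is therefore a subset of $\powerset{\text{states of } G_i}$, giving $\abs{Q_i} \leq 2^{n_i}$, as already noted when the $D_i$ were introduced. If the powerset construction is taken to include $\varepsilon$-closure and a possibly empty sink set, the bound still holds, because the empty set is itself an element of the powerset.

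Substituting gives
\begin{equation*}
	2^{\abs{Q}} \cdot \prod_{i=1}^m 2^{n_i} = 2^{\abs{Q} + \sum_{i=1}^m n_i},
\end{equation*}
which is the claimed bound. Here $\abs{n_i} = n_i$, since the $n_i$ are natural numbers.

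There is no real obstacle. The only point needing care is to check that the transition rule of $D$ introduces no auxiliary components. It does not: $D$ is deterministic with transitions defined directly between such tuples, and the maximality condition on $Q'$ makes the successor unique. Hence the state set really is contained in the product above.
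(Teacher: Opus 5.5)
Your proof is correct and is essentially the paper's argument: you bound the states of $D$ by $\abs{\powerset{Q}}\cdot\prod_{i=1}^m\abs{Q_i} = 2^{\abs{Q}}\prod_{i=1}^m\abs{Q_i}$, then apply $\abs{Q_i}\le 2^{n_i}$ from the powerset construction of the guard DFAs. Your extra remarks (reachable states only, the empty set, uniqueness of successors) are harmless but not needed for the count.
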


\begin{proof}
	Let $n$ denote the number of states in~$A$.
	Recall that $n_i$ was the number of states in the original NFA guards and that $\abs{Q_i} \leq 2^{n_i}$.
	Each state in $D$ is an $(m+1)$ tuple in $\powerset{Q} \times Q_1 \times \ldots \times Q_m$. The number of states in $D$ is therefore at most
	\(
		2^{\abs{Q}} \cdot \abs{Q_1} \cdot \ldots \cdot \abs{Q_m} 
		\leq 2^{\abs{Q}} \cdot 2^{n_1} \cdot \ldots \cdot 2^{n_m}
		= 2^{\abs{Q} + \sum_{i=1}^m\abs{n_i}}
	\).
	\qed
\end{proof}

\subsection{Computing Intersections}

\intersectiondfa*
\label{appendix:dfaintersection}
\begin{proof}
	This proof follows a similar construction to the proof of~\cref{pro:intersection-to-NFA}. 
	The transitions from $q_{i-1}$ to $q_i$ are now $\$$-transitions.
	One must also slightly modify the guards.
	Suppose at the initial state, the word $w \in \Sigma^*$ is read and the first guard has been passed.
	Now the automaton has read the word $w\,\$$. 
	This means that the second guard has to be modified so that the guard language is $\set{w\,\$ : w \in \lang{G_1}}$; this can easily be achieved by adding a new final state to $G_1$ that can be reached by reading a $\$$ from any of the original final states (which are no longer final states in the modified automaton).
	This idea needs to be repeated for each guard.
	The $i$-th guard $G_i$ is modified so that the guard language is $\set{w\,\$^i : w \in \lang{G_i}}$; this can be easily achieved by adding $i$ new transitions.
	The size of the \automata{DFA}{X} $A$ is $k+2 + \sum_{i=1}^k \abs{G_i} + i = \Oh(k^2) + \sum_{i=1}^k \abs{G_i}$.

	It is straightforward to see that, for every $w \in \Sigma^*$, $w\,\$^k$ is accepted by $A$ if and only if $w \in \lang{G_i}$ for every $1 \leq i \leq k$.
\qed\end{proof}
 
\section{Missing Proofs from Section~\ref{sec:succinctness}}\label{appendix:succinctness}

\claimexpna*

\begin{proof}
	Let $N$ be an NFA such that $\lang{N} = \set{w \$^k : w \in \bigcap_{i=1}^k L(C_{p_i})}$.
	Let $\ell = p_1 p_2 \cdots p_k$.
	It is true that $\lang{N} = \set{a^{\ell n} \$^k: n \in \NN}$.
	Now, assume for the sake of contradiction that $N$ has less than $\ell$ states.
	Given that $\lang{N}$ contains words with arbitrarily many `$a$'s, there exists a path reading $a^x$, for some $x \geq 0$, from the initial state to a state $q$; there is also a simple cycle at state $q$ reading $a^y$, for some $y \geq 1$; and there is a path from $q$ to a final state reading $a^z\$^k$, for some $z \geq 0$.
	It must be true that $a^x\,a^z\$^k \in \lang{N}$ (take the two paths that end and start at $q$) and $a^x\,a^y\,a^z\$^k \in \lang{N}$ (take the two paths that end and start at $q$ and one iteration of the cycle at $q$).
	This also means that, for some $n \in \NN$, $x+z = \ell n$.
	Since $a^y$ is read by a simple cycle and we have assumed that $N$ has less than $\ell$ states, then it must be true that $1 \leq y < \ell$.
	Thus $\ell n = x+z < x+y+z$ and $x+y+z = \ell n + y < \ell n +\ell = \ell(n+1)$, so $\ell n < x + y + z < \ell(n+1)$.
	We therefore arrive at a contradiction because $a^{x+y+z}\$^k \in \lang{N}$ and $x+y+z$ is not a multiple of $\ell$.

	We conclude that any NFA $N$ such that $\lang{N} = \set{w \$^k : w \in \bigcap_{i=1}^k L(C_{p_i})}$ has at least $p_1 p_2 \cdots p_k > 2^k$ many states, and so its size is at least $2^k$.
\qed\end{proof}

\succinctness*
\begin{proof}
	Let $k$ be an arbitrary natural number and let $p_1, \ldots, p_k$ be the first $k$ primes.
	Consider the \automata{DFA}{DFA} $A_k$ that is obtained by applying~\cref{pro:intersection-to-DFA} to the simple cycle DFAs $C_{p_1}, \ldots, C_{p_k}$.
	This \automata{DFA}{DFA} $A$ recognises the language $\set{w \$^k : w \in \bigcap_{i=1}^k L(C_{p_i})}$ and has size $\Oh(k^2) + \sum_{i=1}^k \abs{C_{p_i}} \leq Ck^2\log(k)$ for some constant $C \geq 1$.
	The last inequality follows from the prime number theorem (see, e.g., \cite{A1998}).

	However, by Claim~\ref{clm:exp-nfa}, any NFA that recognises the language $\set{w \$^k : w \in \bigcap_{i=1}^k L(C_{p_i})}$ has size at least $2^k$.
	Since $\abs{A_k} \leq Ck^2\log(k) \leq Ck^3$ and $\abs{N} \geq 2^k$,
	we deduce that $\frac{\abs{A_k}^{\frac{1}{3}}}{C} \geq k$, so we conclude that $N$ is exponentially larger than $A_k$, $\abs{N} \geq 2^{\frac{\abs{A_k}^{\frac{1}{3}}}{C}}$.
\qed\end{proof}

\nestingnomore*

\begin{proof}
	Proof by induction on the nesting depth $k$.
	The base case, $k = 1$, is exactly~\cref{thm:exponential_size} (see, in particular, Claim~\ref{clm:exp-size-dfa}).

	For the inductive step, let $k > 1$. 
	Assume that for all $A \in \Nn_k$ with $n$ states in total, there exists a DFA $D$ such that $\lang{D} = \lang{A}$ and the number of states in $D$ is at most $2^n$.
	Now, let $A' \in \Nn_{k+1}$.
	Suppose that $A'$ has $n'$ many states in total.
	We would like to prove that there exists a DFA $D'$ such that $\lang{D'} = \lang{A'}$ and the number of states in $D'$ is at most $2^{n'}$.

	Since $A' \in \Nn_{k+1}$, we know that $A'$ consists of an underlying NFA $U$ with guards $G_1, \ldots, G_m \in \Nn_k$.
	Suppose that $U$ has $u$ many states and suppose that, for every $i \in \set{1, \ldots, m}$, $G_i$ has $g_i$ many states in total.
	We know that $n' = u + g_1 + \ldots + g_m$.
	By the inductive assumption, for every $i \in \set{1, \ldots, m}$, we know that there exists a DFA $D_i$ such that $\lang{D_i} = \lang{G_i}$ and the number of states in $D_i$ is at most $2^{g_i}$.
	Let $d_i \leq 2^{g_i}$ be the number of states in $D_i$.
	Now, we have constructed an intermediate \automata{NFA}{DFA}; we will use this to construct a single DFA in the same way as was used in the proof of~\cref{thm:exponential_size} (see the text defining $A$ on Page~\pageref{eq:dfa-states}).
	Accordingly, we obtain a DFA $D'$ with states that are $(m+1)$ tuples $(Q', q_1, \ldots, q_m)$ where $Q'$ is a subset of states of the underlying NFA $U$ and $q_1, \ldots, q_m$ are states of each of the guard DFAs $D_1, \ldots, D_m$, respectively. 
	Suppose the DFA $D'$ has $d'$ many states; then
	\begin{equation*}
		d' \leq 2^u \cdot d_1 \cdot \ldots \cdot d_m \leq 2^u \cdot 2^{g_1} \cdot \ldots \cdot 2^{g_m} = 2^{u + g_1 + \ldots + g_m} = 2^{n'}.
	\end{equation*}
	Lastly, the DFA $D'$ has the same language as the intermediate \automata{NFA}{DFA} (See Claim~\ref{clm:language-equiv-dfa} for details), so $\lang{D'} = \lang{A'}$.
\qed\end{proof}
 
\section{Missing Proofs from Section~\ref{sec:games}}\label{appendix:games}

\gamegraph*
\begin{proof}
	Consider an HCS reachable game specified by an \automata{NFA}{DFA} $A = (\Sigma, U, \Gg)$ where $U = (\Sigma, Q, \delta, q_0, F)$ and a predicate $\owner: Q \to \set{0, 1}$.
	Note that the winning states, for Player~0, are $F \sset Q$.
	Suppose that $\delta = \set{t_1, \ldots, t_m}$ and the guards $\Gg = \set{L_1, \ldots, L_m}$ are specified by DFAs $G_i = (\Sigma, Q_i, \delta_i, q_{0,i}, F_i)$.
	We will construct an exponentially sized graph with an equivalent reachability game.

	The states of the game graph are $V = Q \times Q_1 \times \ldots \times Q_m$.
	The owner of a state is just the owner of the underlying NFA state in the product; let $\owner' : V \to \set{0,1}$ be defined by $\owner'((q, q_1, \ldots, q_m)) = \owner(q)$.
	The winning states for Player~0 in the game graph are those which contain winning states in the underlying NFA for Player~0, $V' = F \times Q_1 \times \ldots \times Q_m$.
	Lastly, the edges of the graph are just the allowed transitions in the product of the underlying NFA and all the DFA guards.
	There is an edge from $(p, p_1, \ldots, p_m)$ to $(q, q_1, \ldots, q_m)$ if there is $\sigma \in \Sigma$ such that $t_j = (p, \sigma, q) \in \delta$, $p_j \in F_j$, and, for every $i \in \set{1, \ldots, m}$, $(p_i, \sigma, q_i) \in \delta_i$.
	The initial state of the reachability game on $(V,E)$ is $(q_0, q_{0,1}, \ldots, q_{0,m})$.

	Clearly, the reachability game on $(V, E)$ faithfully simulates the reachability game on $A$.
	This means that Player~0 wins the reachability game on $(V,E)$ if and only if Player~0 wins the reachability game on $A$.

	It remains to argue that the graph is exponentially sized but only in the number of guards.
	Since $V = Q \times Q_1 \times \ldots \times Q_m$, we know that 
	\begin{align*}
		\abs{V} = \abs{Q} \cdot \abs{Q_1} \cdot \ldots \cdot \abs{Q_m}
		& \leq \abs{Q} \cdot \left( \max_{1 \leq i \leq m} \abs{Q_i} \right)^m \\
		& \leq \abs{Q} \cdot (\abs{Q_1}+\ldots+\abs{Q_m})^m.
	\end{align*}
	Similarly, there is at most one edge in $E$ for each combination of transitions $t \in \delta$, $t_1 \in \delta_1, \ldots,$ and $t_m \in \delta_m$, so
	\begin{align*}
		\abs{E} \leq \abs{\delta} \cdot \abs{\delta_1} \cdot \ldots \cdot \abs{\delta_m}
		& \leq \abs{\delta} \cdot \left( \max_{1 \leq i \leq m} \abs{\delta_i} \right)^m \\
		& \leq \abs{\delta} \cdot (\abs{\delta_1} + \ldots + \abs{\delta_m})^m.
	\end{align*}
\qed
\end{proof}
 
\section{Missing Proofs from Section~\ref{sec:vass}}\label{appendix:vass}

\subsection{Reach-VASS}
\label{appendix:reach-vass}

\reachvassundecidable*

\begin{proof}
	We can assume, without loss of generality, that for reachability in two-counter machines, the initial and target counter values are both $(0,0)$. 
	That is because we can add a series of transitions adding and subtracting whatever non-zero initial and target counter values are desired, respectively.
	So, consider an arbitrary instance of reachability in a two-counter machine with an initial configuration $(p,(0,0))$ and a target configuration $(q, (0,0))$.
	We first make one slight modification, for sake of convenience later in the reduction.
	We add two new states $r$ and $s$ and two transitions $(q, \texttt{zero}_1, r), (r, \texttt{zero}_2, s)$.
	Clearly $(p,(0,0))$ can reach $(q, (0,0))$ if and only if $(p, (0,0))$ can reach $(s, (0,0))$. 
	Fix $M = (Q, T)$ to be the two-counter machine with this modification.
	
	We shall construct an \automata{DFA}{Reach-1-DVASS} $D = (A, U, \Gg)$ which uses the six-letter alphabet that is the set of actions $A$ of the two-counter machine.
	The goal of our construction is such that the words in $\lang{D}$ exactly correspond to the sequence of actions taken by $M$ on a run from $(p, (0,0))$ to $(s, (0,0))$.
	Hence $D$ has a non-empty language if and only if $(p, (0,0))$ can reach $(s, (0,0))$ in $M$.
	For an example of our construction, see~\cref{fig:reach-vass-example} on Page~\pageref{fig:reach-vass-example}. 

	We define the underlying DFA $U = (A, Q, T, p, \set{s})$ to have the same states and transitions as $M$, the initial state is $p$ (the initial state of the reachability instance), and the only accepting state is $s$ (the target set of the reachability instance).
	We know that $U$ is a DFA because of the action determinacy condition: every state $p \in Q$ and for every action $\alpha \in A$, there is at most one transition of the form $(p, \alpha, q)$. 

	Now, we shall define the guards $\Gg$.
	There will only be two distinct non-trivial guard languages in $\Gg$.
	One of the non-trivial guard languages is defined by the 1-VASS $G_1$ and the other is defined by the 1-VASS $G_2$.
	The 1-VASS $G_1$ will consist of one state $g$, that is an accepting state, and will track the value of the first counter.
	There are six transitions in $G_1$: $(g, \texttt{inc}_1, 1, g)$, $(g, \texttt{dec}_1, -1, g)$, and for every $\alpha \in A \setminus \set{\texttt{inc}_1, \texttt{dec}_1}, (s, \alpha, 0, s)$.
	The 1-VASS $G_2$ is the symmetric version of $G_1$, instead it tracks the second counter.
	Lastly, we detail where the non-trivial guards are located.
	Every zero-testing transition $(p, \texttt{zero}_i, q)$ will be guarded by $G_i$.
	This means that, in $U$, to take the transition $(p, \texttt{zero}_i, q)$, the guard $G_i$ must be accepting; this only happens when $G_i$ has counter value $0$ (which is exactly what $\texttt{zero}_i$ intends to test).

	It is straightforward to see that a word $w \in A^*$ that is accepted by $D$ corresponds to a sequence of actions on a run from $(p, (0,0))$ to $(s, (0,0))$ in $M$; we formalise this in Claim~\ref{clm:reach-vass-correctness}.
	The only catch is that in order to reach the only accepting state $s$ of $D$, then one needs to pass the final two zero-testing transitions $(q, \texttt{zero}_i, r)$ and $(r, \texttt{zero}_2, s)$ 
	which (i) forces the two counters to have zero value at the end of the run, and (ii) verifies that neither counter dropped below zero by taking a disallowed $\texttt{dec}_i$ transition (since then the run in $G_1$ or $G_2$ would terminate before the final zero test was taken).

	We can now conclude the proof of~\cref{thm:reach-vass-undecidable} as it follows from Claim~\ref{clm:reach-vass-correctness}: $\lang{D}$ is not empty if and only if there is a run from $(p, (0,0))$ to $(s, (0,0))$ in $M$. 

	\begin{numberedclaim}\label{clm:reach-vass-correctness}
		Let $w \in A^*$ be a sequence of two-counter machine actions, $w \in \lang{D}$ if and only if there is a run in $M$ from $(p, (0,0))$ to $(s, (0,0))$ whose sequence of actions is $w$.
	\end{numberedclaim}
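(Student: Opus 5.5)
We prove both directions by induction on the length of $w$. The induction invariant ties configurations of $D$ to configurations of $M$. Concretely, we claim that for every prefix $u$ of $w$ the following are equivalent: (a) $u$ labels a run of $D$ from $p$ to some state $r'$ of $U$ in which every guard check is passed; (b) $u$ labels a run of $M$ from $(p,(0,0))$ to $(r',(x,y))$, where $x=|u|_{\texttt{inc}_1}-|u|_{\texttt{dec}_1}$ and $y=|u|_{\texttt{inc}_2}-|u|_{\texttt{dec}_2}$, and every prefix $u'$ of $u$ has nonnegative such differences. Throughout we use two observations about the single-state guard $G_i$. First, $G_i$ reads a word $v$ without blocking iff every prefix of $v$ has at least as many $\texttt{inc}_i$ as $\texttt{dec}_i$. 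Second, when it does not block, it ends with counter value $|v|_{\texttt{inc}_i}-|v|_{\texttt{dec}_i}$. Hence $v\in \lang{G_i}$ iff both hold and this difference is $0$.

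\emph{($\Leftarrow$).} Given a run of $M$ with action sequence $w$, we follow the same transitions in $U$; this is possible because $U$ has exactly the states and transitions of $M$. For each step we check the guard. Unguarded transitions (the $\texttt{inc}$, $\texttt{dec}$ steps) impose nothing. A $\texttt{zero}_i$ step taken in $M$ from $(r',(x,y))$ requires the $i$-th counter to be $0$. Since $M$'s counters never go negative, every prefix has nonnegative difference for counter $i$, so by the observation the current prefix lies in $\lang{G_i}$ and the guard is passed. The run ends in $s$, which is accepting, so $w\in\lang{D}$.

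\emph{($\Rightarrow$).} Given an accepting run of $D$ on $w$ ending in $s$, the same transition sequence is a path in $M$. The only thing to check is that it is a valid run of $M$: no $\texttt{dec}_i$ is taken at counter $0$, and each $\texttt{zero}_i$ is taken when counter $i$ is $0$. The zero tests are handled directly: a $\texttt{zero}_i$ guard passed at prefix $u$ gives $u\in\lang{G_i}$, so the difference for counter $i$ is $0$ at that point.

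The main obstacle is the decrement condition, since $\texttt{dec}_i$ transitions are unguarded in $U$ and the guard only inspects prefixes at zero tests. This is where the added final transitions $(q,\texttt{zero}_1,r)$ and $(r,\texttt{zero}_2,s)$ are used. Every run reaching $s$ ends with a $\texttt{zero}_1$ check and then a $\texttt{zero}_2$ check, each reading the whole history up to that point (up to the trailing $\texttt{zero}$ letters, which $G_i$ ignores). By the observation, membership of that history in $\lang{G_i}$ forces every prefix of $w$ to have nonnegative difference for counter $i$. So no decrement ever occurred at counter value $0$, and the counters of the simulated run stay nonnegative. The same checks give final counter values $(0,0)$. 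Applying the invariant, the run of $D$ is a run of $M$ from $(p,(0,0))$ to $(s,(0,0))$ with action sequence $w$. Together with the first direction, this proves the claim, and hence $\lang{D}\neq\emptyset$ iff the target is reachable in $M$.
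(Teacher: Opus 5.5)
Your argument is correct and is essentially the paper's: a transition-by-transition simulation in both directions, in which the appended guarded transitions $(q,\texttt{zero}_1,r)$ and $(r,\texttt{zero}_2,s)$ force both counters to stay nonnegative along the whole history and to end at $0$. You handle the decrement case more explicitly than the paper, whose step ``$G_1$ decrements its counter if it was at least $1$, so $x_i\geq 1$'' silently relies on that final check. One statement needs fixing, though the argument itself is unaffected: the equivalence (a)$\Leftrightarrow$(b) you announce for every prefix is false as written, since a prefix ending in an unguarded $\texttt{dec}_i$ at counter value $0$ satisfies (a) but not (b). What you actually prove and use is that (a), together with nonnegativity of all prefixes of $w$, yields (b).
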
	
	\begin{proof}
		Let $w = \alpha_1\alpha_2\cdots\alpha_n \in A^*$ be a sequence of two-counter machine actions such that $w \in \lang{D}$.
		Consider the sequence of transitions $S = ( (q_{i-1}, \alpha_i, q_i) )_{i=1}^n$ that witnesses $w \in \lang{D}$.
		This means that $q_0 = p$ is the starting state and $q_n = s$ is the one accepting state of $D$.
		Let $x_i$ and $y_i$ be the counter values of the 1-VASS guards $G_1$ and $G_2$, respectively, after the first $i$ transitions have been taken.
		We will prove by induction that $(q_i, (x_i, y_i))$ is the configuration of $M$ that is reached after taking the first $i$ transitions in $M$.
		
		Initially, for the base case: $(q_0, (x_0, y_0)) = (p, (0,0))$.
		Now, suppose that after the first $i$ transitions have been taken, $M$ arrives in the configuration $(q_i, (x_i, y_i))$.
		We will show that $(q_i, (x_i, y_i)) \xrightarrow{(q_i, \alpha_{i+1}, q_{i+1})} (q_{i+1}, (x_{i+1}, y_{i+1}))$ in $M$.
		The next transition is $(q_i, \alpha_{i+1}, q_{i+1})$.
		Without loss of generality, suppose that $\alpha_{i+1} \in \set{\texttt{inc}_1, \texttt{dec}_1, \texttt{zero}_1}$.
		
		First, if $\alpha_{i+1} = \texttt{inc}_1$, then the guard $G_1$ increments its counter $x_{i+1} = x_i + 1$ and $y_{i+1} = y_i$ and clearly in $M$:
		\begin{equation*}
			(q_i, (x_i, y_i)) \xrightarrow{(q_i, \texttt{inc}_1, q_{i+1})} (q_{i+1}, (x_i+1, y_i)) = (q_{i+1}, (x_{i+1}, y_{i+1})).
		\end{equation*}
		
		Second, if $\alpha_{i+1} = \texttt{dec}_1$, then the guard $G_1$ decrements its counter if it was at least 1, so $x_i \geq 1$, $x_{i+1} = x_i - 1$, and $y_{i+1} = y_i$ and clearly in $M$: 
		\begin{equation*}
			(q_i, (x_i, y_i)) \xrightarrow{(q_i, \texttt{dec}_1, q_{i+1})} (q_{i+1}, (x_i-1, y_i)) = (q_{i+1}, (x_{i+1}, y_{i+1})).
		\end{equation*}
		
		Third and finally, if $\alpha_{i+1} = \texttt{zero}_1$, then the guard $G_1$ must have been in the accepting configuration $(g_1, 0)$, so $x_i = 0$, so $x_{i+1} = x_i$ and $y_{i+1} = y_i$ and clearly in $M$:
		\begin{equation*}
			(q_i, (x_i, y_i)) \xrightarrow{(q_i, \texttt{zero}_1, q_{i+1})} (q_{i+1}, (x_i, y_i)) = (q_{i+1}, (x_{i+1}, y_{i+1})).
		\end{equation*}

		Now we know that after $n$ transitions, $M$ will arrive in the configuration $(q_n, (x_n, y_n))$.
		Since $q_n = s$ is the accepting configuration, we know that $q_{n-1} = r$ and $q_{n-2} = q$. 
		This is because the only transition that leads to $s$ is $(r, \texttt{zero}_2, s)$ and the only transition that leads to $r$ is $(q, \texttt{zero}_1, r)$.
		This also means that in order to traverse these two transitions the guards $G_1$ and $G_2$ must be in configurations $(g_1, 0)$ and $(g_2, 0)$, respectively. 
		This means that $x_n = 0$ and $y_n = 0$.
		Hence, we conclude that there is a run from $(p, (0,0))$ to $(s, (0,0))$ and its sequence of actions is $w$.

		The reverse direction is very similar. Let $((q_i, (x_i, y_i)))_{i=0}^n$ be a sequence of configurations in $M$ such that $(q_0, (x_0, y_0)) = (p, (0,0))$, $(q_n, (x_n, y_n)) = (s, (0,0))$ and, for every $i \in \set{1, \ldots, n}$, there is the transition $t_i \in T$ such that $(q_{i-1}, (x_{i-1}, y_{i-1})) \xrightarrow{t_i} (q_{i}, (x_{i}, y_{i}))$.  
		Suppose that $t_i = (q_{i-1}, \alpha_i, q_i)$ where $\alpha_i \in A$.
		We would like to prove that $w = \alpha_1\alpha_2\cdots\alpha_n \in \lang{D}$.
		We will argue by induction that after the first $i$ transitions have been taken, the underlying DFA is in state $q_i$, the 1-VASS guard $G_1$ has configuration $(g_1, x_i)$, and the 1-VASS guard $G_2$ has configuration $(g_2, y_i)$.
		
		The base case is trivial: after zero transitions have been taken $M$ is in configuration $(p, (0,0))$, $U$ is at state $p$, $G_1$ has configuration $(g_1, 0)$, and $G_2$ has configuration $(g_2, 0)$.
		Now, assume that after $i$ transitions have been taken, $U$ is at state $q_i$, $G_1$ has configuration $(g_1, x_i)$, and $G_2$ has configuration $(g_2, y_i)$.
		We will show that after the $(i+1)$-st transition is taken in $M$, $(q_i, (x_i, y_i)) \xrightarrow{t_{i+1}} (q_{i+1}, (x_{i+1}, y_{i+1}))$, that $U$ is in state $q_{i+1}$, $G_1$ has configuration $(g_1, x_{i+1})$, and $G_2$ has configuration $(g_2, y_{i+1})$.
		Without loss of generality, assume that $\alpha_{i+1} \in \set{\texttt{inc}_1, \texttt{dec}_1, \texttt{zero}_1}$.
		This means that $y_{i+1} = y_i$ and the configuration of $G_2$ is not changed as the transition taken in $G_2$ is $(g_2, \alpha_{i+1}, 0, g_2)$.
		It only remains to check whether $U$ can take the transition $(q_i, \alpha_{i+1}, q_{i+1})$ and to check what the next configuration of $G_1$ is.

		First, if $\alpha_{i+1} = \texttt{inc}_1$, then $G_1$ transitions, with $(g_1, \texttt{inc}_1, 1, g_1)$, to $(g_1, x_i+1) = (g_1, x_{i+1})$.
		In this case the DFA $U$ freely takes the transition $(q_i, \texttt{inc}_1, q_{i+1})$ to get to $q_{i+1}$.
		Second, if $\alpha_{i+1} = \texttt{dec}_1$, then we know that $x_i \geq 1$ and so $G_1$ transitions, with $(g_1, \texttt{dec}_1, -1, g_1)$, to $(g_1, x_i-1) = (g_1, x_{i+1})$.
		In this case the DFA $U$ also freely takes the transition $(q_i, \texttt{dec}_1, q_{i+1})$ to get to $q_{i+1}$.
		Third and finally, if $\alpha_{i+1} = \texttt{zero}_i$, then the configuration of $G_1$ does not change as the transition $(g_1, \texttt{zero}_1, 0, g_1)$ is taken and we know that $(g_1, x_i) = (g_1, x_{i+1})$.
		We also know that $x_i = 0$; this means that the guard $G_1$ was in the accepting configuration $(g_1, 0)$ which allows $U$ to take the guarded transition $(q_i, \texttt{zero}_1, q_{i+1})$ to $q_{i+1}$.

		Now, we know that after all $n$ transitions have been taken, then $U$ is at state $q_n$, $G_1$ is at configuration $(g_1, x_n)$, and $G_2$ is at configuration $(g_2, y_n)$. 
		Since $q_n = s$ and $s$ is an accepting state, we conclude that $w = \alpha_1\alpha_2\cdots\alpha_n \in \lang{D}$.
	\qed\end{proof}
\end{proof}

\subsection{Cover-DVASS}
\label{appendix:cover_vass}

Let $V = (\Sigma, Q, T, q_0, F)$ be a $d$-DVASS.
In~\cref{subsec:cover-vass}, we stated that $\norm{V}$ is the size of $V$ encoded in unary.
To be explicit, 
\begin{equation*}
	\norm{V} \coloneqq \abs{Q} + \sum_{(p, \sigma, \vec{u}, q) \in T} \max\set{1, \abs{\vec{u}[1]}, \ldots, \abs{\vec{u}[d]}}.
\end{equation*}

Recall, we consider a fixed \automata{NFA}{Cover-DVASS} $A = (\Sigma, U, \Gg)$ where $U = (\Sigma, Q, \delta, q_0, F)$ is the underlying NFA with transitions $\delta = \set{t_1, \ldots, t_m}$ and guards $\Gg = \set{ L_1, \ldots, L_m}$.
For every $i \in \set{1, \ldots, m}$, the guard is specified as a $d_i$-dimensional Cover-DVASS $G_i = (\Sigma, Q_i, T_i, q_{0,i}, F_i)$, so $\lang{G_i} = L_i$.
We use $\norm{V}$ to denote the size of a $d$-VASS encoded in unary.

\existsequivalentvass*
\begin{proof}
	Without loss of generality, we assume that each guard $G_i$ is complete and does not die. 
	If this does not hold, we can just add a non-accepting sink state that is reachable from every state and by reading any symbol.

	From the underlying NFA $U = (\Sigma, Q, \delta, q_0, F)$ and the $d_i$-dimensional Cover-DVASS guards $G_i = (\Sigma, Q_i, T_i, q_{0,i}, F_i)$, we will create a $D$-dimensional (non-deterministic) VASS $V = (\Sigma, Q', T', q_0', F')$.
	Roughly speaking, $V$ is just the product of $U, G_1, \ldots, G_m$.
	Precisely, the states $Q'$ of $V$ are the $(m+1)$-tuples $(q, q_1, \ldots, q_m) \in Q \times Q_1 \times \ldots \times Q_m$ consisting of the current state $q \in Q$ of the underlying NFA and the current states of each guard $q_i \in Q_i$.
	A configuration of $V$ consists of a state $(p, p_1, \ldots, p_m)$ and a $D$-dimensional vector $(\vec{x}_1, \ldots, \vec{x}_m)$.
	The transitions $T'$ of $V$ are defined as follows.
	Fix $\sigma \in \Sigma$. 
	For all transitions $t_j = (p, \sigma, q) \in \delta$ and, for every $i \in \set{1, \ldots, m}$ such that there is a transition $(p_i, \sigma, \vec{u}_i, q_i) \in T_i$, if $p_j \in F_j$ (i.e.\ $G_j$ is an accepting state), then we will add the following transition to $T'$:
	\begin{equation*}
		(p, p_1, \ldots, p_m) \xrightarrow{\sigma, (\vec{u}_1, \ldots, \vec{u}_m)} (q, q_1, \ldots, q_m).
	\end{equation*}

	We shall identify $s = (q_0, q_{0,1}, \ldots, q_{0,m})$ to be the initial state and we shall add one additional final state $t$ to $Q'$.
	We define $F' = \set{t}$, and we will add additional $\varepsilon$-transitions to $T'$ that lead to $t$.
	For all accepting states in the underlying NFA $f \in F$ and, for every combination of states $q_1 \in Q_1, \ldots, q_m \in Q_m$, we will add the transition
	\begin{equation*}
		(f, q_1, \ldots, q_m) \xrightarrow{\varepsilon, \vec{0}} t.
	\end{equation*}

	The $D$-VASS $V$ is designed to simulate the \automata{NFA}{Cover-DVASS} $A$.
	The state of the VASS $(q, q_1, \ldots, q_m)$ tracks the states of the underlying NFA and all the guards.
	The first $d_1$ counters of $V$ track the $d_1$ counters of $G_1$, the next $d_2$ counters of $V$ track the $d_2$ counters of $G_2$ and so on.

	\cref{pro:existsequivalentvass} is broken down into the following two claims (Claims~\ref{clm:cover-vass-correctness} and~\ref{clm:cover-vass-size}).

	\begin{numberedclaim}\label{clm:cover-vass-correctness}
		$\lang{V} = \lang{A}$.
	\end{numberedclaim}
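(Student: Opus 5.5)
I will prove the two inclusions $\lang{A} \sset \lang{V}$ and $\lang{V} \sset \lang{A}$ separately. Both rest on a simulation invariant. After reading a prefix $\sigma_1\cdots\sigma_i$, a configuration of $V$ of the form $((q, q_1, \ldots, q_m), (\vec{x}_1, \ldots, \vec{x}_m))$ is reachable exactly when two things hold. First, $q$ is reachable in $A$ on that prefix, respecting the guards. Second, each $(q_i, \vec{x}_i)$ is the configuration of the deterministic guard $G_i$ after reading the same prefix. Determinism and completeness of the guards are what make the second part well defined: since each $G_i$ is complete and (after the sink-state normalisation) never dies, it has exactly one configuration after any prefix. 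This matters because the product can only move if every component can move.

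For $\lang{A} \sset \lang{V}$, I take an accepting run $q_0 \xrightarrow{\sigma_1} q^{(1)} \cdots \xrightarrow{\sigma_n} q^{(n)}$ of $A$ and pair each $q^{(i)}$ with the unique guard configurations after $\sigma_1\cdots\sigma_i$. I then argue by induction on $i$ that the tuple is reachable in $V$. The inductive step uses the transition $t_j$ taken at step $i+1$. Since the run is admissible in $A$, the prefix lies in $L_j = \lang{G_j}$, so $G_j$ sits in an accepting state $p_j \in F_j$ with non-negative counters. Coverability acceptance places no constraint on the counters, so this is exactly the side condition $p_j \in F_j$ in the definition of $T'$. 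The remaining components take their unique $\sigma_{i+1}$-transitions. Their counters stay non-negative because the guards are complete and do not die, and this is precisely where the completeness assumption is used. At the end, $q^{(n)} \in F$, so the $\varepsilon$-transition to $t$ with effect $\vec{0}$ is available.

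For $\lang{V} \sset \lang{A}$, I take a run of $V$ ending in $t$ and strip the final $\varepsilon$-edge, which is the only way to enter $t$. I then project onto each coordinate. The $U$-coordinate gives a run of $U$ ending in $F$. Each $G_i$-coordinate, together with its block of counters, is a valid run of $G_i$ on the prefix, and by determinism it is the run of $G_i$. Every product transition built from $t_j$ required $p_j \in F_j$, so the current prefix is in $L_j$. Hence the projected run of $U$ is admissible in $A$ and accepting.

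The main obstacle is the forward direction, specifically the point the proof idea flags: a guard $G_i$ that is not used on the current transition must still be able to move, otherwise the product run gets blocked. I would handle this by making the normalisation explicit. Add a non-accepting sink with $\vec{0}$-effect loops on all letters. Then redirect every missing transition, and also every transition that would drive a counter negative, into that sink. Because the guard is deterministic, that second case depends on the current counter values and cannot be expressed by a plain VASS transition. I expect that to be the delicate point. My first attempt would be to observe that once a guard's run dies, its language contains no further prefixes, so that guard's transition can never be enabled again. It therefore suffices to let the product non-deterministically move $G_i$ to the sink at any time with $\vec{0}$ effect. Correctness survives this change, because moving to the sink only disables transitions and never enables one.
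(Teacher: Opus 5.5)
Your proposal is correct and follows essentially the paper's proof: both inclusions by step-by-step simulation between runs of $A$ and runs of the product $V$, after first normalising the guards with a non-accepting sink so that guards not currently needed never block the product. Your resolution of the delicate point (zero-effect transitions into the sink from every state on every letter, instead of redirecting only the transitions that would make a counter negative) is exactly the paper's ``non-accepting sink state that is reachable from every state and by reading any symbol''. Two adjustments are needed. First, this normalisation makes the guards non-deterministic, so replace the ``exactly one configuration'' wording of your invariant with ``the $G_i$-component follows the unique run of $G_i$ until it enters the sink''. Second, in the direction $\lang{A} \sset \lang{V}$, take the sink transition on the very letter on which the real run of $G_i$ dies, not as a separate $\varepsilon$-move. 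The transition guarded by $G_i$ may be taken on that same letter while $G_i$ is still in an accepting state.
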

	\begin{proof}
		First, we will argue that $\lang{V} \sset \lang{A}$.
		Let $w = \sigma^{(1)}\sigma^{(2)}\cdots\sigma^{(\ell)} \in (\Sigma \cup \set{\varepsilon})^*$ be a word such that $w \in \lang{V}$ (it will be convenient for us to leave `$\varepsilon$'s in $w$).
		Let $((s^{(k)}, \vec{x}^{(k)}))_{k=0}^\ell$ be a run in $V$ such that $(s^{(0)}, \vec{x}^{(0)}) = (s, \vec{0})$, $(s^{(\ell)},  \vec{x}^{(\ell)}) = (t, \vec{x})$ for some $\vec{x} \geq \vec{0}$, and, for every $k \in \set{1, \ldots, \ell}$, 
		\begin{equation*}
			(s^{(k-1)},\sigma^{(k)}, \vec{x}^{(k)} - \vec{x}^{(k-1)}, s^{(k)}) \in T'.
		\end{equation*}
		For every $k \in \set{0, 1, \ldots, \ell}$, let $s^{(k)} = (q^{(k)}, q^{(k)}_1, \ldots, q^{(k)}_m)$ and $\vec{x}^{(k)} = (\vec{x}^{(k)}_1, \ldots, \vec{x}^{(k)}_m)$ where, for every $i \in \set{1, \ldots, m}, \vec{x}^{(k)}_i \in \NN^{d_i}$.

		We shall prove, by induction, that there is an equivalent run in $A$.
		Specifically, we will prove that after the first $k$ transitions, reading $\sigma^{(1)}\sigma^{(2)} \cdots \sigma^{(k)}$, have been taken in $V$, there is a run in $A$, reading $\sigma^{(1)}\sigma^{(2)} \cdots \sigma^{(k)}$, to a configuration in which the underlying NFA $U$ is in state $q^{(k)}$, and for every $i \in \set{1, \ldots, m}$, the configuration of $G_i$ is $(q^{(k)}_i, \vec{x}^{(k)}_i)$.

		The base case is immediate: after zero transitions have been taken $V$ is in the starting configuration $(s, \vec{0}) = ((q_0, q_{0,1}, \ldots, q_{0,m}), (\vec{0}, \ldots, \vec{0}))$.
		The underlying NFA $U$ is at starting state $q_0$ and, for every $i \in \set{1, \ldots, m}$, the starting configuration of $G_i$ is $(q_{0,i}, \vec{0})$.
		In both automata, the empty word $\varepsilon$ has been read.

		For the inductive step, assume that after $k$ transition have been taken, there is a run reading $\sigma^{(1)}\sigma^{(2)} \cdots \sigma^{(k)}$ in $A$ so that the underlying NFA is at state $q^{(k)}$ and, for every $i \in \set{1, \ldots, m}$, the configuration of $G_i$ is $(q^{(k)}_i, \vec{x}^{(k)}_i)$.
		Now, given that $(s^{(k)}, \sigma^{(k+1)},\vec{x}^{(k+1)} - \vec{x}^{(k)}, s^{(k+1)}) \in T'$, we know that there exists a transition $t_j = (q^{(k)}, \sigma^{(k+1)}, q^{(k+1)}) \in \delta$ and we know that $q^{(k)}_j \in F_j$ ($G_j$ is in an accepting state which allows $t_j$ to be taken in $U$).
		It is also true that, for every $i \in \set{1, \ldots, m}$, there exists a transition $(q^{(k)}_i, \sigma^{(k+1)}, \vec{u}_i, q^{(k+1)}_i) \in T_i$ such that $\vec{x}^{(k+1)} - \vec{x}^{(k)} = (\vec{u}_1, \ldots, \vec{u}_m)$.
		Now, after $(q^{(k)}, \sigma^{(k+1)}, q^{(k+1)})$ is taken in $U$, the word $\sigma^{(1)}\sigma^{(2)}\cdots\sigma^{(k)}\sigma^{(k+1)}$ has been read, the state of $U$ will be $q^{(k+1)}$, and for every $i \in \set{1, \ldots, m}$, the configuration of $G_i$ will be $(q^{(k+1)}, \vec{x}^{(k)}_i + \vec{u}^{(k)}_i) = (q^{(k+1)}, \vec{x}^{(k+1)}_i)$.
		This concludes this proof by induction.

		Now, by construction of $V$, we know that after reading $\sigma^{(1)}\sigma^{(2)}\cdots\sigma^{(\ell-1)}$, the last transition in the run must be $(q^{(\ell-1)}, q^{(\ell-1)}_1, \ldots, q^{(\ell-1)}_m) \xrightarrow{\varepsilon, \vec{0}} t$.
		We therefore deduce that $q^{(\ell-1)}$ is an accepting state of $U$ and we have therefore found that $\sigma^{(1)}\sigma^{(2)}\cdots\sigma^{(\ell-1)}\varepsilon = \sigma^{(1)}\sigma^{(2)}\cdots\sigma^{(\ell-1)}\sigma^{(\ell)} = w \in \lang{A}$.

		Second, we will prove that $\lang{A} \sset \lang{V}$.
		Let $w = \sigma^{(1)} \sigma^{(2)} \cdots \sigma^{(\ell)} \in (\Sigma \cup \set{\varepsilon})^*$ be a word such that $w \in \lang{A}$.
		Suppose there is a run witnessing $w \in \lang{A}$ which consists of $\ell+1$ many configurations.
		After $k \leq \ell$ many transitions, suppose that the word $\sigma^{(1)}\sigma^{(2)}\cdots\sigma^{(k)}$ has been read and the \automata{NFA}{Cover-VASS} $A$ arrives at a configuration in which $U$ is at state $q^{(k)}$, and for every $i \in \set{1, \ldots, m}$, the configuration of $G_i$ is $(q^{(k)}_i, \vec{x}^{(k)}_i)$.
		We will prove, by induction, that $( ((q^{(k)}, q^{(k)}_1, \ldots, q^{(k)}_m), (\vec{x}^{(k)}_1, \ldots, \vec{x}^{(k)}_m)) )_{k=0}^{\ell}$ is a run in $V$ starting from $(s, \vec{0})$ that reads $\sigma^{(1)}\sigma^{(2)}\cdots\sigma^{(\ell)}$.

		The base case is straightforward.
		After zero transitions have been taken $U$ is at state $q_0$, and for every $i \in \set{1, \ldots, m}$, the configuration of $G_i$ is $(q_{0,i}, \vec{0})$ so the initial configuration of $G_i$ is $((q_0, q_{0,1}, \ldots, q_{0,m}), (\vec{0}, \ldots, \vec{0})) = (s, \vec{0})$, as required.
		In both automata, the empty word $\varepsilon$ has been read.

		Now, for the inductive step, assume that after $k > 0$ transitions have been taken and $\sigma^{(1)}\sigma^{(2)}\cdots\sigma^{(k)}$ has been read, $V$ arrives at the configuration 
		\begin{equation*}
			\left((q^{(k)}, q^{(k)}_1, \ldots, q^{(k)}_m), (\vec{x}^{(k)}_1, \ldots, \vec{x}^{(k)}_m)\right).
		\end{equation*}
		We know that, in $A$, upon reading the next letter $\sigma^{(k+1)}$, there is $\sigma^{(k+1)}$-transition that takes $U$ from $q^{(k)}$ to $q^{(k+1)}$, and for every $i \in \set{1, \ldots, m}$, there is a $\sigma^{(k+1)}$-transition that takes $G_i$ from $(q^{(k)}_i, \vec{x}^{(k)}_i)$ to $(q^{(k+1)}_i, \vec{x}^{(k+1)}_i)$ as we assumed that $G_i$ does not die.
		Precisely, let $t_j = (q^{(k)}, \sigma^{(k+1)}, q^{(k+1)}) \in \delta$ be the transition taken by $A$, and for every $i \in \set{1, \ldots, m}$, let transition $(q^{(k)}_i, \sigma^{(k+1)}, \vec{x}^{(k+1)}_i - \vec{x}^{(k)}_i, q^{(k+1)}_i) \in T_i$ be the transition taken by $G_i$.
		It is also true that $q^{(k)}_j \in F_j$ (to enable $t_j$ to be taken in $U$).
		Altogether, this means that the following transition is in $T'$:
		\begin{equation*}
			(q^{(k)}, q^{(k)}_1, \ldots, q^{(k)}_m)
			\xrightarrow{\sigma^{(k+1)}, \left(\vec{x}^{(k+1)}_1 - \vec{x}^{(k)}_1, \ldots, \vec{x}^{(k+1)}_m - \vec{x}^{(k)}_m\right)}
			(q^{(k+1)}, q^{(k+1)}_1, \ldots, q^{(k+1)}_m).
		\end{equation*}
		Thus, by taking this transition from $((q^{(k)}, q^{(k)}_1, \ldots, q^{(k)}_m), (\vec{x}^{(k)}_1, \ldots, \vec{x}^{(k)}_m))$, $V$ arrives in the configuration 
		\begin{equation*}
			\left((q^{(k+1)}, q^{(k+1)}_1, \ldots, q^{(k+1)}_m), (\vec{x}^{(k+1)}_1, \ldots, \vec{x}^{(k+1)}_m)\right).
		\end{equation*}

		Now, after all $\ell$ transitions have been taken and $\sigma^{(1)}\sigma^{(2)}\cdots\sigma^{(\ell)}$ has been read, the $D$-VASS $V$ arrives at the configuration $((q^{(\ell)}, q^{(\ell)}_1, \ldots, q^{(\ell)}_m), (\vec{x}^{(\ell)}_1, \ldots, \vec{x}^{(\ell)}_m))$ where $q^{(\ell)} \in F$ is an accepting state of $U$ and, for every $i \in \set{1, \ldots, m}$,\linebreak 
		$\vec{x}^{(\ell)}_i \geq \vec{0}$.
		This means that $V$ can then take one last transition:
		\begin{equation*}
			(q^{(\ell)}, q^{(\ell)}_1, \ldots, q^{(\ell)}_m)
		\xrightarrow{\varepsilon, \vec{0}} t, 
		\end{equation*}
		and arrive at the configuration $(t, (\vec{x}^{(\ell)}_1, \ldots, \vec{x}^{(\ell)}_m))$.
		Since $(\vec{x}^{(\ell)}_1, \ldots, \vec{x}^{(\ell)}_m) \geq \vec{0}$, we have found a run that reads $\sigma^{(1)}\sigma^{(2)}\cdots\sigma^{(\ell)} \varepsilon = \sigma^{(1)}\sigma^{(2)}\cdots\sigma^{(\ell)} = w$ from $(s, \vec{0})$ to a configuration $(t, \vec{x})$, in $V$, for some $\vec{x} \geq \vec{0}$; hence $w \in \lang{V}$.

		Since $\lang{V} \sset \lang{A}$ and $\lang{A} \sset \lang{V}$, we conclude that $\lang{V} = \lang{A}$.
	\qed\end{proof}

	\begin{numberedclaim}\label{clm:cover-vass-size}
		$\norm{V} \leq 2(\norm{U} + \norm{G_1} + \ldots + \norm{G_m})^{m+2}$.
	\end{numberedclaim}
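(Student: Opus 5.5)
We bound the two ingredients of $\norm{V}$ separately: the number of states $\abs{Q'}$ and the transition weight $\sum_{(p,\sigma,\vec{u},q)\in T'} \max\set{1,\abs{\vec{u}[1]},\ldots}$. Throughout, write $S \coloneqq \norm{U} + \norm{G_1} + \ldots + \norm{G_m}$. Each of $\abs{Q}$, $\abs{Q_i}$, $\abs{\delta}$ and $\abs{T_i}$ is at most $S$. The number of underlying transitions also satisfies $m \le \norm{U} \le S$. (If completing the guards with a sink is counted, we absorb it by taking it into $\norm{G_i}$, or by noting it adds at most a constant factor that the leading $2$ covers.)

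\emph{States.} We have $Q' = (Q\times Q_1\times\cdots\times Q_m)\cup\set{t}$, so
\[
\abs{Q'} \le \abs{Q}\cdot\prod_{i=1}^m \abs{Q_i} + 1 \le S^{m+1}+1 .
\]

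\emph{Non-$\varepsilon$ transitions.} Each such transition of $T'$ is determined by one underlying transition $t_j\in\delta$ together with one guard transition $\tau_i \in T_i$ for each $i$. Hence there are at most $\abs{\delta}\cdot\prod_i\abs{T_i} \le S^{m+1}$ of them. The weight of such a transition is $\max\set{1,\ldots}$ over all the coordinates of $(\vec{u}_1,\ldots,\vec{u}_m)$, which is the maximum over $i$ of the weights $w(\tau_i)$ of the chosen guard transitions. We bound this maximum crudely by a sum: $\max_i w(\tau_i) \le \sum_i w(\tau_i)$. Summing over all choices then gives
\[
\sum_{t_j,\tau_1,\ldots,\tau_m} \max_i w(\tau_i) \le \abs{\delta}\sum_{i}\Big(\sum_{\tau\in T_i} w(\tau)\Big)\prod_{k\ne i}\abs{T_k} \le S\cdot\sum_i \norm{G_i}\cdot S^{m-1} \le S^{m+1}.
\]
Here we used $\sum_{\tau\in T_i} w(\tau)\le\norm{G_i}$ and $\sum_i\norm{G_i}\le S$.

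\emph{$\varepsilon$-transitions.} These go to $t$, have weight $1$, and number at most $\abs{F}\cdot\prod_i\abs{Q_i} \le S^{m+1}$.

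Adding the three contributions gives
\[
\norm{V} \le 3S^{m+1}+1 \le 2S^{m+2},
\]
which holds since $S\ge 2$ (for instance, $U$ has at least one state and one transition). The only delicate step is the weight of product transitions. A naive bound, multiplying the maximum weights of each guard, would give degree $2m$. Replacing the maximum by a sum keeps the total weight polynomial of degree $m+1$, so that is the step we expect to require the most care.
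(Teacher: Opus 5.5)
Your argument is correct, but it bounds the transition weights differently from the paper. The paper uses the crude estimate $\sum_{T'} w \le \abs{T'}\cdot\norm{T'}$, i.e.\ the number of product transitions times the largest absolute update occurring anywhere. It bounds the count by $\abs{T'} \le \abs{\delta}\cdot\prod_i\abs{T_i} + \abs{Q}\cdot\prod_i\abs{Q_i} \le S^{m+1}$ and the largest update by $\norm{T'} = \max_i\norm{T_i} \le \sum_i\norm{G_i}\le S$. That single extra factor $S$ is exactly where the exponent $m+2$ comes from, and the paper finishes with $S^{m+1}+S^{m+2}\le 2S^{m+2}$.

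You instead replace the maximum by a sum and distribute it over the product, so each guard transition's weight is charged once per choice of the remaining components. This keeps the weighted count at degree $m+1$ and gives the sharper bound $\norm{V}\le 3S^{m+1}+1$.

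The refinement is valid but not needed. The step you single out as delicate is not actually delicate: the weight of a product transition is a maximum (not a product) of guard weights, so it is bounded globally by one factor $S$. That already gives degree $m+2$, which is all the claim asks for. Moreover, only $\log\norm{V}$ enters the Rackoff-style space bound, so the better exponent buys nothing downstream.

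One small caution on your parenthetical about the sink. Keep the first option: measure $\norm{G_i}$ of the already completed guards, as the paper does by assuming completeness without loss of generality. The second option is unjustified, since adding a sink can add on the order of $\abs{Q_i}\cdot\abs{\Sigma}$ transitions, which is not a constant-factor change.
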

	\begin{proof}
		Recall that, for the $D$-VASS $V = (\Sigma, Q', T', q_0, \set{t})$, its size is
		\begin{equation*}
			\norm{V} = \abs{Q'} + \sum_{(p, \sigma, \vec{u}, q) \in T'} \max\set{1, \abs{\vec{u}[1]}, \ldots, \abs{\vec{u}[D]}}.
		\end{equation*}
		For convenience, we define $\norm{T'} \coloneqq \max\set{1, \abs{\vec{u}[1]}, \ldots, \abs{\vec{u}[D]} : (p, \sigma, \vec{u}, q) \in T'}$.
		This means that 
		\begin{equation}\label{equ:lazy}
			\norm{V} \leq \abs{Q'} + \abs{T'}\cdot\norm{T'}.
		\end{equation}

		For the states, we know that $Q' = Q \times Q_1 \times \ldots \times Q_m \cup \set{t}$, so 
		\begin{align}\label{equ:q'}
			\abs{Q'} 
			= \abs{Q} \cdot \abs{Q_1} \cdot \ldots \cdot \abs{Q_m} + 1 
			& \leq (\abs{Q} + \abs{Q_1} + \ldots + \abs{Q_m})^{m+1} \notag\\
			& \leq (\norm{U} + \norm{G_1} + \ldots + \norm{G_m})^{m+1}.
		\end{align}

		For the transitions, recall that there are two kinds of transitions in~$T'$.
		Transitions of the first kind are the transitions that simulate transitions in $A$ and the second kind of transitions are the transitions that lead to the target state $t$ (these transitions have the update vector $\vec{0}$).
		There is at most one transition of the first kind for every combination of transitions $t \in \delta$, $t_1 \in T_1$, \ldots, $t_m \in T_m$.
		There is at most one transition of the second kind for every state in $Q'$ that is not $t$.
		Thus,
		\begin{align}\label{equ:t'}
			\abs{T'}\cdot\norm{T'} 
			& \leq (\abs{\delta} \cdot \abs{T_1} \cdot \ldots \cdot \abs{T_m} + \abs{Q} \cdot \abs{Q_1} \cdot \ldots \cdot \abs{Q_m}) \cdot \norm{T'} \notag\\
			& \leq \left( (\abs{\delta} + \abs{T_1} + \ldots + \abs{T_m})^{m+1} + (\abs{Q} + \abs{Q_1} + \ldots + \abs{Q_m})^{m+1} \right) \cdot \norm{T'} \notag\\
			& \leq (\abs{Q} + \abs{\delta} + \abs{Q_1} + \abs{T_1} + \ldots + \abs{Q_m} + \abs{T_m})^{m+1} \cdot \norm{T'} \notag\\
			& \leq (\norm{U} + \norm{G_1} + \ldots + \norm{G_m})^{m+1} \cdot \norm{T'} \notag\\
			& \leq (\norm{U} + \norm{G_1} + \ldots + \norm{G_m})^{m+2}.
		\end{align}
		Where the last inequality follows from the fact that $\norm{T'} = \max\set{\norm{T_1}, \ldots, \norm{T_m}}$, so $\norm{T'} \leq \norm{T_1} + \ldots + \norm{T_m} \leq \norm{G_1} + \ldots + \norm{G_m}$.

		By combining Equations~\ref{equ:lazy}, ~\ref{equ:q'} and~\ref{equ:t'} we deduce that
		\begin{align*}
			\norm{V} 
			& \leq (\norm{U} + \norm{G_1} + \ldots + \norm{G_m})^{m+1} + (\norm{U} + \norm{G_1} + \ldots + \norm{G_m})^{m+2} \\
			& \leq 2(\norm{U} + \norm{G_1} + \ldots + \norm{G_m})^{m+2}.
		\end{align*}\qed\end{proof}
	Claims~\ref{clm:cover-vass-correctness} and~\ref{clm:cover-vass-size} complete the proof of \cref{pro:existsequivalentvass}.\qed\end{proof}

\subsection{Cover-VASS}
\label{appendix:non-det-cover-vass}

\covervasskleene*
\begin{proof}
	We prove this by exhibiting a language $L$ that is accepted by a Cover-VASS, but whose Kleene closure $L^*$ is not.

	\begin{numberedclaim}\label{claim:L-is-cover-vass}
	The language $L = \{a^n b^m : 0 \leq m \leq n\}$ is accepted by a Cover-VASS.
	\end{numberedclaim}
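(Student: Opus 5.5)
We exhibit an explicit one-dimensional Cover-VASS $V_L = (\set{a,b}, \set{s_a, s_b}, T, s_a, \set{s_a, s_b})$. It has two states, both accepting, with $s_a$ initial, and one counter. The transitions are $(s_a, a, +1, s_a)$, $(s_a, b, -1, s_b)$ and $(s_b, b, -1, s_b)$. The intended invariant is that the counter records the number of $a$'s read minus the number of $b$'s read. The two-state control enforces that the input has the shape $a^*b^*$.

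For $L \sset L(V_L)$, we take $a^n b^m$ with $m \le n$. The run reads $a^n$ in $s_a$ and reaches counter value $n$. It then reads $m$ letters $b$, moving to $s_b$ on the first one if $m \ge 1$, and the counter decreases to $n - m \ge 0$. Every intermediate value is at least $n-m \ge 0$, so every transition is enabled. The run ends in an accepting state ($s_a$ if $m=0$, $s_b$ otherwise), so the word is accepted under coverability semantics.

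For $L(V_L) \sset L$, the control structure gives the shape. There is no $a$-transition leaving $s_b$ and no way back from $s_b$ to $s_a$, so any accepted word has the form $a^n b^m$. Along the run, the counter after the prefix $a^n b^j$ equals $n - j$. Non-negativity must hold after every transition, in particular after the last one, which gives $n - m \ge 0$, that is $m \le n$.

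There is no real obstacle here. The only point needing care is that coverability acceptance places no constraint on the final counter value, so the bound $m \le n$ has to come entirely from the VASS non-negativity semantics. It does not come from the acceptance condition.
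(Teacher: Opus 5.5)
Your proof is correct and is essentially the paper's construction: a one-counter VASS that increments on $a$ and decrements on $b$, where non-negativity of the counter (not the acceptance condition) enforces $m \le n$. The paper switches phases with an $\varepsilon$-transition $q_0 \xrightarrow{\varepsilon,0} q_1$ and makes only $q_1$ final, whereas you fold the switch into the first $b$-transition and make both states accepting; this gives a deterministic VASS with no $\varepsilon$-moves, which matches the formal definition $T \sset Q \times \Sigma \times \ZZ^d \times Q$ more literally, and you also check both inclusions explicitly where the paper only asserts them.
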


	\begin{proof}%
	Consider a VASS with one counter and two states $q_0, q_1$. 
	The initial state is $q_0$ and $q_1$ is the only final state.
	The VASS has three transitions:
	\begin{itemize}
	    \item $q_0 \xrightarrow{a, +1} q_0$ (increment counter on reading $a$),
	    \item $q_0 \xrightarrow{\varepsilon, 0} q_1$ (non-deterministically switch to state $q_1$), and
	    \item $q_1 \xrightarrow{b, -1} q_1$ (decrement counter on reading $b$).
	\end{itemize}
	Starting from $q_0$ with counter value 0, this VASS accepts exactly those words $a^n b^m$ where the final counter value is non-negative, i.e., $n - m \geq 0$.
	\qed\end{proof}

	\claimLstarnotcovervass*

	We are not aware of a direct proof in the literature for Cover-VASS. 
	We prove it using a proof strategy that is similar in nature to the proof of~{\cite[Lemma 3]{BosePT23}}.

	\begin{proof}[Proof of Claim~\ref{claim:Lstar-not-cover-vass}]
	Suppose, for sake of contradiction, that $L^*$ is accepted by some $d$-dimensional Cover-VASS $V$ with state set $Q$, and where all transitions increment or decrement counters by at most $M \in \mathbb{N}$.

	\paragraph{Step 1: Existence of Non-negative Cycles.}

	Consider words of the form:
	\[w_k = (a^{n_1} b^{n_1})(a^{n_2} b^{n_2})(a^{n_3} b^{n_3}),\dots,(a^{n_k} b^{n_k}) \quad \ldots\]
	where $n_i$ is a rapidly growing sequence (to be specified). Since $w_k$ is in $L^{*}$, there is an accepting run on each $w_k$, and thus accepted by $V$. 

	After processing each prefix, we reach some configuration. By Dickson's Lemma, $\mathbb{N}^d$ under component-wise ordering is a well-quasi-order, so there cannot be an infinite antichain of configurations that are all reachable from the initial configuration.

	In particular, consider a word of the form $(a^{n_1}b^{n_1}) \cdots (a^{n_j}b^{n_j}) a^{n_{j+1}}$ where $n_{j+1}$ is chosen sufficiently large. Since the state space is finite and $n_{j+1}$ can be arbitrarily large, during the processing of $a^{n_{j+1}}$, there must exist positions $x < y \leq n_{j+1}$ and a state $q$ such that:
	\begin{itemize}
	    \item after reading $(a^{n_1}b^{n_1}) \cdots (a^{n_j}b^{n_j}) a^{x}$, we reach configuration $\mathbf{x}$ at state $q$, 
	    \item after reading $(a^{n_1}b^{n_1}) \cdots (a^{n_j}b^{n_j}) a^{y}$, we reach configuration $\mathbf{y}$ at state $q$, and
	    \item $\mathbf{y} \geq \mathbf{x}$ (component-wise).
	\end{itemize}

	This is guaranteed by Dickson's Lemma: as we process the $a$'s, if we never revisit the same state with a configuration that is component-wise greater than or equal to a previous configuration at that state, we would generate an infinite antichain in $\mathbb{N}^d$ (this creates a contradiction). 
	The cycle from $(q, \mathbf{x})$ to $(q, \mathbf{y})$ reads only `$a$'s (specifically, $a^{\ell_{j+1}}$) and has a non-negative effect vector $\mathbf{y} - \mathbf{x} \geq \mathbf{0}$.

	\paragraph{Step 2: Characterizing Non-negative Cycles.}

	Henceforth, we work with a fixed choice of block sizes $n_1, n_2, \ldots, n_K$ where $K = |Q| \cdot 2^d + 1$, and each $n_i$ is chosen large enough to guarantee the existence of non-negative cycles as argued above.

	Consider an accepting run on $(a^{n_1}b^{n_1})(a^{n_2}b^{n_2}) \cdots (a^{n_K}b^{n_K})$. For each block $i \in \{1, \ldots, K\}$, within the $a^{n_i}$ portion, there exists at least one non-negative cycle. Specifically, there exist:
	\begin{itemize}
	    \item a state $q_i$,
	    \item configurations $\mathbf{x}_i$ and $\mathbf{y}_i$ both at state $q_i$ within the $a^{n_i}$ portion,
	    \item a path from $(q_i, \mathbf{x}_i)$ to $(q_i, \mathbf{y}_i)$ reading only `$a$'s, and 
	    \item $\mathbf{y}_i \geq \mathbf{x}_i$ (component-wise).
	\end{itemize}

	For each such cycle, we define
	$$S_i \coloneqq \mathrm{support}(\mathbf{y}_i - \mathbf{x}_i) = \{j \in \{1, \ldots, d\} : (\mathbf{y}_i)_j > (\mathbf{x}_i)_j\}.$$
	That is, $S_i$ is the set of counters that are strictly increased by the cycle. (All other counters have zero effect.)

	\paragraph{Step 3: Pigeonhole Argument.}

	The pair $(q_i, S_i)$ takes values in $Q \times 2^{\{1,\ldots,d\}}$, which has cardinality $|Q| \cdot 2^d$. Since we have $K = |Q| \cdot 2^d + 1$ blocks, by the pigeonhole principle, there exist distinct blocks $j < k$ such that:
	$$(q_j, S_j) = (q_k, S_k).$$
	Let $q = q_j = q_k$ and $S = S_j = S_k$; we have:
	\begin{itemize}
	    \item a cycle from $(q, \mathbf{x}_j)$ to $(q, \mathbf{y}_j)$ in block $j$ with $\mathrm{support}(\mathbf{y}_j - \mathbf{x}_j) = S$, and
	    \item a cycle from $(q, \mathbf{x}_k)$ to $(q, \mathbf{y}_k)$ in block $k$ with $\mathrm{support}(\mathbf{y}_k - \mathbf{x}_k) = S$.
	\end{itemize}

	\paragraph{Step 4: Cycle Rearrangement.}

	Since both cycles have the same support $S$, it is true that
	\begin{itemize}
	    \item for all $i \in S$, $(\mathbf{y}_j - \mathbf{x}_j)_i > 0$ and $(\mathbf{y}_k - \mathbf{x}_k)_i > 0$; and
	    \item for all $i \notin S$, $(\mathbf{y}_j - \mathbf{x}_j)_i = 0$ and $(\mathbf{y}_k - \mathbf{x}_k)_i = 0$.
	\end{itemize}

	Choose $m \in \mathbb{N}$ such that $m \cdot (\mathbf{y}_j - \mathbf{x}_j) \geq (\mathbf{y}_k - \mathbf{x}_k)$ component-wise. 
	Such an $m$ exists because, for every $i \in S$, set $m_i = \lceil (\mathbf{y}_k - \mathbf{x}_k)_i / (\mathbf{y}_j - \mathbf{x}_j)_i \rceil$, and take $m = \max_{i \in S} m_i$. For $i \notin S$, both sides are zero, so the inequality holds automatically.

	Now consider the following modification of the accepting run.
	\begin{description}
	    \item[Pump] the cycle in block $j$. 
	    Repeat the cycle from $(q, \mathbf{x}_j)$ to $(q, \mathbf{y}_j)$ a total of $m+1$ times instead of once. This increases the counter values by an \emph{additional} $m \cdot (\mathbf{y}_j - \mathbf{x}_j)$.
	    \item[Depump] the cycle in block $k$.
	    Remove one iteration of the cycle from $(q, \mathbf{x}_k)$ to $(q, \mathbf{y}_k)$. This decreases the counter values by $(\mathbf{y}_k - \mathbf{x}_k)$.
	\end{description}

	The net effect on the counter values is $m \cdot (\mathbf{y}_j - \mathbf{x}_j) - (\mathbf{y}_k - \mathbf{x}_k) \geq \mathbf{0}$.

	Therefore, the modified run remains non-negative throughout (since we only added non-negative vectors to the configuration, and only earlier in the overall run). Hence, any suffix in the original word can still be applied and we still reach a non-negative accepting state.

	\paragraph{Step 5: Deriving the Contradiction.}

	Let $\ell_j$ be the number of `$a$'s read by one iteration of the cycle in block $j$ and $\ell_k > 0$ be the number of `$a$'s read by one iteration of the cycle in block $k$. The modified run accepts a word of the following form.
	$$\cdots (a^{n_j + m\ell_j} b^{n_j}) \cdots (a^{n_k - \ell_k} b^{n_k}) \cdots$$
	This word contains the block $a^{n_k - \ell} b^{n_k}$ where the number of `$b$'s exceeds the number of `$a$'s. 
	Therefore, this word is not in $L^*$. 
	However, by Step 4 we have an accepting run.
	This contradicts the assumption that $V$ accepts exactly $L^*$.
	\end{proof}

	We have shown that $L = \{a^n b^m : 0 \leq m \leq n\}$ is a Cover-VASS language, but $L^*$ is not. Therefore, the class of Cover-VASS languages is not closed under Kleene star.

	\paragraph{Step 6: Not closed under delimited Kleene star.}
	It is clear, adding a fresh `$\$$' symbol between each $a^nb^n$ block does not affect the argument. More generally, if $L^{\$*}=\$  \cup \$ L\$ \cup \$L\$L\$ \cup \cdots$ were recognised by a Cover-VASS $V$, then by changing $\$$ symbols to $\varepsilon$-transitions, the Kleene star $L^*$ would be recognised, contradicting the previous argument. Hence, the class of Cover-VASS is not closed under delimited Kleene star.\qed
\end{proof}

\end{document}